\newtheorem{theorem}{Theorem}[section]
\newtheorem{lemma}[theorem]{Lemma}
\theoremstyle{definition}
\newtheorem{definition}{Definition}[section]
\providecommand{\mm}[1]{\langle #1 \rangle}
\DeclareMathOperator*{\EE}{\mathbb{E}}
\providecommand{\bR}{\underline{R}}
\providecommand{\field}{\mathbb{F}}
\providecommand{\tcw}{\mathbb{T}}
\providecommand{\tcwq}[1][q]{\tcw(#1)}
\DeclareMathOperator{\Vol}{Vol}
\providecommand{\Vg}{V} 
\providecommand{\Vcw}{V^{\mathrm{pr}}} 
\providecommand{\Vm}{V^{\mathrm{m}}} 
\providecommand{\Vmub}{V^{\mathrm{m\ast}}} 
\providecommand{\Vc}{V^{\mathrm{cm}}}
\providecommand{\omegacw}{\omega^{\mathrm{pr}}}
\providecommand{\omegam}{\omega^{\mathrm{m}}}
\providecommand{\omegamub}{\omega^{\mathrm{m}\ast}}
\DeclareMathOperator{\Val}{Val} 
\providecommand{\Valo}{\Val'} 
\providecommand{\Int}{\mathbb{Z}}
\providecommand{\pr}{partition-restricted }
\providecommand{\classtensor}{\mathcal{C}}
\providecommand{\dist}{\mathcal{D}}
\providecommand{\sdist}{\mathcal{D}^{\mathrm{sym}}}
\DeclareMathOperator{\supp}{supp}
\providecommand{\suppz}{\supp^0}
\providecommand{\suppn}{\supp^\ast}
\providecommand{\tz}[1]{#1^0}
\providecommand{\tn}[1]{#1^\ast}
\DeclareMathOperator{\Tr}{Tr}
\providecommand{\rot}[1]{#1^{\mathsf{C}}}
\providecommand{\rott}[1]{#1^{\mathsf{C}^2}}
\providecommand{\prnd}[2]{{#1\odot #2}}
\providecommand{\Pm}{P_{\mathrm{m}}}
\newif\ifsketch 
\providecommand{\highlight}[1]{{\color{blue}#1}}
\providecommand{\Proofsketch}{Proof \highlight{sketch}}
\providecommand{\todo}[1]{{\color{red} #1}}
\providecommand{\Proofsketch}{Proof}
\providecommand{\todo}[1]{}
\newif\ifextendedabstract
\providecommand{\qS}{\sigma}
\providecommand{\qT}{\pi}
\title{Fast Matrix Multiplication: \\Limitations of the Laser Method \\ (Extended Abstract)}
\title{Fast Matrix Multiplication: \\Limitations of the Laser Method}
\author{
Andris Ambainis\\
University of Latvia\footnote{Research done while visiting the Institute for Advanced Study.} \\ Riga, Latvia \\
\url{andris.ambainis@lu.lv}
\and
Yuval Filmus\\
Institute for Advanced Study \\ Princeton, New Jersey, USA \\
\url{yfilmus@ias.edu}
\and Fran{\c c}ois Le Gall\\
The University of Tokyo\\ Tokyo, Japan \\
\url{legall@is.s.u-tokyo.ac.jp}}
\begin{document}

\maketitle
\thispagestyle{empty}
\setcounter{page}{0}

\begin{abstract}
 Until a few years ago, the fastest known matrix multiplication algorithm, due to Coppersmith and Winograd (1990), ran in time $O(n^{2.3755})$. Recently, a surge of activity by Stothers, Vassilevska-Williams, and Le~Gall has led to an improved algorithm running in time $O(n^{2.3729})$. These algorithms are obtained by analyzing higher and higher tensor powers of a certain identity of Coppersmith and Winograd. We show that this exact approach cannot result in an algorithm with running time $O(n^{2.3725})$, and identify a wide class of variants of this approach which cannot result in an algorithm with running time $O(n^{2.3078})$; in particular, this approach cannot prove the conjecture that for every $\epsilon > 0$, two $n\times n$ matrices can be multiplied in time $O(n^{2+\epsilon})$.

 We describe a new framework extending the original laser method, which is the method underlying the previously mentioned algorithms. Our framework accommodates the algorithms by Coppersmith and Winograd, Stothers, Vassilevska-Williams and Le~Gall. We obtain our main result by analyzing this framework. The framework is also the first to explain why taking tensor powers of the Coppersmith--Winograd identity results in faster algorithms.
\end{abstract}

\ifextendedabstract
\thispagestyle{empty}
\clearpage
\pagenumbering{arabic}
\fi
\newpage

\section{Introduction}
How fast can we multiply two $n \times n$ matrices? Ever since Strassen~\cite{Strassen69} improved on the $O(n^3)$ high-school algorithm, this question has captured the imagination of computer scientists. A theory of fast algorithms for matrix multiplication has been developed. Highlights include Sch\"onhage's asymptotic sum inequality~\cite{Schonhage}, Strassen's laser method~\cite{Strassen87}, and the Coppersmith--Winograd algorithm~\cite{CoppersmithWinograd}. The algorithm by Coppersmith and Winograd had been the world champion for 20 years, until finally being improved by Stothers~\cite{Stothers} in 2010. Independently, Vassilevska-Williams~\cite{Williams} obtained a further improvement in 2012, and Le~Gall~\cite{LeGall} perfected their methods to obtain the current world champion in 2014.

The Coppersmith--Winograd algorithm relies on a certain identity which we call the \emph{Coppersmith--Winograd identity}. Using a very clever combinatorial construction and the laser method, Coppersmith and Winograd were able to extract a fast matrix multiplication algorithm whose running time is $O(n^{2.3872})$. Applying their technique recursively for the tensor square of their identity, they obtained an even faster matrix multiplication algorithm with running time $O(n^{2.3755})$. For a long time, this latter algorithm had been the state of the art.

The calculations for higher tensor powers are complicated, and yield no improvement for the tensor cube. With the advent of modern computers, however, it became possible to automate the necessary calculations, allowing Stothers to analyze the fourth tensor power and obtain an algorithm with running time $O(n^{2.3730})$. Apart from implementing the necessary computer programs, Stothers also had to generalize the original framework of Coppersmith and Winograd. Independently, Vassilevska-Williams performed the necessary calculations for the fourth and eighth tensor powers, obtaining an algorithm with running time $O(n^{2.3728642})$ for the latter. Higher tensor powers require more extensive calculations, involving the approximate solution of large optimization problems. Le~Gall came up with a faster method for solving these large optimization problems (albeit yielding slightly worse solutions), and this enabled him to perform the necessary calculations for the sixteenth and thirty-second tensor powers, obtaining algorithms with running times $O(n^{2.3728640})$ and $O(n^{2.3728639})$, respectively.

It is commonly conjectured that for every $\epsilon > 0$, there exists a matrix multiplication algorithm with running time $O(n^{2+\epsilon})$. Can taking higher and higher tensor powers of the Coppersmith--Winograd identity yield these algorithms? In this paper we answer this question in the negative. We show that taking the $2^N$th tensor power cannot yield an algorithm with running time $O(n^{2.3725})$, for \emph{any} value of $N$.
We obtain this lower bound by presenting a framework which subsumes the techniques of Coppersmith and Winograd, Stothers, Vassilevska-Williams, and Le~Gall, and is amenable to analysis. At the same time, our framework is the first to explain what is gained by taking tensor powers of the original Coppersmith--Winograd identity.

All prior work follows a very rigid framework in analyzing powers of the Coppersmith--Winograd identities. However, Coppersmith and Winograd themselves already noted that there are many degrees of freedom which are not explored by this rigid framework. One such degree of freedom is analyzing powers of the identity other than powers of $2$, and another one has to do with the exact way that the tensor square of an identity is analyzed. Our new framework subsumes not only the common rigid framework used by all prior work, but also accommodated these degrees of freedom. We are able to prove that even accounting for these degrees of freedom, taking the $N$th tensor power of the Coppersmith--Winograd identity cannot yield an algorithm with running time $O(n^{2.3078})$, for \emph{any} value of $N$. This limitation holds even for our new framework, which in some sense corresponds to analyzing all powers at once.

\medskip
\paragraph{Overview of our approach}
The Coppersmith--Winograd identity bounds the \emph{border rank} (a certain measure of complexity) of a certain \emph{tensor} (three-dimensional analog of a matrix) $\tcw$. The tensor is a sum of six non-disjoint smaller tensors. Sch\"onhage's asymptotic sum inequality allows us to obtain a matrix multiplication algorithm given a bound on the border rank of a sum of \emph{disjoint} tensors of a special kind, which includes the tensors appearing in~$\tcw$. The idea of the \emph{laser method} is to take a high tensor power of~$\tcw$ and zero out some of the variables so that the surviving smaller tensors are disjoint. Applying Sch\"onhage's asymptotic sum inequality then yields a matrix multiplication algorithm. Following this route, an algorithm with running time $O(n^{2.3872})$ is obtained.

In order to improve on this, Coppersmith and Winograd take the tensor square of $\tcw$, and rewrite it as a sum of fifteen non-disjoint smaller tensors, which result from merging in a particular way the thirty-six tensors obtained from the squaring (this ``particular way'' is one of the degrees of freedom mentioned above). At this point the earlier construction is repeated (i.e., the laser method is applied on $\tcw^{\otimes 2}$). In total, the new construction is equivalent to the following procedure: start with the original tensor $\tcw$, take a high tensor power of it, zero out some of the variables, and merge groups of remaining tensors so that the resulting merged tensors are disjoint and are of the kind that allows application of the asymptotic sum inequality. The further constructions of Stothers (on the 4th power of $\tcw$), Vassilevska-Williams (on the 8th power of $\tcw$), and Le~Gall (on the 16th and 32nd powers of $\tcw$) can all be put in this framework.

Numerical calculations show that the bound on $\omega$ obtained by considering $\tcw^{\otimes 2^\ell}$ improves as $\ell$ increases, but the root cause of this phenomenon has never been completely explained (and never quantitatively studied). Indeed, at first glance it seems that considering powers of $\tcw$ should not help at all, since the analysis of $\tcw$ proceeds by analyzing powers $\tcw^{\otimes N}$ for large $N$; how do we gain anything by analyzing instead large powers of, for instance, $\tcw^{\otimes 2}$? The improvement actually results from the fact that when defining $\tcw^{\otimes 2}$ it is possible to merge together several parts of the tensor. 
Inspired by this observation, we introduce a method to analyze tensors that we call the \emph{laser method with merging}. The crucial property is that the methods used in prior works \cite{CoppersmithWinograd,DavieStothers,LeGall,Stothers,Williams}, even accounting for the degrees of freedom mentioned above, can all be put in this framework, and thus showing limitations of the laser method with merging immediately shows the limitations of all these approaches.

The first main technical contribution of this paper is a general methodology to show, quantitatively, the limitations of the laser method with merging (we stress that these techniques are currently only tailored to proving such limitations: we do not know how to systematically and efficiently convert constructions discovered through the laser method with merging into algorithms for matrix multiplication).
A summary of our results appears in Table~\ref{tab:simulations-all} on page~\pageref{tab:simulations-all}.
The Coppersmith--Winograd identity is parameterized by an integer parameter $q \geq 1$; our method applies for all these values.
 For $q,r \geq 0$, let $\omega \leq \omegam_{q,r}$ and $\omega \leq \omegacw_{q,r}$ be the bounds on $\omega$ obtained by applying the laser method with merging and applying recursively the laser method as in prior works, respectively, to the $2^r$th tensor power of $\tcw$ with the given value of~$q$. For each $q,r$, the table gives $\omegacw_{q,r}$ and a lower bound $\omegamub_{q,r}$ on $\omegam_{q,r}$.
The bound $O(n^{2.3078})$ already mentioned corresponds to the analysis of~$\tcw$ with the choice $q = 5$, which is the value used by Stothers, Vassilevska-Williams, and Le~Gall (Coppersmith and Winograd used the value $q=6$, for which our lower bound is even better). The bound $O(n^{2.3725})$ corresponds to the analysis of $\tcw^{\otimes 16}$ with the choice $q = 5$.

The second main technical contribution of this paper is to show that the laser method with merging applied on a tensor subsumes the laser method applied on any power of it. When applied to the tensor $\tcw$, this result implies that $\omegam_{q,r} \leq \omegacw_{q,s}$ for all $s \geq r$, and so $\omegacw_{q,s} \geq \omegamub_{q,r}$ for all $s \geq r$.
Combined with our results, this implies in particular that $\omegacw_{q,s}>2.3725$ for any $s\geq 4$.
Since previous works~\cite{CoppersmithWinograd,Stothers,DavieStothers,Williams,LeGall} showed that $\omegacw_{q,s}>2.3728$ for $0\leq s\leq 4$, we conclude that
analyzing \textrm{any} power of $\tcw$ recursively as done in the prior works cannot result in an algorithm with running time $O(n^{2.3725})$.

\begin{table}
\renewcommand\arraystretch{1}
\begin{center}
\caption{Upper bounds on $\omega$ obtained by analyzing $\tcw^{\otimes 2^r}$ using the laser method (L.M.), i.e., the value $\omegacw_{q,r}$, and limits on the upper bounds on $\omega$ which can possibly be obtained by analyzing $\tcw^{\otimes 2^r}$ using the laser method with merging (L.M.M.), i.e., the value $\omegamub_{q,r}$, for several values of $r$ and $q$. Note that the recursive laser method improves as we take higher and higher powers, and so the L.M.\ rows in the table are decreasing. In contrast, the laser method with merging deteriorates, since the laser method with merging applied to some power of $\tcw$ subsumes the method applied to higher powers of $\tcw$, and so the L.M.M.\ rows are increasing.
}\label{tab:simulations-all}\vspace{3mm}
\begin{tabular}{c|c|c|c|c|c|c|c|}
&Method&$r=0$&$r=1$&$r=2$&$r=3$&$r=4$\\
\hline
\multirow{2}{*}{$q=1$}&L.M.&$3$&$2.8084$&$2.6520$&$2.6324$&$2.6312$\\
&L.M.M.&$2.2387$&$2.3075$&$2.4587$&$2.5772$&$2.6184$\\\hline
\multirow{2}{*}{$q=2$}&L.M.&$2.6986$&$2.4968$&$2.4707$&$2.4690$&$2.4689$\\
&L.M.M.&$2.2540$&$2.3181$&$2.4187$&$2.4623$&$2.4673$\\\hline
\multirow{2}{*}{$q=3$}&L.M.&$2.4740$&$2.4116$&$2.4030$&$2.4027$&$2.4027$\\
&L.M.M.&$2.2725$&$2.3203$&$2.3834$&$2.4015$&$2.4025$\\\hline
\multirow{2}{*}{$q=4$}&L.M.&$2.4142$&$2.3838$&$2.3796$&$2.3794$&$2.3794$\\
&L.M.M.&$2.2907$&$2.3262$&$2.3690$&$2.3788$&$2.3791$\\\hline
\multirow{2}{*}{$q=5$}&L.M.&$2.3935$&$2.3756$&$2.3730$&$2.3729$&$2.3729$\\
&L.M.M.&$2.3078$&$2.3349$&$2.3659$&$2.3723$&$2.3725$\\\hline
\multirow{2}{*}{$q=6$}&L.M.&$2.3872$&$2.3755$&$2.3737$&$2.3737$&$2.3737$\\
&L.M.M.&$2.3234$&$2.3448$&$2.3682$&$2.3731$&$2.3733$\\\hline
\multirow{2}{*}{$q=7$}&L.M.&$2.3875$&$2.3793$&$2.3780$&$2.3779$&$2.3779$\\
&L.M.M.&$2.3377$&$2.3550$&$2.3733$&$2.3775$&$2.3776$\\\hline
\multirow{2}{*}{$q=8$}&L.M.&$2.3909$&$2.3848$&$2.3838$&$2.3838$&$2.3838$\\
&L.M.M.&$2.3508$&$2.3651$&$2.3798$&$2.3833$&$2.3834$\\\hline
\end{tabular}
\end{center}
\end{table}

\medskip

Finally, we mention that our methodology to show the limitations of the laser method with merging is related to a proof technique that appeared in a completely different approach to fast matrix multiplication developed by Cohn, Kleinberg, Szegedy and Umans \cite{CKSU,CohnUmans03,CohnUmans13}. More precisely, the combinatorial objects used in a simpler construction by Coppersmith and Winograd (given in~\cite{CoppersmithWinograd}, and corresponding to an algorithm with complexity $O(n^{2.404})$) have been studied in \cite{CKSU} under the name ``uniquely solvable puzzles", showing that this part of their construction is optimal. This argument actually implies that, in the framework of the laser method, their whole construction is optimal. We are able to give analogous bounds for the laser method with merging using similar but significantly more complicated ideas.
\ifextendedabstract

\paragraph{Paper organization} The rest of this extended abstract consists of a longer account of our results and techniques. We start by describing the state of the art in matrix multiplication algorithms, including the work of Coppersmith and Winograd~\cite{CoppersmithWinograd}, Stothers~\cite{Stothers,DavieStothers}, Vassilevska-Williams~\cite{Williams}, and Le~Gall~\cite{LeGall}. We then describe our new method, the laser method with merging. Finally, we explain the limits of the laser method with merging, and their implications on the original laser method.

The full version of the paper contains a formal treatment of the theory of fast matrix multiplication, up to and including the recent work of Stothers, Vassilevska-Williams, and Le~Gall; a formal description of the laser method with merging; a proof that the laser method with merging subsumes all earlier approaches; a proof of the limits of this method; and generalizations.

\paragraph{Acknowledgements}
We thank Edinah Gnang and Avi Wigderson for helpful discussions.
\else
\paragraph{Paper organization} Section~\ref{sec:summary} contains a longer account of our results and techniques. The main body of the paper begins with Section~\ref{sec:background}, which describes the theory of fast matrix multiplication up to and including the recent work of Stothers, Vassilevska-Williams, and Le~Gall. The laser method with merging is described in Section~\ref{sec:merging}, in which we also explain how the algorithms of Coppersmith and Winograd, Stothers, Vassilevska-Williams, and Le~Gall fit in this framework. Our main result, giving limitations of this method, appears in Section~\ref{sec:upper-bound}. The most general form of our framework and lower bound is described in Section~\ref{sec:generalizations}. We close the paper in Section~\ref{sec:discussion} by discussing our results and their implications.

\paragraph{Acknowledgements}
This material is based upon work supported by the National Science Foundation under agreement No.~DMS-1128155. Any opinions, findings and conclusions or recommendations expressed in this material are those of the authors, and do not necessarily reflect the views of the National Science Foundation.

We thank Edinah Gnang and Avi Wigderson for helpful discussions.
\fi

\section{Summary of results and techniques} \label{sec:summary}
Having briefly described our results in the introduction, we proceed to explain them in more detail. We first describe in general lines the theory of fast matrix multiplication algorithms up to the present, in Sections~\ref{sec:summary-asi}--\ref{sec:summary-rec}. We describe our new method, the laser method with merging, in Section~\ref{sec:summary-merging}, and our results in Section~\ref{sec:summary-results}.

The computation model we consider is the standard algebraic complexity model, in which a program is a list of arithmetic instructions of arity $2$. A program for multiplying two matrices has variables initialized to the entries of the input matrices, and variables designated as outputs. At the end of the program, the output variables should contain the entries of the product of the two input matrices. The complexity of the program is the number of arithmetic instructions. The \emph{matrix multiplication constant} $\omega$ is the minimal number such that for any $\epsilon > 0$, two $n\times n$ matrices can be multiplied in complexity $O(n^{\omega+\epsilon})$. Although it is conjectured that $\omega = 2$, it is not expected that the complexity is $O(n^2)$ (see for example \cite{Raz}), and this is the reason the $\epsilon$ is included.
\subsection{Fast matrix multiplication} \label{sec:summary-asi}
The first fast matrix multiplication algorithm was developed by Strassen~\cite{Strassen69}, who showed how to multiply two $2\times 2$ matrices using only $7$ scalar multiplications, implying the bound $\omega \leq \log_2 7$. He showed how to express his algorithm succinctly using the language of \emph{tensors}, a three-dimensional analog of matrices:
\begin{align*}
 \sum_{i,j,k=1}^2 x_{ij} y_{jk} z_{ki} =&
(x_{11} + x_{22})(y_{11} + y_{22})(z_{11} + z_{22})+(x_{21} + x_{22})y_{11}(z_{21} - z_{22})\\ \vspace{-3mm}
&\!\!+ x_{11} (y_{12} - y_{22})(z_{12} \!+\! z_{22}) + x_{22} (y_{21} - y_{11})(z_{11} \!+\! z_{21})+(x_{11} \!+\! x_{12}) y_{22}(-z_{11} \!+\! z_{12})\\
&\!\!+(x_{21} - x_{11}) (y_{11} + y_{12}) z_{22}+(x_{12} - x_{22}) (y_{21} + y_{22}) z_{11}.
\end{align*}
We can think of this expression as a formal trilinear form in the formal variables $x_{ij},y_{jk},z_{ki}$.
On the left we find the matrix multiplication tensor $\mm{2,2,2} = \sum_{i,j,k=1}^2 x_{ij} y_{jk} z_{ki}$ which represents the product of two $2\times 2$ matrices (indeed,
by replacing the $x$-variables by the entries of the first matrix and the $y$-variables by the entries of the second matrix, the coefficient of $z_{ki}$ in the above expression represents the entry in the $i$th row and the
$k$th column of the matrix product of these two matrices).
More generally, the $n\times m\times p$ matrix multiplication tensor $\mm{n,m,p}$ is defined as
\[
 \mm{n,m,p} = \sum_{i=1}^n \sum_{j=1}^m \sum_{k=1}^p x_{ij} y_{jk} z_{ki}.
\]
On the right of Strassen's identity we have seven \emph{rank one tensors}, which are tensors of the form $(\sum_{i'} \alpha_{i'} x_{i'})\allowbreak (\sum_{j'} \beta_{j'} y_{j'})\allowbreak (\sum_{k'} \beta_{k'} z_{k'})$; in our case $i',j'$ and $k'$ each ranges over $\{1,2\}\times\{1,2\}$. We can express the existence of such a decomposition by saying that the \emph{rank} of $\mm{2,2,2}$, denoted $R(\mm{2,2,2})$, is at most $7$ (in fact, it is exactly $7$). More generally, the rank of a tensor $T$ is the smallest $r$ such that $T$ can be written as a sum of $r$ rank one tensors. Another important concept is the \emph{border rank} of a tensor $T$, denoted $\bR(T)$, which is the smallest $r$ such that there is a sequence of tensors of rank at most $r$ converging to $T$. In the case of $\mm{2,2,2}$ the border rank and the rank are equal, but there exist tensors whose border rank is strictly smaller than their rank.

The statement $R(\mm{n,n,n})=n^{\alpha}$ is equivalent to the existence of a basis algorithm that multiplies $n\times n$ matrices with $n^{\alpha}$ multiplications.
Given such a basis algorithm, we can iterate it, obtaining an algorithm for multiplying $n^k\times n^k$ matrices with $n^{\alpha k}=(n^k)^{\alpha}$
multiplications. Thus, the matrix multiplication exponent must satisfy $\omega\leq \alpha$. We can express this argument by the inequality
$n^\omega \leq R(\mm{n,n,n})$. A vast generalization of this idea is Sch\"onhage's \emph{asymptotic sum inequality}~\cite{Schonhage}:
\[
 \sum_{i=1}^L \Vol(\mm{n_i,m_i,p_i})^{\omega/3} \leq \bR\left(\bigoplus_{i=1}^L \mm{n_i,m_i,p_i}\right).
\]
This formula uses two pieces of notation we have to explain. First, the \emph{volume} of a matrix multiplication tensor $\mm{n,m,p}$ is $\Vol(\mm{n,m,p}) = nmp$. Second, $\bigoplus_{i=1}^L \mm{n_i,m_i,p_i}$ is the \emph{direct sum} of the tensors $\mm{n_i,m_i,p_i}$. This direct sum is obtained by writing the tensors $\mm{n_i,m_i,p_i}$ using disjoint formal variables, and taking the (regular) sum. For example, $\mm{1,1,1} \oplus \mm{1,1,1} = x_1y_1z_1 + x_2y_2z_2$. How do we choose the formal variables? This does not really matter, and we identify two tensors that differ only in the names of the formal variables (we only allow renaming the $x$-variables separately, the $y$-variables separately, and the $z$-variables separately); we call two such tensors \emph{equivalent}.

As an example, Sch\"onhage showed that $\bR(\mm{4,1,4} \oplus \mm{1,9,1}) \leq 17$ (this is actually an example where the rank is strictly larger than the border rank). Applying the asymptotic sum inequality, we deduce that $16^{\omega/3} + 9^{\omega/3} \leq 17$. The left-hand side is an increasing function of $\omega$, so this inequality is equivalent to $\omega \leq \rho$ for the unique solution of $16^{\rho/3} + 9^{\rho/3} = 17$. Solving numerically for $\rho$, we obtain the bound $\omega  < 2.55$.

\subsection{Laser method} \label{sec:summary-cw}
Strassen's laser method~\cite{Strassen87} is a generalization of the asymptotic sum inequality for analyzing the sum of non-disjoint tensors.
It has been used by Coppersmith and Winograd~\cite{CoppersmithWinograd} in a particularly efficacious way.
Specifically, for an integer parameter $q$, Coppersmith and Winograd consider the following tensor:
\begin{align*}
 \tcw &= \sum_{i=1}^q \left(x_0^{[0]} y_i^{[1]} z_i^{[1]} + x_i^{[1]} y_0^{[0]} z_i^{[1]} + x_i^{[1]} y_i^{[1]} z_0^{[0]}\right) +
x_0^{[0]} y_0^{[0]} z_{q+1}^{[2]} + x_0^{[0]} y_{q+1}^{[2]} z_0^{[0]} + x_{q+1}^{[2]} y_0^{[0]} z_0^{[0]} \\ &=
 \mm{1,1,q}^{[0,1,1]} + \mm{q,1,1}^{[1,0,1]} + \mm{1,q,1}^{[1,1,0]} +
 \mm{1,1,1}^{[0,0,2]} + \mm{1,1,1}^{[0,2,0]} + \mm{1,1,1}^{[2,0,0]}.
\end{align*}

The first expression is a tensor over the $x$-variables $X = \{x_0^{[0]},x_1^{[1]},\ldots,x_q^{[1]},x_{q+1}^{[2]}\}$ and similar $y$-variables and $z$-variables. The second expression is a succinct way of describing $\tcw$, namely as a \emph{partitioned tensor}. We partition the $x$-variables into three groups: $X_0 = \{x_0^{[0]}\}$, $X_1 = \{x_1^{[1]},\ldots,x_q^{[1]}\}$, $X_2=\{x_{q+1}^{[2]}\}$, and partition the $y$-variables and $z$ variables similarly. Each of the six \emph{constituent tensors} appearing on the second line depends on a single group of $x$-variables, a single group of $y$-variables, and a single group of $z$-variables, as described by their \emph{annotations} appearing as superscripts. The constituent tensors themselves are all equivalent to matrix multiplication tensors. The notations $\mm{1,1,q}^{[0,1,1]},\mm{q,1,1}^{[1,0,1]}$ tell us that the two constituent tensors share $z$-variables but have disjoint $x$-variables and $y$-variables.

Coppersmith and Winograd showed that $\bR(\tcw) \leq q+2$ by giving an explicit sequence of tensors of rank at most $q+2$ converging to $\tcw$. This is known as the \emph{Coppersmith--Winograd identity}.

The idea of the laser method is to take a high \emph{tensor power} of $\tcw$, zero some groups of variables in order to obtain a sum of disjoint tensors, and then apply the asymptotic sum inequality. But first, we need to explain the concept of \emph{tensor product}, whose iteration gives the tensor power.
Suppose
\[
T = \sum_{i \in X} \sum_{j \in Y} \sum_{k \in Z} T_{i,j,k} x_i y_j z_k \:\:\:\:\textrm{ and }\:\:\:\:
T' = \sum_{i \in X'} \sum_{j \in Y'} \sum_{k \in Z'} T'_{i,j,k} x_i y_j z_k
\]
are two tensors.
Their tensor product $T\otimes T'$ is a tensor with $x,y,z$-variables $X\times X', Y\times Y', Z\times Z'$ given by
\[
T\otimes T' = \sum_{(i,i') \in X \times X'} \sum_{(j,j') \in Y \times Y'} \sum_{(k,k') \in Z \times Z'} T_{i,j,k} T'_{i',j',k'} x_{i,i'} y_{j,j'} z_{k,k'}.
\]
This operation corresponds to the Kronecker product of matrices. It is not hard to check that $R(T\otimes T') \leq R(T)R(T')$ and $\bR(T\otimes T') \leq \bR(T)\bR(T')$. Also, the tensor product of matrix multiplication tensors $T,T'$ is another matrix multiplication tensor satisfying $\Vol(T\otimes T') = \Vol(T)\Vol(T')$; more explicitly, $\mm{n,m,p} \otimes \mm{n',m',p'} = \mm{nn',mm',pp'}$.

When we take a high tensor power $\tcw^{\otimes N}$, we get a partitioned tensor over $X^N,Y^N,Z^N$ having~$6^N$ constituent tensors. The $x$-variables $X^N$ are partitioned into $3^N$ parts indexed by $\{0,1,2\}^N$ which we call $x$-indices; similarly we have $y$-indices and $z$-indices. Each constituent tensor of $\tcw^{\otimes N}$ has an associated \emph{index triple} which consists of its $x$-index, $y$-index and $z$-index. The constituent tensor with index triple $(I,J,K)$ is denoted $T^{\otimes N}_{I,J,K}$. The \emph{support} of $\tcw^{\otimes N}$, denoted $\supp(\tcw^{\otimes N})$, consists of the $6^N$ index triples.

Suppose we zero all $x$-variables except for those with $x$-indices in a set $A\subseteq\{0,1,2\}^N$, all $y$-variables except for those with $y$-indices in a set $B\subseteq\{0,1,2\}^N$, and all $z$-variables except for those with $z$-indices in a set $C\subseteq\{0,1,2\}^N$. The resulting tensor is
\[ \sum_{(I,J,K) \in \supp(\tcw^{\otimes N}) \cap (A\times B\times C)} T^{\otimes N}_{I,J,K}. \]
Suppose that all the summands are over disjoint variables, that is for any two different summands $T^{\otimes N}_{I_1,J_1,K_1},T^{\otimes N}_{I_2,J_2,K_2}$ we have $I_1\neq I_2,J_1\neq J_2,K_1\neq K_2$. In this case, since $\bR(\tcw^{\otimes N}) \leq (q+2)^N$, we can apply the asymptotic sum inequality to conclude that
\[ \sum_{(I,J,K) \in \supp(\tcw^{\otimes N}) \cap (A\times B\times C)} \Vol(T^{\otimes N}_{I,J,K})^{\omega/3} \leq (q+2)^N. \]

In order to analyze the construction, Coppersmith and Winograd implicitly consider the quantity $\Vcw_{\rho,N}(\tcw)$ which is the maximum of the expression $\sum_{(I,J,K) \in \supp(T^{\otimes N}) \cap (A\times B\times C)} \Vol(T^{\otimes N}_{I,J,K})^{\rho/3}$ over all $A,B,C$ which result in disjoint summands. The asymptotic sum inequality then states that $\Vcw_{\omega,N}(\tcw) \leq (q+2)^N$. It is natural to define the limit $\Vcw_\rho(\tcw) = \lim_{N\to\infty} \Vcw_{\rho,N}(\tcw)^{1/N}$ (it turns out that the limit exists), and then the asymptotic sum inequality states that $\Vcw_\omega(\tcw) \leq q+2$.

Coppersmith and Winograd were able to compute $\Vcw_\rho(\tcw)$ explicitly:
\begin{equation}\label{eq:CW}
\log_2 \Vcw_\rho(\tcw) = \max_{0 \leq \alpha \leq 1} H(\tfrac{2-\alpha}{3}, \tfrac{2\alpha}{3}, \tfrac{1-\alpha}{3}) + \tfrac{1}{3} \rho \alpha \log_2 q,
\end{equation}
where $H(\cdot)$ is the entropy function.
In fact, it is not hard to find the optimal $\alpha$ given $q$ and $\rho$.
Choosing $q = 6$, Coppersmith and Winograd calculate the value of $\rho$ which satisfies $\Vcw_\rho(\tcw) = q+2$ and deduce that $\omega \leq \rho$, obtaining the bound $\omega < 2.3872$.

Coppersmith and Winograd in fact only proved the lower bound on $\log_2 \Vcw_\rho(\tcw)$. The easier upper bound on $\log_2 \Vcw_\rho(\tcw)$ appears implicitly in the work of Cohn~et~al.~\cite{CKSU}. We sketch the proof of the upper bound since it illustrates the ideas behind our main result; this proof sketch (comprising the rest of this subsection) can be skipped on first reading.

The idea of the upper bound is simple. Given $N$, we will upper bound $\Vcw_{\rho,N}(\tcw)$ as follows. Consider any $A,B,C$ for which $\supp(\tcw^{\otimes N}) \cap (A\times B\times C)$ corresponds to disjoint tensors. For any $(I,J,K) \in \supp(\tcw^{\otimes N}) \cap (A\times B\times C)$, its \emph{source distribution} is the number of times each of the basic six tensors was used to generate $(I,J,K)$; this is a list of six non-negative integers summing to $N$. A key observation is that there are only $O(N^5)$ distinct source distributions. We upper bound the contribution of any given source distribution $\sigma$ to the sum
\[
\sum_{(I,J,K) \in \supp(\tcw^{\otimes N}) \cap (A\times B\times C)} \Vol(T^{\otimes N}_{I,J,K})^{\rho/3}
\]
as follows. First, $\Vol(T^{\otimes N}_{I,J,K})^{\rho/3}$ depends only on $\sigma$: $\log_2 \Vol(T^{\otimes N}_{I,J,K})^{\rho/3} = (\rho/3) N \EE_{r \sim \frac{\sigma}{N}} \log_2 \Vol(T_r)$, where the $\tcw_r$ are the six constituent tensors of $\tcw$. Second, since all $x$-indices appearing in the sum are distinct, the number of summands is at most the number of distinct $x$-indices which appear in index triples of type $\sigma$. There are at most $2^{NH(\sigma_1/N)}$ of these, where $\sigma_1$ is the projection of $\sigma$ on the first index. Considering also $y$-indices and $z$-indices, we obtain the bound
\begin{align*}
 &\log_2 \left(\sum_{(I,J,K) \in \supp(\tcw^{\otimes N}) \cap (A\times B\times C)} \Vol(T^{\otimes N}_{I,J,K})^{\rho/3}\right) \\
&\leq \log_2 \left(\sum_\sigma 2^{N\max(H(\sigma_1/N),H(\sigma_2/N),H(\sigma_3/N)) + \frac{\rho}{3} N \EE_{r \sim \frac{\sigma}{N}} \log_2 \Vol(\tcw_r)}\right) \\
& \leq\log_2 \left(O(N^5)\right) + \max_\sigma \left[N\max(H(\sigma_1/N),H(\sigma_2/N),H(\sigma_3/N)) + \frac{\rho}{3} N \EE_{r \sim \sigma/N} \log_2 \Vol(\tcw_r)\right].
\end{align*}
This is an upper bound on $\log_2 \Vcw_{\rho,N}(\tcw)$. Taking the limit $N\to\infty$, we obtain
\[
 \log_2 \Vcw_\rho(\tcw) \leq \max_\sigma \left[\max(H(\sigma_1),H(\sigma_2),H(\sigma_3)) + \frac{\rho}{3}  \EE_{r \sim \sigma} \log_2 \Vol(\tcw_r)\right],
\]
where this time $\sigma$ ranges over all probability distributions over $\supp(\tcw)$. This upper bound can be massaged to obtain the right-hand side of Equation (\ref{eq:CW}).

\subsection{Recursive laser method} \label{sec:summary-rec}
Coppersmith and Winograd went on to prove an even better bound by considering the square of their original identity, which shows that $\bR(\tcw^{\otimes2}) \leq (q+2)^2$. They consider $\tcw^{\otimes2}$ as a partitioned tensor, but instead of using $3^2$ parts for each type of variables, they collapse those into $5$ different parts: $X^2_i = \sum_{i_1+i_2=i} X_{i_1} \otimes X_{i_2}$ for $i\in\{0,1,2,3,4\}$ (where $X_0,X_1,X_2$ is the original partition of the $x$-variables), and similarly for the $y$-variables and $z$-variables. For example, $X^2_2$ consists of the union of $X_0 \otimes X_2,X_1\otimes X_1,X_2\otimes X_0$. According to the rules of partitioned tensors, we now have to partition $\tcw^{\otimes 2}$ into constituent tensors which only use one group each of $x$-variables, $y$-variables and $z$-variables. When we do this we obtain $15$ constituent tensors (rather than $6^2$):
\begin{itemize}
 \item The constituent tensor with index triple $(0,0,4)$ is $\mm{1,1,1}^{[0,0,2]} \otimes \mm{1,1,1}^{[0,0,2]}$, which is a matrix multiplication tensor $\mm{1,1,1}$. The index triples $(0,4,0),(4,0,0)$ can be analyzed similarly.
 \item The constituent tensor with index triple $(0,1,3)$ is $\mm{1,1,q}^{[0,1,1]} \otimes \mm{1,1,1}^{[0,0,2]} + \mm{1,1,1}^{[0,0,2]} \otimes \mm{1,1,q}^{[0,1,1]}$, which is equivalent to a \emph{single} matrix multiplication tensor $\mm{1,1,2q}$ (essentially since all four correspond to inner products whose ``result'' is in $x^{[0]}_0$). The index triples $(0,3,1),(1,0,3),(1,3,0),(3,0,1),(3,1,0)$ can be analyzed similarly.
 \item The constituent tensor with index triple $(0,2,2)$ is $\mm{1,1,1}^{[0,2,0]} \otimes \mm{1,1,1}^{[0,0,2]} + \mm{1,1,1}^{[0,0,2]} \otimes \mm{1,1,1}^{[0,2,0]}
+ \mm{1,1,q}^{[0,1,1]} \otimes \mm{1,1,q}^{[0,1,1]}$, which is equivalent to the single matrix multiplication tensor $\mm{1,1,q^2+2}$. The index triples $(2,0,2),(2,2,0)$ can be analyzed similarly.
 \item The constituent tensor with index triple $(1,1,2)$ is $\mm{1,q,1}^{[1,1,0]} \otimes \mm{1,1,1}^{[0,0,2]} + \mm{1,1,1}^{[0,0,2]} \otimes \mm{1,q,1}^{[1,1,0]} +
\mm{q,1,1}^{[1,0,1]} \otimes \mm{1,1,q}^{[0,1,1]} + \mm{1,1,q}^{[0,1,1]} \otimes \mm{q,1,1}^{[1,0,1]}$. This tensor is \emph{not} equivalent to a matrix multiplication tensor. A similar problem occurs for the index triples $(1,2,1),(2,1,1)$.
\end{itemize}

The basic idea behind the analysis of $\tcw^{\otimes2}$ is to apply the same sort of analysis we used for $\tcw$. The problem is that now we have three constituent tensors which are not matrix multiplication tensors. Coppersmith and Winograd noticed that $\tcw^{\otimes 2}_{1,1,2}$ and the other problematic tensors can be analyzed by applying the same sort of analysis once again. Define $\Val_\rho(\tcw^{\otimes 2}_{1,1,2})$ in the same way that we defined $\Vcw_\rho(\tcw)$ before\footnote{Since $\tcw^{\otimes 2}_{1,1,2}$ is not symmetric the actual definition is slightly different, and involves symmetrizing this tensor. For the sake of exposition we ignore these details here.}. The value $\Val_\rho(\tcw^{\otimes 2}_{1,1,2})$ is a number $V$ such that if we take the $N$th tensor power of $\tcw^{\otimes 2}_{1,1,2}$ and zero some variables appropriately, we get a sum of disjoint matrix multiplication tensors $\sum_s t_s$ such that $\sum_s \Vol(t_s)^{\rho/3} \approx V^N$. It can therefore be used as a replacement for the volume in an application of the asymptotic sum inequality for analyzing the tensor $\tcw^{\otimes 2}$ itself.

Coppersmith and Winograd diligently calculate $\Val_\rho(\tcw^{\otimes 2}_{1,1,2}) = 4^{1/3} q^\rho (2 + q^{3\rho})^{1/3}$, and use this value to calculate $\Vcw_\rho(\tcw^{\otimes 2})$; this time the explicit formula is too cumbersome to write concisely. Choosing $q = 6$, they are able to prove the bound $\omega < 2.3755$.

Stothers~\cite{Stothers} and Vassilevska-Williams~\cite{Williams} were the first to explain how to generalize this analysis to higher powers of the original identity, Stothers analyzing the fourth power, and Vassilevska-Williams the fourth and eighth powers. Le~Gall~\cite{LeGall} managed to analyze even higher powers: the sixteenth and thirty-second. The bound obtained by analyzing the fourth power is $\omega < 2.3730$. The analysis of higher powers results in better bounds, as shown in Table \ref{tab:simulations-all}, but those differ by less than $10^{-3}$ from the bound $2.3730$.

\subsection{Laser method with merging} \label{sec:summary-merging}

Why does analyzing $\tcw^{\otimes 2}$ result in better bounds than analyzing $\tcw$? 
The analysis of $\tcw$ proceeds by taking the $N$th tensor power, zeroing some variables, and comparing the total value of the resulting expression to $(q+2)^N$. In contrast, the analysis of $\tcw^{\otimes 2}$ proceeds by taking the $N$th tensor power of $\tcw^{\otimes 2}$, which is also the $2N$th tensor power of $\tcw$, zeroing some variables, and comparing the total value of the resulting expression to $(q+2)^{2N}$. Where is the gain?

In this paper we point out the core reason why the analysis of $\tcw^{\otimes 2}$ gains over the analysis of~$\tcw$, and evaluate it quantitatively: the gain lies in the fact that when describing the constituent tensors of $\tcw^{\otimes 2}$, we merge several non-disjoint matrix multiplication tensors to larger matrix multiplication tensors. For example, $\tcw^{\otimes 2}_{0,1,3}$ results from merging the two matrix multiplication tensors $\mm{1,1,q}^{[0,1,1]} \otimes \mm{1,1,1}^{[0,0,2]}, \mm{1,1,1}^{[0,0,2]} \otimes \mm{1,1,q}^{[0,1,1]}$ to a bigger one of shape $\mm{1,1,2q}$.

Accordingly, we define a generalization of the laser method which allows such merging. A single application of this method subsumes the analysis of all powers of the Coppersmith--Winograd identity. Recall that we defined $\Vcw_{\rho,N}(\tcw)$ to be the maximum value of
\[
\sum_{(I,J,K) \in \supp(\tcw^{\otimes N}) \cap (A\times B\times C)} \Vol(T^{\otimes N}_{I,J,K})^{\rho/3}
\]
over all choices of $A,B,C$ that result in disjoint tensors. The quantity $\Vm_{\rho,N}(\tcw)$ is defined in a similar fashion. First, we choose $A\subseteq X^N,B\subseteq Y^N,C\subseteq Z^N$ and zero all variables not in $A,B,C$. The result is a bunch of matrix multiplication tensors which we call the \emph{surviving tensors}. There follows a \emph{merging stage}: if the sum of a set of surviving tensors is equivalent to a matrix multiplication tensor, then we allow the set to be replaced by a single matrix multiplication tensor. The result of this stage is a set $\{ t_s \}$ of matrix multiplication tensors, each of which is a sum of surviving tensors. The \emph{merging value} $\Vm_{\rho,N}(\tcw)$ is defined as the maximum of $\sum_s \Vol(t_s)^{\rho/3}$ over all choices of $A,B,C$ and all mergings which result in a set of tensors $\{ t_s \}$ which have disjoint $x$-variables, $y$-variables, and $z$-variables. Mimicking the earlier definition, we define $\Vm_\rho(\tcw) = \lim_{N\to\infty} \Vm_{\rho,N}(\tcw)^{1/N}$, where again the limit always exists. A generalization of the asymptotic sum inequality shows that $\Vm_\omega(\tcw) \leq \bR(\tcw)$, and so this method allows us to prove upper bounds on $\omega$ once we have lower bounds on $\Vm_\omega(\tcw)$.

While we defined above the merging value $\Vm_\rho$ only for the Coppersmith--Winograd tensor $\tcw$, the same definition works for any other partitioned tensor whose constituent tensors are matrix multiplication tensors. A more complicated definition exists for the more general case, in which some constituent tensors are not matrix multiplication tensors, but instead are supplied with a \emph{value} (as in the recursive laser method). This definition only allows merging of matrix multiplication tensors, and is actually crucial for our analyses.

\subsection{Our results} \label{sec:summary-results}
We show that $\Vm_\rho(\tcw) \geq \Vcw_\rho(\tcw^{\otimes N})^{1/N}$ for \emph{any} $N$, where the latter is calculated along the lines of the work by Coppersmith and Winograd, Stothers, Vassilevska-Williams, and Le~Gall, but allowing more degrees of freedom than those used in current work. First, when going from $\tcw^{\otimes N}$ to $\tcw^{\otimes 2N}$, in current work the parts are always folded by putting $X^N_i \times X^N_j$ in part $X^{2N}_{i+j}$; our upper bound on~$\Vcw_\rho$ is oblivious to this choice, and thus allows arbitrary repartitioning
(this is a degree of freedom already mentioned in the original paper of Coppersmith and Winograd).
Second, current methods calculate successive values of $\tcw,\tcw^{\otimes 2},\tcw^{\otimes 4},\ldots$, each time squaring the preceding tensor. Our method also allows sequences such as $\tcw,\tcw^{\otimes 2},\tcw^{\otimes 3}$, where the latter is obtained by considering the tensor product of the two former tensors. Third, our results apply by and large to tensors other than the Coppersmith--Winograd tensor, though other (promising) such examples are not currently known.

The bound $\Vm_\rho(\tcw) \geq \Vcw_\rho(\tcw^{\otimes N})^{1/N}$ implies a limit on what bounds on $\omega$ can be obtained by the recursive laser method applied to all powers of $\tcw$. Indeed, suppose that $\rho$ is the solution to $\Vm_\rho(\tcw) = q+2$. Then the solution $\alpha$ to $\Vcw_\alpha(\tcw^{\otimes N}) = (q+2)^N$ satisfies $\alpha \geq \rho$ (because both $\Vm_\rho(\tcw)$ and $\Vcw_\alpha(\tcw^{\otimes N})$ are increasing functions of $\rho$ and $\alpha$, respectively), and so the corresponding bound $\omega \leq \alpha$ is no better than $\omega \leq \rho$.

We moreover show that more generally $\Vm_\rho(\tcw^{\otimes N_1}) \geq \Vcw_\rho((\tcw^{\otimes N_1})^{\otimes N_2})^{1/N_2}$ for any positive integers $N_1$ and $N_2$. This implies a limit on what bounds on $\omega$ can be obtained by the recursive laser method applied to $\tcw^{\otimes N_1}$. In particular, an upper bound on $\Vm_\rho(\tcw^{\otimes 2^N})$ implies a limit on what bound on $\omega$ can be obtained by the recursive laser method applied to $\tcw^{\otimes 2^M}$ for all $M \geq N$.

One of the main contributions of the paper is to show general lower bounds on the merging value.
First, using ideas similar to the upper bound on $\Vcw_\rho(\tcw)$ mentioned in Section~\ref{sec:summary-cw}, but relying on significantly more complicated arguments, we obtain an upper bound on $\Vm_\rho(\tcw)$:
\[
 \log_2 \Vm_\rho(\tcw) \leq \max_{0 \leq \alpha \leq 1} H(\tfrac{2-\alpha}{3},\tfrac{2\alpha}{3},\tfrac{1-\alpha}{3}) + \tfrac{1}{3} \rho \alpha \log_2 q + \tfrac{\rho-2}{3} H(\tfrac{1-\alpha}{2},\alpha,\tfrac{1-\alpha}{2}),
\]
gaining an extra term compared to the value of $\log_2 \Vcw_\rho(\tcw)$ given in Equation (\ref{eq:CW}). We do not have a matching lower bound, and indeed we suspect that our upper bound is not tight. Using this upper bound we find that for $q=5$, the solution to $\Vm_\rho(\tcw) = q+2$ satisfies $\rho > 2.3078$. Therefore, for $q=5$, no analysis of any power of $\tcw$, even accounting for the degrees of freedom mentioned above, can yield a bound better than $\omega < 2.3078$.

We then give a similar lower bounds on the merging value of a large class of tensors, including those of the form~$\tcw^{\otimes N_1}$.
Using this upper bound, we find that the solution to $\Vm_\rho(\tcw^{\otimes 16}) = (q+2)^{16}$ satisfies $\rho > 2.3725$, which implies that no analysis of any power $\tcw^{\otimes 2^N}$ along previous lines can yield a bound better than $\omega < 2.3725$. In particular, the existing bound $\omega < 2.3729$ cannot be improved significantly by considering the 64th, 128th, 256th powers and higher powers of the Coppersmith--Winograd identity.

\medskip

Table~\ref{tab:simulations-all} on page~\pageref{tab:simulations-all} summarizes our numerical results. For each $q,r$ the table contains the solution~$\omegamub_{q,r}$, rounded down to four decimal digits,  to $\Vmub_{\omegamub_{q,r}}(\tcw^{\otimes 2^r}) = (q+2)^{2^r}$, where $\Vmub_\rho(\tcw^{\otimes 2^r})$ is the upper bound on $\Vm_\rho(\tcw^{\otimes 2^r})$ that we obtain. In particular, the best bound $\omega \leq \omegam_{q,r}$ obtainable by applying the laser method with merging to $\tcw^{\otimes 2^r}$ satisfies $\omegam_{q,r} \geq \omegamub_{q,r}$. Since $\Vm_\rho(\tcw^{\otimes 2^r}) \geq \Vcw_\rho(\tcw^{\otimes 2^s})^{2^{r-s}}$ for all $s \geq r$, we deduce that $\omegacw_{q,s} \geq \omegam_{q,r} \geq \omegamub_{q,r}$, and so the table indeed gives limits on the upper bounds on $\omega$ which can be obtained using the recursive laser method applied to powers of $\tcw$.

\ifextendedabstract
As briefly mentioned above, our upper bound on $\Vm_\rho(T)$ applies to a wide class of tensors, with one caveat. The tensor powers of $\tcw$ have a special structure which allows us to describe what kinds of mergings are possible at the merging stage. It turns out that all such mergings satisfy the following property, which we call \emph{coherence}: if $S \subseteq \supp(\tcw^{\otimes N})$ is a set of index triples of tensors whose sum is equivalent to a matrix multiplication tensor, then for all $t \in [N]$, either $I_t = 0$ for all $(I,J,K) \in S$, or $J_t = 0$ for all $(I,J,K) \in S$, or $K_t = 0$ for all $(I,J,K) \in S$. Our upper bound applies for all tensors for which all mergings are coherent. We get the claimed general bound by modifying the definition of the merging value, requiring that all mergings be coherent.
\else
\medskip

As briefly mentioned above, our upper bound on $\Vm_\rho(T)$ applies to a wide class of tensors, with one caveat. The tensor powers of $\tcw$ have a special structure which allows us to describe what kinds of mergings are possible at the merging stage. It turns out that all such mergings satisfy the following property, which we call \emph{coherence}: if $S \subseteq \supp(\tcw^{\otimes N})$ is a set of index triples of tensors whose sum is equivalent to a matrix multiplication tensor, then for all $t \in [N]$, either $I_t = 0$ for all $(I,J,K) \in S$, or $J_t = 0$ for all $(I,J,K) \in S$, or $K_t = 0$ for all $(I,J,K) \in S$. For example, $\tcw^{\otimes 2}_{0,2,2}$ results from merging tensors corresponding to the index triples $\{(00,11,11),(00,02,20),(00,20,02)\}$, and $I_1=I_2=0$ in all of them. Our upper bound applies for all tensors for which all mergings are coherent. We get the claimed general bound by modifying the definition of the merging value, requiring that all mergings be coherent (which actually happens in all current approaches based on the laser method).
\fi

\ifextendedabstract
\clearpage

\else
\section{Background} \label{sec:background}
\paragraph{Notation} We write $[n] = \{1,\ldots,n\}$ 
 and use the notation $\exp_2 x$ for $2^x$. All our logarithms are to base~$2$. The entropy function $H$ is given by
\[ H(p_1,\dots,p_m) = -\sum_{i=1}^m p_i \log p_i, \]
where $0 \log 0 = 0$,
for any probability distribution $\vec{p}=(p_1,\ldots,p_m)$.
It can be used to estimate multinomial coefficients:
\[
 \binom{n}{np_1,\ldots np_m} \leq \exp_2 \left(H(p_1,\ldots,p_m)n\right).
\]
The entropy function is \emph{concave}: if $\vec{q}_1,\ldots,\vec{q}_r$ are probability distributions and $c_1,\ldots,c_r \geq 0$ sum to $1$ then
\[ \sum_{i=1}^r c_i H(\vec{q}_i) \leq H\left(\sum_{i=1}^r c_i \vec{q}_i\right). \]

The rest of this section is organized as follows:
\begin{itemize}
\item Section~\ref{sec:bilinear} describes the computational model and includes basic definitions: tensors, tensor rank, border rank, and so on. We also state Sch\"onhage's asymptotic sum inequality.
\item Section~\ref{sec:value} describes the general notion of value and the corresponding generalization of the asymptotic sum inequality.
\item Section~\ref{sec:partitioned} describes partitioned tensors, a concept which forms part of the traditional description of the laser method.
\item Section~\ref{sec:laser} gives a general version of the original Coppersmith--Winograd bound on the first power of their identity. This section includes non-standard definitions attempting to capture their construction, as well as some non-standard results which abstract the Coppersmith--Winograd method. Some of these results have not appeared before, and their proofs are given in the appendix.
\item Section~\ref{sec:cw-powers} describes the recursive version of the laser method, used by Coppersmith and Winograd~\cite{CoppersmithWinograd}, Stothers~\cite{Stothers,DavieStothers}, Vassilevska-Williams~\cite{Williams}, and Le~Gall~\cite{LeGall} to obtain the best known bounds on $\omega$.
\end{itemize}

The goal of this section is to describe the recursive laser method in enough detail so that we are able to show in Section~\ref{sec:merging} that our new variant of the method (which is not recursive) subsumes all earlier work.

\subsection{Bilinear complexity} \label{sec:bilinear}
The material below can be found in Chapters 14--15 of the book Algebraic Complexity Theory~\cite{ACT}.

\paragraph{The model} In this paper is to study the complexity of matrix multiplication in the algebraic complexity model. In this model, a program for computing the product $C = AB$ of two $n\times n$ matrices is allowed to use the following instructions:
\begin{itemize}
 \item Reading the input: $t \gets a_{ij}$ or $t \gets b_{ij}$.
 \item Arithmetic: $t \gets t_1 \circ t_2$, where $\circ \in \{+,-,\times,\div\}$.
 \item Output: $c_{ij} \gets t$.
\end{itemize}
Each of these instructions has unit cost.
All computations are done over a field $\field$, whose identity for our purposes is not so important; the reader can assume that we always work over the real numbers.
A legal program is one which never divides by zero; Strassen~\cite{Strassen73} showed how to eliminate divisions at the cost of a constant blowup in size. Denote by $T(n)$ the size of the smallest program which computes the product of two $n\times n$ matrices. The \emph{exponent of matrix multiplication} is defined by
\[ \omega = \lim_{n\to\infty} T(n)^{1/n}. \]
It can be shown that the limit indeed exists. For each $\epsilon > 0$, we also have $T(n) = O_\epsilon(n^{\omega+\epsilon})$, and $\omega$ can also be defined via this property.

\paragraph{Tensors and tensor rank} Strassen~\cite{Strassen69} related $\omega$ to the tensor rank of matrix multiplication tensors, a connection we proceed to explain. The tensors we are interested in are three-dimensional equivalents of matrices. An $n\times m$ matrix~$A$ over a field $\field$ corresponds to the bilinear form
\[
\sum_{i=1}^n \sum_{j=1}^m A_{ij} x_i y_j,
\]
where the $x_i$'s and the $y_j$'s are formal variables. Its rank is the smallest integer $r$ such that the bilinear form can be written as
\[
\sum_{s=1}^t \left(\sum_{i=1}^n\alpha_{is} x_i\right) \left(\sum_{j=1}^m\beta_{js} y_i\right)
\]
for some elements $\alpha_{is}$ and $\beta_{js}$ in $\field$.

Similarly, third order tensors correspond to trilinear forms. Let $X=\{x_1,\ldots, x_n\}$,
$Y=\{y_1,\ldots, y_m\}$ and $Z=\{z_1,\ldots, z_p\}$ be three sets of formal variables.
We call the variables in $X$ the \emph{$x$-variables}, and define \emph{$y$-variables} and \emph{$z$-variables} similarly.
A \emph{tensor over $X, Y, Z$} is a trilinear form
\[
T=\sum_{i=1}^n \sum_{j=1}^m \sum_{k=1}^p T_{ijk} x_i y_j z_k,
\]
where the $T_{ijk}$ are elements in $\field$. The \emph{rank} of $T$ is the smallest integer $r$ such that this trilinear form can be written as
\[
\sum_{s=1}^t \left(\sum_{i=1}^n\alpha_{is} x_i\right) \left(\sum_{j=1}^m\beta_{js} y_i\right)\left(\sum_{k=1}^p\gamma_{ks} z_k\right)
\]
for some elements $\alpha_{is}$, $\beta_{js}$ and $\gamma_{ks}$ in $\field$.
We denote the rank of a tensor $T$ by $R(T)$. In contrast to matrix rank, tensor rank is NP-hard to compute~\cite{Hastad,HillarLim}.

The \emph{matrix multiplication tensor} $\mm{n,m,p}$ is given by
\[ T = \sum_{i=1}^n \sum_{j=1}^m \sum_{k=1}^p x_{ij} y_{jk} z_{ki}. \]
This is an $nm \times mp \times pn$ tensor which corresponds to the trilinear product $\Tr (xyz)$, where $x,y,z$ are interpreted as $n\times m,m\times p,p\times n$ matrices, correspondingly. Strassen~\cite{Strassen69} proved that
\[ \omega = \lim_{n\to\infty} R(\mm{n,n,n})^{1/n}. \]

\paragraph{Border rank and the asymptotic sum inequality} Sch\"onhage's asymptotic sum inequality~\cite{Schonhage} is a fundamental theorem which is the main vehicle used for proving upper bounds on $\omega$. In order to state it, we need two more definitions: direct sum and border rank.

For matrices $A_1,A_2$ of dimensions $n_1\times m_1,n_2\times m_2$, their direct sum $A_1 \oplus A_2$ is the $(n_1+n_2)\times(m_1+m_2)$ block-diagonal matrix having as blocks $A_1,A_2$. Similarly we can define the \emph{direct sum} of two tensors $T_1,T_2$. 

If $A_i$ is a sequence of matrices converging to a matrix $A$, then $R(A_i) \to R(A)$. The same does not necessarily hold for tensors: if $T_i$ is a sequence of tensors converging to a tensor $T$, all we are guaranteed is that $\lim_i R(T_i) \leq R(T)$.
The \emph{border rank} of a tensor $T$, denoted $\bR(T)$, is the smallest rank of a sequence of tensors converging to $T$. Equivalently, the border rank of $T$ is the smallest rank over $\field[\epsilon]$ of any tensor of the form $\epsilon^k T + \sum_{\ell=k+1}^r \epsilon^\ell T_\ell$ (the equivalence is not immediate but follows from a result of Strassen~\cite{Strassen87}, see~\cite[\S 20.6]{ACT}). We denote any tensor of the latter form by $\epsilon^k T + O(\epsilon^{k+1})$.

We can now state the asymptotic sum inequality.

\begin{theorem}[Asymptotic sum inequality]
 For every set $n_i,m_i,p_i$ ($1 \leq i \leq K$) of positive integers,
\[
 \sum_{i=1}^K (n_i m_i p_i)^{\omega/3} \leq \bR\left(\bigoplus_{i=1}^K \mm{n_i,m_i,p_i}\right).
\]
\end{theorem}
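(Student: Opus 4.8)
The plan is to run Sch\"onhage's original argument, working throughout with border rank. I will use freely that $\bR$ is submultiplicative under $\otimes$, is invariant under renaming the three groups of variables (the ``rotation'' symmetry), does not increase under applying linear maps to the variables (in particular under zeroing variables), and satisfies $\mm{n,m,p}\otimes\mm{n',m',p'} = \mm{nn',mm',pp'}$. I also use that $\omega$ can be computed from border rank, so $\bR(\mm{k,k,k}) \geq k^\omega$ for every $k$: this follows from the standard argument converting a border-rank bound into a rank bound with only polynomial overhead, combined with submultiplicativity of rank (Fekete). Write $r = \bR(\bigoplus_{i=1}^K \mm{n_i,m_i,p_i})$ and $e_i = n_i m_i p_i$; the target is $\sum_i e_i^{\omega/3} \leq r$.

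First I would take the $N$th tensor power and expand it using distributivity of $\otimes$ over $\oplus$ and multiplicativity of matrix multiplication tensors:
\[ \Big(\bigoplus_{i=1}^K \mm{n_i,m_i,p_i}\Big)^{\otimes N} = \bigoplus_{\vec{i}\in[K]^N} \mm{n_{\vec{i}}, m_{\vec{i}}, p_{\vec{i}}}, \qquad n_{\vec{i}} := \prod_{t=1}^{N} n_{i_t}, \]
and similarly for $m_{\vec{i}}, p_{\vec{i}}$; this is a direct sum of disjoint matrix multiplication tensors of border rank at most $r^N$. For a composition $\vec a = (a_1,\dots,a_K)$ with $\sum_i a_i = N$, zeroing every block whose index $\vec{i}$ is not of type $\vec a$ leaves exactly $\binom{N}{a_1,\dots,a_K}$ disjoint copies of the single tensor $\mm{n_{\vec a}, m_{\vec a}, p_{\vec a}}$, where $n_{\vec a} = \prod_i n_i^{a_i}$ and so on; hence this direct sum of copies has border rank at most $r^N$.

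The key step is a ``combining'' observation: a direct sum of $s$ disjoint copies of $\mm{n,m,p}$ restricts --- by identifying the $x$-, $y$- and $z$-variables across copies appropriately --- to the single matrix multiplication tensor $\mm{s_1 n, s_2 m, s_3 p}$ for any factorization $s_1 s_2 s_3 = s$. Applying this with $s = \binom{N}{\vec a}$ produces a single tensor $\mm{A,B,C}$ with $ABC = \binom{N}{\vec a}\prod_i e_i^{a_i}$ and $\bR(\mm{A,B,C}) \leq r^N$. Since $\mm{A,B,C}\otimes\mm{B,C,A}\otimes\mm{C,A,B} = \mm{ABC,ABC,ABC}$ and the two rotations have the same border rank as $\mm{A,B,C}$, we get $\bR(\mm{A,B,C})^3 \geq \bR(\mm{ABC,ABC,ABC}) \geq (ABC)^\omega$, so $\bR(\mm{A,B,C}) \geq (ABC)^{\omega/3}$. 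Thus for \emph{every} type $\vec a$,
\[ \binom{N}{a_1,\dots,a_K}\ \prod_{i=1}^K e_i^{a_i} \ \leq\ r^{3N/\omega}. \]

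It remains to choose $\vec a$ well. I would take $a_i \approx q_i N$ with $q_i = e_i^{\omega/3}/F$ and $F = \sum_j e_j^{\omega/3}$. Then $\binom{N}{\vec a} = 2^{N H(\vec q) + o(N)}$, and a short computation using $\log_2 q_i = \tfrac{\omega}{3}\log_2 e_i - \log_2 F$ gives
\[ \log_2\Big(\binom{N}{\vec a}\prod_i e_i^{a_i}\Big) = N\Big(H(\vec q) + \sum_i q_i\log_2 e_i\Big) + o(N) = N\Big(\log_2 F + \big(1 - \tfrac{\omega}{3}\big)\sum_i q_i\log_2 e_i\Big) + o(N). \]
Because each $e_i = n_i m_i p_i \geq 1$ we have $\sum_i q_i\log_2 e_i \geq 0$, and because $\omega < 3$ we have $1 - \tfrac{\omega}{3} > 0$; so dividing the displayed inequality by $N$ and letting $N\to\infty$ forces $\log_2 F \leq \tfrac{3}{\omega}\log_2 r$, i.e.\ $F^{\omega/3} \leq r$, which is exactly $\sum_i (n_i m_i p_i)^{\omega/3} \leq r$. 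The step I expect to need the most care is the combining observation --- verifying that identifying variables across copies genuinely yields a matrix multiplication tensor of the claimed dimensions, and that a single block of the power indeed specializes to it --- together with the routine handling of the floors and of the auxiliary border-rank-to-rank conversion; the entropy bookkeeping at the end is of the same flavor as the Coppersmith--Winograd upper-bound argument sketched above.
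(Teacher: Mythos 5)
Your ``combining'' observation is in fact correct: $s$ disjoint copies of $\mm{n,m,p}$ equal $\mm{s}\otimes\mm{n,m,p}$, and since $\mm{s_1,s_2,s_3}$ is a restriction of the diagonal tensor $\mm{s_1s_2s_3}$, the identification of variables you describe does yield $\mm{s_1n,s_2m,s_3p}$ as a restriction, of border rank at most $r^N$. The genuine gap is that this step, followed by the volume bound $\bR(\mm{A,B,C})\ge (ABC)^{\omega/3}$, credits the $\binom{N}{\vec a}$ disjoint copies with only $\binom{N}{\vec a}^{\omega/3}$ rather than $\binom{N}{\vec a}$, and that loss cannot be recovered afterwards. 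Concretely, the whole family of inequalities you derive, $\binom{N}{a_1,\dots,a_K}\prod_i e_i^{a_i}\le r^{3N/\omega}$ for all types $\vec a$, gives after taking logarithms at best $\max_{\vec q}\bigl[H(\vec q)+\sum_i q_i\log_2 e_i\bigr]=\log_2\bigl(\sum_i e_i\bigr)\le\tfrac{3}{\omega}\log_2 r$, i.e.\ $\bigl(\sum_i n_im_ip_i\bigr)^{\omega/3}\le r$, which is strictly weaker than $\sum_i (n_im_ip_i)^{\omega/3}\le r$ whenever $K\ge 2$; for Sch\"onhage's own example $\bR(\mm{4,1,4}\oplus\mm{1,9,1})\le 17$ it yields only $25^{\omega/3}\le 17$, hence roughly $\omega\le 2.64$ instead of $\omega<2.55$. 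Your final line also contains an algebra slip that masks this: $\log_2 F\le\tfrac{3}{\omega}\log_2 r$ says $F^{\omega/3}\le r$, i.e.\ $\bigl(\sum_i e_i^{\omega/3}\bigr)^{\omega/3}\le r$, which is not the claimed statement $\sum_i e_i^{\omega/3}\le r$. No cleverer choice of $\vec a$ helps (the optimum of the exponent is at $q_i\propto e_i$), and bootstrapping the weak bound through higher powers of the direct sum merely reproduces it.

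The missing ingredient is exactly the content of Sch\"onhage's theorem: disjoint copies must enter the bound \emph{linearly}, not with exponent $\omega/3$, and this cannot be extracted from the volume of a single merged matrix multiplication tensor. Note that the paper does not prove this statement; it quotes it from \cite{Schonhage} (see also \cite[Ch.~15]{ACT}) and later observes that it also follows from the value machinery of Lemma~\ref{lem:value-properties}. The known proofs first convert the border-rank hypothesis into a rank bound with polynomial overhead and then use the multiplicity $s$ of copies recursively at every scale (or an equivalent counting argument on how the $r^N$ rank-one terms are shared among copies), so that the copies accumulate as a factor $s$ rather than $s^{\omega/3}$. Your power-and-type setup and the entropy bookkeeping are fine and standard; it is the ``merge the copies and take the volume'' step that must be replaced by an argument of that stronger kind.
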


If we define the \emph{volume} of a matrix multiplication tensor $\mm{n,m,p}$ by $\Vol(\mm{n,m,p}) = nmp$, the size of the \emph{support} (set of non-zero entries) of $\mm{n,m,p}$, then we can restate the asymptotic sum inequality as follows.

\begin{theorem}[Asymptotic sum inequality (restated)]
 For every set $T_1,\ldots,T_K$ of matrix multiplication tensors,
\[
 \sum_{i=1}^K \Vol(T_i)^{\omega/3} \leq \bR\left(\bigoplus_{i=1}^K T_i\right).
\]
\end{theorem}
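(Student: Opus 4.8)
The plan is to observe that this is purely a notational reformulation of the preceding theorem, so the entire task is to translate faithfully between the two forms. First I would recall that, by definition, a \emph{matrix multiplication tensor} is a tensor equivalent to $\mm{n,m,p}$ for some positive integers $n,m,p$; fix such a presentation $T_i \sim \mm{n_i,m_i,p_i}$ for each $i$.

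Next I would check the two elementary facts that make the translation legitimate. (i) $\Vol(T_i) = n_i m_i p_i$: this is immediate, since $\Vol$ was defined as the size of the support of $\mm{n_i,m_i,p_i}$, and the trilinear form $\sum_{i,j,k} x_{ij} y_{jk} z_{ki}$ has exactly $n_i m_i p_i$ nonzero coefficients (one for each triple $(i,j,k)$). (ii) $\bR\bigl(\bigoplus_{i=1}^K T_i\bigr) = \bR\bigl(\bigoplus_{i=1}^K \mm{n_i,m_i,p_i}\bigr)$: the equivalence $T_i \sim \mm{n_i,m_i,p_i}$ is realized by invertible linear substitutions of the $x$-, $y$-, and $z$-variables within each group, and such substitutions carry any rank-$r$ sequence converging to one tensor into a rank-$r$ sequence converging to the other, so they preserve border rank; applying these substitutions blockwise to the direct sums (they act on disjoint variable sets) gives the claimed equality of border ranks.

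Finally I would substitute: applying the preceding theorem to the integers $n_i, m_i, p_i$ gives $\sum_{i=1}^K (n_i m_i p_i)^{\omega/3} \le \bR\bigl(\bigoplus_{i=1}^K \mm{n_i,m_i,p_i}\bigr)$, and by (i) the left-hand side equals $\sum_{i=1}^K \Vol(T_i)^{\omega/3}$ while by (ii) the right-hand side equals $\bR\bigl(\bigoplus_{i=1}^K T_i\bigr)$, which is exactly the restated inequality.

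There is essentially no obstacle here; the only step needing even a one-line justification is the invariance in (ii), and that is immediate once one notes that equivalence of tensors is witnessed by invertible variable substitutions, which change neither rank nor border rank.
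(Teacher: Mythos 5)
Your proposal is correct and matches the paper's treatment: the paper offers no separate argument, presenting the restated inequality as an immediate translation of the asymptotic sum inequality once $\Vol(\mm{n,m,p}) = nmp$ is defined, exactly the definition-chase you carry out. Your extra observation that equivalence (a renaming or invertible substitution of variables) preserves border rank and is compatible with direct sums is the only content needed, and it agrees with the paper's remark that isomorphic tensors have the same rank and border rank.
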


The work of Coppersmith and Winograd, Stothers, Vassilevska-Williams, and Le~Gall uses a generalization of the asymptotic sum inequality in which \emph{volume} is replaced by a more general parameter which applies to arbitrary tensors rather than only to matrix multiplication ones. One main difference is that this more general notion of \emph{value} depends on $\omega$. We describe this version in Section~\ref{sec:value}.

\paragraph{Isomorphism, restriction, degeneration, and equivalence of tensors} A tensor $T$ over $X,Y,Z$ is a \emph{restriction} of a tensor $T'$ over $X',Y',Z'$ if there are linear transformations $A\colon \field[X] \to \field[X']$, $B\colon \field[Y] \to \field[Y']$, $C\colon \field[Z] \to \field[Z']$ such that $T(x,y,z) = T'(Ax,By,Cz)$ as trilinear forms over $X,Y,Z$ (here $x$ is the vector of formal $x$-variables, and $y,z$ are defined similarly). It is not hard to check that $R(T) \leq R(T')$ and $\bR(T) \leq \bR(T')$. If $T$ and $T'$ are each a restriction of the other, then we say that $T$ and $T'$ are \emph{isomorphic}\footnote{The reader might wonder what is the relation between $T$ and $T'$ if there are \emph{regular} $A,B,C$ such that $T(x,y,z) = T'(Ax,By,Cz)$. This definition is less general than isomorphism, since for example all zero tensors are isomorphic, but bijections $A,B,C$ exist only if $|X|=|X'|,|Y|=|Y'|,|Z|=|Z'|$. The exact relation between the two definitions appears in~\cite[\S14.6]{ACT}.}. Isomorphic tensors have the same rank and border rank.



There is a weaker notion of restriction which implies $\bR(T) \leq \bR(T')$. We say that $T$ is a \emph{degeneration} of $T'$ if for some $k$, $\epsilon^k T + O(\epsilon^{k+1})$ is a restriction of $T'$ over the field $\field[\epsilon]$. As shown by Strassen~\cite{Strassen87}, if two tensors are each a degeneration of the other, then they are isomorphic.

Using the notions of restriction and degeneration, we can give an alternative definition of rank and border rank. Let $\mm{n} = \sum_{i=1}^n x_i y_i z_i$ be the \emph{triple product tensor}. Then $R(T) \leq r$ if and only if $T$ is a restriction of $\mm{r}$, and $\bR(T) \leq r$ if and only if $T$ is a degeneration of $\mm{r}$.

While isomorphism and degeneration are natural concepts from an algebraic viewpoint, in practice many of the constructions appearing below are combinatorial, and so the corresponding tensors satisfy stronger relations. A tensor $T$ over $X,Y,Z$ is \emph{equivalent} to a tensor $T'$ over $X',Y',Z'$, in symbols $T\approx T'$, if there exist bijections $\alpha\colon X\to X'$, $\beta\colon Y\to Y'$, $\gamma\colon Z\to Z'$ such that $T_{ijk} = T'_{\alpha(i)\beta(j)\gamma(k)}$, that is, if $T$ and $T'$ differ by a renaming of variables. We often consider tensors only up to equivalence. If $\alpha,\beta,\gamma$ are only required to be injections, then $T$ is a \emph{combinatorial restriction} of $T'$. In that case, $T$ is obtained from $T'$ by zeroing some variables and renaming the rest arbitrarily.

\paragraph{Useful operations on tensors} Two useful operations on tensors are \emph{tensor product} (corresponding to the Kronecker product of matrices) and \emph{rotation} (corresponding to transposition of matrices).

The Kronecker or tensor product of matrices $A_1 \otimes A_2$ is an $n_1n_2 \times m_1m_2$ matrix whose entries are $(A_1\otimes A_2)_{i_1i_2,j_1j_2} = (A_1)_{i_1,j_1} (A_2)_{i_2,j_2}$. The \emph{tensor product} of two tensors is defined analogously. It then follows immediately that $\mm{n_1,m_1,p_1} \otimes \mm{n_2,m_2,p_2} \approx \mm{n_1n_2,m_1m_2,p_1p_2}$. The $n$th \emph{tensor power} of a tensor $T$ is denoted by $T^{\otimes n}$. Both rank and border rank are submultiplicative: $R(T_1\otimes T_2) \leq R(T_1) R(T_2)$ and $\bR(T_1\otimes T_2) \leq \bR(T_1) \bR(T_2)$.

Matrices can be transposed. The corresponding operation for tensors is \emph{rotation}. For an $n\times m\times p$ tensor $T = \sum_{ijk} T_{ijk} x_i y_j z_k$, its rotation is the $m\times p\times n$ tensor $\rot{T} = \sum_{jki} T_{ijk} y_j z_k x_i$. Repeating the operation again, we obtain a $p\times n\times m$ tensor $\rott{T}$. All rotations of a tensor have the same rank and the same border rank.
There are several corresponding notions of symmetry, among which we choose the one most convenient for us: a tensor $T$ is \emph{symmetric} if $\rot{T} \approx T$. 

\subsection{The value of a tensor} \label{sec:value}
The asymptotic sum inequality can be generalized to tensors which are not matrix multiplication tensors. The idea is to define a notion of value generalizing that of volume.

\begin{definition} \label{def:value}
For a tensor $T$, any $\rho\in[2,3]$, and any integer $N\ge 1$,
let $\Vg_{\rho,3N}(T)$ be the maximum of $\sum_{i=1}^L (n_i m_i p_i)^{\rho/3}$ over all degenerations of $(T\otimes\rot{T}\otimes\rott{T})^{\otimes N}$ isomorphic to $\bigoplus_{i=1}^L \mm{n_i,m_i,p_i}$. The \emph{value} of $T$ is the function
\[
 \Vg_\rho(T) = \lim_{N\to\infty} \Vg_{\rho,3N}(T)^{1/3N}.
\]
\end{definition}

When $T$ is symmetric, like the Coppersmith--Winograd tensor described below, we can do away with $T\otimes\rot{T}\otimes\rott{T}$, considering instead $T^{\otimes N}$. The more general definition is needed only for non-symmetric tensors, which do, however, come up in the analysis.

Stothers~\cite{Stothers,DavieStothers} showed that the limit in the definition of $\Vg_\rho(T)$ always exists. Furthermore, he showed that the definition of $\Vg_{\rho,N}(T)$ is unchanged if we require all dimension triples $(n_i,m_i,p_i)$ to be the same. He also proved the following properties of the value.

\begin{lemma}[\cite{DavieStothers}] \label{lem:value-properties}
 For any $\rho\in[2,3]$ the following hold:
\begin{enumerate}
 \item If $T = \mm{n,m,p}$ then $\Vg_\rho(T) = \Vol(T)^{\rho/3}$.
 \item For any $T_1,T_2$ we have $\Vg_\rho(T_1 \oplus T_2) \geq \Vg_\rho(T_1) + \Vg_\rho(T_2)$ and $\Vg_\rho(T_1 \otimes T_2) \geq \Vg_\rho(T_1) \Vg_\rho(T_2)$.
 \item For any $T$ we have $\Vg_\omega(T) \leq \bR(T)$.
\end{enumerate}
\end{lemma}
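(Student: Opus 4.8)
The plan is to establish the three items almost independently. Throughout, write $\widetilde{T} := T \otimes \rot{T} \otimes \rott{T}$, and record two facts used repeatedly: rotation distributes over $\otimes$ and over $\oplus$, so $\widetilde{T_1 \otimes T_2} = \widetilde{T_1} \otimes \widetilde{T_2}$ (hence $\widetilde{T^{\otimes k}} = \widetilde{T}^{\otimes k}$) and $\widetilde{T_1 \oplus T_2} = \bigoplus_{a,b,c \in \{1,2\}} T_a \otimes \rot{T_b} \otimes \rott{T_c}$, which contains $\widetilde{T_1} \oplus \widetilde{T_2}$ and much more as a direct summand; and degeneration is transitive, compatible with $\otimes$, and preserved by rotation, while the value $\sum_i \Vol^{\rho/3}$ of a direct sum of matrix multiplication tensors is additive under $\oplus$ and multiplicative under $\otimes$ of such sums. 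The existence of the limit in Definition~\ref{def:value}, which I will use freely, follows by Fekete's lemma from supermultiplicativity of $N \mapsto \Vg_{\rho,3N}(T)$ (itself proved by tensoring degenerations), as shown by Stothers.

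Item~(3) is the quickest. If $D := \bigoplus_i \mm{n_i,m_i,p_i}$ is a degeneration of $\widetilde{T}^{\otimes N}$, then submultiplicativity of border rank and its invariance under rotation give $\bR(D) \le \bR(\widetilde{T}^{\otimes N}) \le \bR(\widetilde{T})^N \le \bR(T)^{3N}$; the asymptotic sum inequality then gives $\sum_i (n_im_ip_i)^{\omega/3} \le \bR(T)^{3N}$, so $\Vg_{\omega,3N}(T) \le \bR(T)^{3N}$ and $\Vg_\omega(T) \le \bR(T)$ after taking $(3N)$-th roots and the limit. For item~(2), the $\otimes$ part is also essentially formal: if $D_i$ is a near-optimal degeneration of $\widetilde{T_i}^{\otimes N}$ into matrix multiplication tensors, then $D_1 \otimes D_2$ is a degeneration of $\widetilde{T_1 \otimes T_2}^{\otimes N} = \widetilde{T_1}^{\otimes N} \otimes \widetilde{T_2}^{\otimes N}$ whose value is the product of the values of $D_1$ and $D_2$, so $\Vg_{\rho,3N}(T_1 \otimes T_2) \ge \Vg_{\rho,3N}(T_1)\,\Vg_{\rho,3N}(T_2)$, and the claim follows by taking roots and limits.

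The $\oplus$ part of item~(2) is where the real work lies, and it is in essence the lower-bound side of Schönhage's asymptotic sum inequality. Expanding $\widetilde{T_1 \oplus T_2}^{\otimes N}$ and retaining only those pairwise variable-disjoint summands in which the three tensor factors use multisets of coordinates that all coincide, one finds, for each $j$, a direct sum of $\binom{N}{j}^{3}$ copies of $\widetilde{T_1}^{\otimes j} \otimes \widetilde{T_2}^{\otimes (N-j)}$ sitting inside $\widetilde{T_1 \oplus T_2}^{\otimes N}$ as a combinatorial restriction (up to reorderings of factors, which preserve value). Degenerating each such copy near-optimally yields $\Vg_{\rho,3N}(T_1 \oplus T_2) \ge \sum_{j=0}^{N} \binom{N}{j}^{3}\, \Vg_{\rho,3j}(T_1)\,\Vg_{\rho,3(N-j)}(T_2)$. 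Writing $v_i := \Vg_\rho(T_i)$, using $\Vg_{\rho,3j}(T_i) \ge (v_i - \delta)^{3j}$ for all large $j$ (and absorbing the finitely many small $j$ into a constant), and invoking the standard estimate $\max_{0 \le \theta \le 1}\big[\, 3H(\theta,1-\theta) + 3\theta\log v_1 + 3(1-\theta)\log v_2 \,\big] = 3\log(v_1 + v_2)$, one obtains $\Vg_\rho(T_1 \oplus T_2) \ge v_1 + v_2$ after taking roots, letting $N \to \infty$, and $\delta \to 0$. It is exactly the cube $\binom{N}{j}^{3}$ — which comes from the factor $T \otimes \rot{T} \otimes \rott{T}$ in the definition of the value — that produces the additive bound, rather than the weaker $(v_1^{3} + v_2^{3})^{1/3}$ that a naive ``diagonal'' argument (keeping only $\widetilde{T_1} \oplus \widetilde{T_2}$) would give.

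For item~(1), the lower bound is immediate: for $T = \mm{n,m,p}$ we have $\widetilde{T}^{\otimes N} = \mm{M,M,M}$ with $M = (nmp)^{N}$, which as a single matrix multiplication tensor has value $M^{\rho} = (nmp)^{N\rho}$. For the upper bound, let $\bigoplus_{i=1}^{L} \mm{n_i,m_i,p_i}$ be a degeneration of $\mm{M,M,M}$. All tensors involved are \emph{concise}, and concise dimensions cannot increase under degeneration (if $\epsilon^{q} S + O(\epsilon^{q+1})$ is a restriction of a concise $T'$ and $S$ is concise over $\field$, then $\epsilon^{q}S + O(\epsilon^{q+1})$ is concise over $\field(\epsilon)$ with the same dimensions, forcing the factor maps $A,B,C$ to be injective), so $\sum_i n_im_i,\ \sum_i m_ip_i,\ \sum_i p_in_i \le M^{2}$. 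Then Hölder with exponents $(3,3,3)$ gives $\sum_i (n_im_ip_i)^{2/3} \le M^{2}$, and Cauchy–Schwarz together with $\sum_i a_ib_i \le (\sum_i a_i)(\sum_i b_i)$ gives $\sum_i n_im_ip_i \le M^{3}$; a final application of Hölder with exponents $\big(\tfrac{1}{3-\rho},\tfrac{1}{\rho-2}\big)$ interpolates these to $\sum_i (n_im_ip_i)^{\rho/3} \le M^{\rho}$ for every $\rho \in [2,3]$. Hence $\Vg_{\rho,3N}(\mm{n,m,p}) \le (nmp)^{N\rho}$, and taking roots and the limit gives equality with the lower bound. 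The main obstacle I expect is the $\oplus$ part of item~(2): extracting the \emph{additive} bound requires the careful exploitation of the cubed multiplicities together with the multinomial/entropy estimate, plus the routine but fiddly handling of small $j$ and of the $\delta$-approximations; the conciseness bookkeeping in item~(1) is the next most delicate point, and everything else is formal.
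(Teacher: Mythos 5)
The paper does not actually prove this lemma: it is quoted from Davie--Stothers (the appendix proves Theorems~\ref{thm:cw-method}, \ref{thm:cw-formula} and \ref{thm:cw-formula-exact}, but not Lemma~\ref{lem:value-properties}), so there is no in-paper proof to compare against; your proposal is a self-contained reconstruction, and as far as I can check it is correct and follows the standard Sch\"onhage/Stothers route. Item~(3) via $\bR(D)\leq\bR(T)^{3N}$ plus the asymptotic sum inequality is fine and non-circular, since the ASI is an independent theorem of Sch\"onhage. The $\otimes$ half of item~(2) by tensoring near-optimal degenerations is routine. For the $\oplus$ half, your key points are right: the summands of $(T_1\oplus T_2)^{\otimes N}\otimes\rot{(T_1\oplus T_2)}^{\otimes N}\otimes\rott{(T_1\oplus T_2)}^{\otimes N}$ indexed by triples $(S_1,S_2,S_3)$ of subsets are pairwise disjoint in all three variable types (because an $x$-variable of $T\otimes\rot{T}\otimes\rott{T}$ records a choice in each of the three factors), so keeping all triples with $|S_1|=|S_2|=|S_3|$ is a legitimate combinatorial restriction, and the cubed multiplicity $\binom{N}{j}^3$ together with $\max_\theta[H(\theta,1-\theta)+\theta\log v_1+(1-\theta)\log v_2]=\log(v_1+v_2)$ gives additivity rather than $(v_1^3+v_2^3)^{1/3}$; the only bookkeeping you leave implicit is aligning the $\epsilon$-degrees when taking a direct sum of degenerations over the disjoint blocks, which is standard (rescale one of the three restriction maps by a power of $\epsilon$). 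Item~(1) is the part where proofs in the literature are usually terser; your argument via monotonicity of the flattening ranks (conciseness dimensions) under degeneration, giving $\sum_i n_im_i,\sum_i m_ip_i,\sum_i p_in_i\leq M^2$ and hence $\sum_i(n_im_ip_i)^{2/3}\leq M^2$ and $\sum_i n_im_ip_i\leq M^3$, followed by H\"older interpolation with exponents $\bigl(\tfrac{1}{3-\rho},\tfrac{1}{\rho-2}\bigr)$ to get $\sum_i(n_im_ip_i)^{\rho/3}\leq M^\rho$, checks out for all $\rho\in[2,3]$ (the endpoints being the two bounds themselves) and matches the lower bound $(nmp)^{N\rho}$ from the identity degeneration of $\mm{M,M,M}$. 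In short: correct, but it should be framed as supplying a proof the paper delegates to \cite{DavieStothers}, not as an alternative to one given there.
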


The last item implies the asymptotic sum inequality since taking $T = \bigoplus_{i=1}^K \mm{n_i,m_i,p_i}$, the first two items show that
\[ \Vg_\omega\left(\bigoplus_{i=1}^K \mm{n_i,m_i,p_i}\right) \geq \sum_{i=1}^K (n_im_ip_i)^{\omega/3}, \]
and so the last item shows that $\sum_{i=1}^K (n_im_ip_i)^{\omega/3} \leq \bR(\bigoplus_{i=1}^K \mm{n_i,m_i,p_i})$.


\subsection{Partitioned tensors and the Coppersmith--Winograd identity} \label{sec:partitioned}
Coppersmith and Winograd~\cite{CoppersmithWinograd} exhibit the following identity, for any $q \geq 0$:
\begin{align*}
&\epsilon^3 \left[\sum_{i=1}^q \left(x_0^{[0]} y_i^{[1]} z_i^{[1]} + x_i^{[1]} y_0^{[0]} z_i^{[1]} + x_i^{[1]} y_i^{[1]} z_0^{[0]}\right) +
x_0^{[0]} y_0^{[0]} z_{q+1}^{[2]} + x_0^{[0]} y_{q+1}^{[2]} z_0^{[0]} + x_{q+1}^{[2]} y_0^{[0]} z_0^{[0]}\right] + O(\epsilon^4) = \\
&\epsilon \sum_{i=1}^q (x_0^{[0]} + \epsilon x_i^{[1]}) (y_0^{[0]} + \epsilon y_i^{[1]}) (z_0^{[0]} + \epsilon z_i^{[1]}) - \\
&\left(x_0^{[0]} + \epsilon^2\sum_{i=1}^q x_i^{[1]}\right) \left(y_0^{[0]} + \epsilon^2\sum_{i=1}^q y_i^{[1]}\right) \left(z_0^{[0]} + \epsilon^2\sum_{i=1}^q z_i^{[1]}\right) + \\
&(1 - q \epsilon)(x_0^{[0]} + \epsilon^3 x_{q+1}^{[2]}) (y_0^{[0]} + \epsilon^3 y_{q+1}^{[2]}) (z_0^{[0]} + \epsilon^3 z_{q+1}^{[2]}).
\end{align*}

This identity shows that $\bR(\tcwq) \leq q+2$, where
\[ \tcwq = \sum_{i=1}^q \left(x_0^{[0]} y_i^{[1]} z_i^{[1]} + x_i^{[1]} y_0^{[0]} z_i^{[1]} + x_i^{[1]} y_i^{[1]} z_0^{[0]}\right) +
x_0^{[0]} y_0^{[0]} z_{q+1}^{[2]} + x_0^{[0]} y_{q+1}^{[2]} z_0^{[0]} + x_{q+1}^{[2]} y_0^{[0]} z_0^{[0]}. \]
For simplicity, when $q$ is understood we use $\tcw$ for $\tcwq$. We call $\tcw$ the \emph{Coppersmith--Winograd tensor}.

The Coppersmith--Winograd tensor is an example of a partitioned tensor.

\begin{definition} \label{def:partitioned}
 Let $X,Y,Z$ be finite sets of variables, and assume that these sets are partitioned into smaller sets:
\[
X = \bigcup_{i\in I} X_i, \hspace{5mm}
Y = \bigcup_{j\in J} Y_j, \hspace{5mm}
Z = \bigcup_{k\in K} Z_k,
\]
where $I,J,K$ are three finite sets, and the unions are disjoint.
Each $X_i$ is called an \emph{$x$-group}, each $Y_j$ is called a \emph{$y$-group},
and each $Z_k$ is called a \emph{$z$-group}. Each of them is a \emph{group}.

A \emph{partitioned tensor} over $X,Y,Z$ is a tensor $T$ of the form $T = \sum_s T_s$, where each $T_s$ is a nonzero tensor over $X_{i_s},Y_{j_s},Z_{k_s}$ for some $(i_s,j_s,k_s)\in I\times J\times K$. We call $(i_s,j_s,k_s)$ the \emph{annotation} of $T_s$. The annotations of different $T_s$ must be distinct. We call $T_s$ the \emph{constituent tensors}, and $T = \sum_s T_s$ is the \emph{decomposition} of $T$. The \emph{support} of $T$ is the set $\supp(T)\subseteq I\times J\times K$ of all annotations of constituent tensors $T_s$. For convenience, we will often label the constituent tensors by elements of the support (i.e., identify $s$ and $(i_s,j_s,k_s)$) and write $T=\sum_{s\in\supp(T)}T_s$.

A partitioned tensor $T$ is \emph{tight} if $I,J,K$ are sets of integers and for some $D \in \Int$, all annotations $(i,j,k)$ in the support of $T$ satisfy $i+j+k=D$.
\end{definition}

Tightness is necessary in current techniques, based on the laser method, proving lower bounds on the value of partitioned tensors (in particular, in Theorem \ref{thm:cw-formula}).
%

As an example, we explain how to view the Coppersmith--Winograd tensor as a partitioned tensor. The $x$-variables are $X = X_0 \cup X_1 \cup X_2$, where $X_0 = \{x^{[0]}_0\}$, $X_1 = \{x^{[1]}_1,\ldots,x^{[1]}_q\}$, $X_2 = \{x^{[2]}_{q+1}\}$. The sets $Y,Z$ are defined similarly. We have
\[ T = T_{0,1,1} + T_{1,0,1} + T_{1,1,0} + T_{2,0,0} + T_{0,2,0} + T_{0,0,2}, \]
where $T_{0,1,1} \approx \mm{1,1,q}$, $T_{1,0,1} \approx \mm{q,1,1}$, $T_{1,1,0} \approx \mm{1,q,1}$ and $T_{2,0,0},T_{0,2,0},T_{0,0,2} \approx \mm{1,1,1}$. The partitioned tensor is tight since all annotations in its support sum to $2$.

Partitioned tensors can be multiplied. Suppose that $T = \sum_s T_s$ is a partitioned tensor over $X,Y,Z$, where $X = \bigcup_{i\in I} X_i$, $Y = \bigcup_{j\in J} Y_j$, $Z = \bigcup_{k\in K} Z_k$, and that $T' = \sum_{s'} T'_{s'}$ is a partitioned tensor over $X',Y',Z'$, where $X' = \bigcup_{i'\in I'} X'_{i'}$, $Y' = \bigcup_{j'\in J'} Y'_{j'}$, $Z' = \bigcup_{k'\in K'} Z'_{k'}$. Then $T \otimes T' = \sum_{s,s'} T_s \otimes T'_{s'}$ is a partitioned tensor over $X\times X',Y\times Y',Z\times Z'$, where $X\times X' = \bigcup_{(i,i') \in I\times I'} X_i \times X'_{i'}$, and $Y\times Y',Z\times Z'$ are defined similarly. If $T$ and $T'$ are both tight then so is $T\otimes T'$.

Of particular interest to us is the tensor power of a partitioned tensor. Suppose that $T = \sum_s T_s$ is a partitioned tensor over $X,Y,Z$, where $X = \bigcup_{i\in I} X_i$, $Y = \bigcup_{j\in J} Y_j$, $Z = \bigcup_{k\in K} Z_k$. For $N \geq 1$, the tensor power $T^{\otimes N}$ is a partitioned tensor over $X^N,Y^N,Z^N$. We can index the parts in $X^N$ by sequences in $I^N$ which we call \emph{$x$-indices}, and we define $y$-indices and $z$-indices analogously. Each constituent tensor of $T^{\otimes N}$ is indexed by an \emph{index triple} $(i,j,k)$ consisting of an $x$-index, a $y$-index, and a $z$-index. A set of index triples is \emph{strongly disjoint} if no two triples share an $x$-index, a $y$-index or a $z$-index.

Partitioned tensors can also be rotated: if $T$ is a partitioned tensor over $X,Y,Z$, then $\rot{T}$ is a partitioned tensor over $Y,Z,X$ (partitioned in the same way) with a rotated support. Rotation preserves tightness. A partitioned tensor $T$ with parameters $X,Y,Z,I,J,K$ is \emph{symmetric} if $I=J=K$ and $T_{(i,j,k)}$ is equivalent to $T_{(j,k,i)}$ for each $(i,j,k)\in\supp(T)$. For example, $\tcw$ is symmetric. 

The definition of value given in Section \ref{sec:value} is in terms of degeneration. However, all constructions below use a very specific form of degeneration, partitioned restriction.

\begin{definition} \label{def:partitioned-restriction}
 Let $T$ be a partitioned tensor over $X = \bigcup_{i\in I} X_i$, $Y = \bigcup_{j\in J} Y_j$, $Z = \bigcup_{k \in K} Z_k$. A \emph{partitioned restriction} of $T$ is a tensor $T'$ over $X' = \bigcup_{i \in I'} X_i$, $Y = \bigcup_{j \in J'} Y_j$, $Z = \bigcup_{k \in K'} Z_k$ (with the induced partitions), for some $I'\subseteq I$, $J'\subseteq J$, $K'\subseteq K$, obtained from $T$ by zeroing all variables in $X \setminus X',Y \setminus Y',Z \setminus Z'$.
\end{definition}

In other words, a tensor $T'$ is a \emph{partitioned restriction} of a partitioned tensor $T$ if it is obtained from $T$ by zeroing groups of variables.


\subsection{The laser method} \label{sec:laser}
The asymptotic sum inequality, or more precisely its generalization given in Section \ref{sec:value}, can be used to derive an upper bound on $\omega$ for partitioned tensors in which the set of index triples corresponding to the constituent tensors is strongly disjoint. The laser method, invented by Strassen~\cite{Strassen87}, is a general method to analyze partitioned tensors when this condition does not hold. The method has been further developed by Coppersmith and Winograd~\cite{CoppersmithWinograd}, and received its definitive form by Stothers~\cite{DavieStothers}, in the case of tight partitioned tensors. In this subsection we describe this method.

Several of the results appearing here have not appeared explicitly in prior literature, and their proofs are given in the appendix. These include Theorem~\ref{thm:cw-method}, the second half of Theorem~\ref{thm:cw-formula}, and Theorem~\ref{thm:cw-formula-exact}.

The key idea of the laser method is to obtain a lower bound on $\Vg_{\rho,N}(T)$ by considering partitioned restrictions of $T^{\otimes N}$. It is useful to abstract this idea by defining a restricted notion of value. First we define the notion of a partitioned tensor with lower bounds on the value of its constituent tensors.

\begin{definition} \label{def:cw-est}
 An \emph{estimated partitioned tensor} is a partitioned tensor $T = \sum_s T_s$ along with a function $\Val_\rho(T_s)$ for any $s \in \supp(T)$ (the \emph{estimated value}) mapping $[2,3]$ to the non-negative reals. If $T_s$ is a matrix multiplication tensor, then we insist that $\Val_\rho(T_s) = \Vol(T_s)^{\rho/3}$.
\end{definition}

If $T$ is an estimated partitioned tensor then its rotation $\rot{T}$ can be viewed as an estimated partitioned tensor by using the same values.
If $T,T'$ are estimated partitioned tensors then we can view their product $T\otimes T'$ as an estimated partitioned tensor by defining $\Val_\rho(T_s \otimes T'_{s'}) = \Val_\rho(T_s) \Val_\rho(T'_{s'})$.
We can now define the \pr value.

\begin{definition} \label{def:cw-value}
 Let $T$ be an estimated partitioned tensor. Given $\rho\in[2,3]$ and $N\geq 1$, let $\Vcw_{\rho,3N}(T)$ be the maximum of $\sum_{s \in \supp(T')} \Val_\rho(T'_s)$ over all partitioned restrictions $T'$ of $(T\otimes\rot{T}\otimes\rott{T})^{\otimes N}$ whose support is strongly disjoint.
 The \emph{\pr value} of $T$ is the function
\[ \Vcw_\rho(T) = \lim_{N\to\infty} \Vcw_{\rho,3N}(T)^{1/3N}. \]
\end{definition}

In other words, in order to compute $\Vcw_{\rho,3N}(T)$ we consider all possible ways of zeroing blocks of variables in $(T\otimes\rot{T}\otimes\rott{T})^{\otimes N}$ such that all surviving constituent tensors have distinct $x$-indices, $y$-indices and $z$-indices, and maximize over the value of $\sum_{s \in S} \Val_\rho(T'_s)$, where $S$ is the set of surviving index triples.

The idea is to choose for $\Val_\rho(T_s)$ some lower bound on $\Vg(T_s)$. For example, if $T_s$ is a matrix multiplication tensor then we can choose $\Val_\rho(T_s) = \Vol(T_s)^{\rho/3}$. A somewhat subtle application of the generalized asymptotic sum inequality then implies the following.

\begin{theorem} \label{thm:cw-method}
Let $T$ be an estimated partitioned tensor and $\rho \in [2,3]$. If $\Val_\rho(T_s) \leq \Vg_\rho(T_s)$ for all $s \in \supp(S)$ then $\Vcw_\rho(T) \leq \Vg_\rho(T)$, and in particular $\Vcw_\omega(T) \leq \bR(T)$.
\end{theorem}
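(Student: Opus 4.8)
The plan is to establish $\Vcw_\rho(T) \le \Vg_\rho(T)$ by showing that any partitioned restriction of $(T\otimes\rot{T}\otimes\rott{T})^{\otimes N}$ with strongly disjoint support whose summands have the claimed values can be converted, up to an asymptotically negligible loss, into a genuine degeneration of a large tensor power isomorphic to a direct sum of matrix multiplication tensors with the same total value. The statement $\Vcw_\omega(T) \le \bR(T)$ will then follow immediately from Lemma~\ref{lem:value-properties}(3) applied to $\Vg_\omega(T)$.

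The key steps, in order, are as follows. First, fix $\rho$, fix $N$, and fix a partitioned restriction $T'$ of $(T\otimes\rot{T}\otimes\rott{T})^{\otimes N}$ with strongly disjoint support $S$; by definition $T' \approx \bigoplus_{s\in S} T'_s$ where the $T'_s$ are the surviving constituent tensors and the decomposition is a genuine direct sum (disjoint $x$-, $y$-, $z$-indices). Second, for each surviving constituent tensor $T'_s$, use the hypothesis $\Val_\rho(T'_s) \le \Vg_\rho(T'_s)$ together with the definition of $\Vg$ (Definition~\ref{def:value}, as refined by Stothers so that one may require all the matrix multiplication tensors in the degeneration to have the same dimension triple) to find, for a large auxiliary parameter $M$, a degeneration of $(T'_s\otimes\rot{T'_s}\otimes\rott{T'_s})^{\otimes M}$ isomorphic to a direct sum of $L_s$ copies of a single matrix multiplication tensor $\mm{n_s,m_s,p_s}$ with $L_s (n_sm_sp_s)^{\rho/3} \ge (\Val_\rho(T'_s) - o(1))^{3M}$ — the subtlety being that we need these to line up dimensionally across the different $s$, which is where the "same dimension triple" refinement and a further common-power trick (raising each to a suitable power so that all the $(n_s,m_s,p_s)$ become equal) are used. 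Third, take the tensor power $(T')^{\otimes 3M}$: because $T' \approx \bigoplus_s T'_s$ is a direct sum, a standard ``degeneration of a direct sum'' argument — selecting one block at a time and zeroing the rest, exactly the combinatorial restriction step used in the laser method — shows that $(T')^{\otimes 3M}$ degenerates to a direct sum that contains, for each multi-index $(s_1,\dots,s_{3M})$ of surviving constituents, a block degenerating to the tensor product of the corresponding $(T'_{s_t}\otimes\rot{T'_{s_t}}\otimes\rott{T'_{s_t}})$-degenerations; multiplicativity of degeneration and of volume then turns this into a big direct sum of equal-shape matrix multiplication tensors whose total $\rho/3$-power value is at least $\big(\sum_s \Val_\rho(T'_s)\big)^{3M}(1-o(1))$. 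Fourth, unwind the power: since $(T')^{\otimes 3M}$ is a partitioned restriction (hence degeneration) of $(T\otimes\rot{T}\otimes\rott{T})^{\otimes 3NM}$, transitivity of degeneration gives $\Vg_{\rho, 3NM}(T) \ge \big(\sum_s \Val_\rho(T'_s)\big)^{3M}(1-o(1))$; taking $3NM$-th roots and then $M \to \infty$ yields $\Vg_\rho(T)^{N} \ge \sum_s \Val_\rho(T'_s)$, and maximizing over $T'$ and taking $N$-th roots gives $\Vg_\rho(T) \ge \Vcw_\rho(T)$.

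The main obstacle I expect is the bookkeeping in the third step — making the ``degeneration of a direct sum of tensors each carrying its own value'' argument rigorous. The cleanest route is to invoke Lemma~\ref{lem:value-properties}(2), which already packages super-multiplicativity and super-additivity of $\Vg_\rho$ under $\oplus$ and $\otimes$: since $T' \approx \bigoplus_s T'_s$ we get $\Vg_\rho(T') \ge \sum_s \Vg_\rho(T'_s) \ge \sum_s \Val_\rho(T'_s)$, and since $T'$ is a partitioned restriction of $(T\otimes\rot{T}\otimes\rott{T})^{\otimes N}$ it is in particular a degeneration of it, so $\Vg_\rho\big((T\otimes\rot{T}\otimes\rott{T})^{\otimes N}\big) \ge \Vg_\rho(T')$; combining with $\Vg_\rho\big((T\otimes\rot{T}\otimes\rott{T})^{\otimes N}\big) \le \Vg_\rho(T)^{N}$ (which itself needs a short argument relating the value of a power to a power of the value — again obtainable from the definition and the limit) closes the loop without re-deriving the multiplicativity estimates by hand. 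The one genuinely delicate point that remains is justifying $\Vg_\rho(T^{\otimes N}) \le N$-th power relation in the direction we need and confirming that ``partitioned restriction'' does imply ``degeneration'' in the sense of Definition~\ref{def:value} (it does, since zeroing variables is a restriction, hence a degeneration); everything else is assembling the pieces already stated in Section~\ref{sec:value} and Section~\ref{sec:partitioned}.
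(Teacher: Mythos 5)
Your ``cleanest route'' is essentially the paper's own proof: the paper likewise takes the witnessing partitioned restriction $T'$ of $(T\otimes\rot{T}\otimes\rott{T})^{\otimes N}$, notes that a partitioned restriction is in particular a degeneration, uses super-additivity and super-multiplicativity of $\Vg_\rho$ (Lemma~\ref{lem:value-properties}) together with the hypothesis $\Val_\rho(T_s)\leq\Vg_\rho(T_s)$ to get $\Vg_\rho\big((T\otimes\rot{T}\otimes\rott{T})^{\otimes N}\big) \geq \sum_{s}\Val_\rho(T'_s)$, and then converts this into a bound on $\Vg_\rho(T)$ by a root-and-limit step. The only structural difference is that the paper first fixes a maximizing distribution $P$ via Theorem~\ref{thm:cw-formula-exact} and works with $\Vcw_{\rho,P,3N}(T)$, which your direct argument with $\Vcw_{\rho,3N}(T)$ avoids harmlessly; your first, longer route (rebuilding common-shape matrix multiplication degenerations by hand) is just an unwinding of Lemma~\ref{lem:value-properties} and is rightly superseded by the clean route.

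There is, however, a concrete slip in the bookkeeping: your normalization is off by a factor of $3$, and one auxiliary inequality is false as stated. The tensor $(T\otimes\rot{T}\otimes\rott{T})^{\otimes N}$ has $3N$ factors, each of value $\Vg_\rho(T)$, so super-multiplicativity already forces $\Vg_\rho\big((T\otimes\rot{T}\otimes\rott{T})^{\otimes N}\big) \geq \Vg_\rho(T)^{3N}$; the inequality you invoke, $\Vg_\rho\big((T\otimes\rot{T}\otimes\rott{T})^{\otimes N}\big) \leq \Vg_\rho(T)^{N}$, therefore cannot hold once $\Vg_\rho(T)>1$. The correct statement is $\Vg_\rho\big((T\otimes\rot{T}\otimes\rott{T})^{\otimes N}\big) = \Vg_\rho(T)^{3N}$, which follows from Definition~\ref{def:value}: $\Vg_{\rho,3M}$ of this tensor is computed from degenerations of $(T\otimes\rot{T}\otimes\rott{T})^{\otimes 3NM}$, i.e.\ it equals $\Vg_{\rho,9NM}(T)$, and one passes to the limit along this subsequence (correspondingly, the subscript $\Vg_{\rho,3NM}$ in your fourth step should be $\Vg_{\rho,9NM}$ under the paper's convention). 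With the exponents corrected you obtain $\Vg_\rho(T)^{3N}\geq\Vcw_{\rho,3N}(T)$, and taking $3N$-th roots and $N\to\infty$ gives the theorem; as literally written, your chain ($\Vg_\rho(T)^{N}\geq\sum_s\Val_\rho(T'_s)$, then $N$-th roots) would instead deliver $\Vg_\rho(T)\geq\Vcw_\rho(T)^{3}$, which is not the statement and is false in general. This is a fixable normalization error rather than a missing idea, but the fix is exactly the ``short argument'' you deferred, so it needs to be spelled out.
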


Le~Gall gave a lower bound on the \pr value of tight estimated partioned tensors, which is tight in many cases. First we need to define a penalty term.

\begin{definition}[\cite{LeGall}] \label{def:dist}
 Let $T$ be a partitioned tensor over $X,Y,Z$, where $X = \bigcup_{i\in I} X_i$, $Y = \bigcup_{j\in J} Y_j$, $Z = \bigcup_{k\in K} Z_k$. The set $\dist(T)$ consists of all probability distributions over $\supp(T)$. If $T$ is symmetric then the set $\sdist(T)$ consists of all symmetric probability distributions over $\supp(T)$, that is, ones satisfying $P(i,j,k) = P(j,k,i)$ for all $(i,j,k) \in \supp(T)$.

 For a distribution $P \in \dist(T)$, its marginals to $I,J,K$ are denoted $P_1,P_2,P_3$.
 Two distributions $P,Q \in \dist(T)$ are \emph{compatible} if their marginals to $I,J,K$ are identical. The \emph{compatibility penalty} of $P \in \dist(T)$ is the quantity
\[
\Gamma_T(P) = \max_{Q} H(Q) - H(P),
\]
where the maximum is over all the distributions  $Q\in \dist(T)$ that are compatible with $P$.
\end{definition}

Note that $\Gamma_T(P) \geq 0$ always. In simple cases, two distributions $P,Q \in \dist(T)$ are compatible if and only if they are equal. This is the case for the Coppersmith--Winograd tensor, for example. For such partitioned tensors, $\Gamma_T(P) = 0$ for all $P \in \dist(T)$.
Now we can give the full theorem.

\begin{theorem}[\cite{LeGall}] \label{thm:cw-formula}
 Let $T$ be a tight estimated partitioned tensor. For any $\rho \in [2,3]$ we have
\begin{align*}
 \log \Vcw_\rho(T) &\geq \max_{P \in \dist(T)} \sum_{\ell=1}^3 \frac{H(P_\ell)}{3} + \EE_{s \sim P}[\log \Val_\rho(T_s)] - \Gamma_T(P), \\
 \log \Vcw_\rho(T) &\leq \max_{P \in \dist(T)} \sum_{\ell=1}^3 \frac{H(P_\ell)}{3} + \EE_{s \sim P}[\log \Val_\rho(T_s)].
\end{align*}
 When $T$ is symmetric, we can replace $\dist(T)$ with $\sdist(T)$, and the first summand with $H(P_1)$.
\end{theorem}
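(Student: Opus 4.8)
\emph{Proof plan.} This is a known result, and the plan is to reconstruct the argument of Coppersmith and Winograd, as generalized by Stothers and Le~Gall. There are two independent inequalities. The upper bound is essentially the counting bound for $\Vcw_\rho(\tcw)$ sketched in Section~\ref{sec:summary-cw}, carried out for an arbitrary tight estimated partitioned tensor; the lower bound is the laser method proper and is the substantive direction.

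\emph{Upper bound.} Fix $N$ and a partitioned restriction $T'$ of $(T\otimes\rot{T}\otimes\rott{T})^{\otimes N}$ with strongly disjoint support $S$. Identifying $\supp(\rot{T})$ and $\supp(\rott{T})$ with $\supp(T)$ via rotation, each constituent of $(T\otimes\rot{T}\otimes\rott{T})^{\otimes N}$ is indexed by an $N$-tuple of triples from $\supp(T)^3$, and I classify the members of $S$ by their \emph{type}: the triple $(P^{(1)},P^{(2)},P^{(3)})$ of empirical distributions on $\supp(T)$ of the three coordinate sequences. There are only $\mathrm{poly}(N)$ types. For a fixed type, multiplicativity of the estimated value (Definition~\ref{def:cw-est}) together with $\Val_\rho(\rot{T}_s)=\Val_\rho(T_s)$ gives $\log\Val_\rho(T'_s)=N\sum_{r=1}^3\EE_{s\sim P^{(r)}}\log\Val_\rho(T_s)$ for every $s\in S$ of that type; and since strong disjointness forces distinct $x$-indices, the number of such $s$ is at most the number of distinct $x$-indices of that type, which, tracking which marginal of $\supp(T)$ each rotated copy contributes to the $x$-side, is $\exp_2\bigl(N(H(P^{(1)}_1)+H(P^{(2)}_2)+H(P^{(3)}_3))\bigr)$, and symmetrically at most the two analogous products coming from the $y$- and $z$-sides. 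Taking the least of these three bounds, and $\min\le$ average, bounds the count by $\exp_2\bigl(\tfrac{N}{3}\sum_{r,\ell}H(P^{(r)}_\ell)\bigr)$. Summing over the $\mathrm{poly}(N)$ types, taking logarithms, dividing by $3N$ and letting $N\to\infty$ yields $\log\Vcw_\rho(T)\le\max_{(P^{(r)})}\sum_{r=1}^3\bigl[\tfrac13\sum_\ell H(P^{(r)}_\ell)+\EE_{s\sim P^{(r)}}\log\Val_\rho(T_s)\bigr]$; the bracketed quantity is separable in $P^{(1)},P^{(2)},P^{(3)}$, so the maximum is attained at $P^{(1)}=P^{(2)}=P^{(3)}=P$, which is the claimed bound. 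For symmetric $T$ one checks an optimizing $P$ may be taken symmetric, giving $H(P_1)=H(P_2)=H(P_3)$.

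\emph{Lower bound.} Since the objective is continuous in $P$ (each term is, and $\max_Q H(Q)$ over the polytope of distributions compatible with $P$ depends continuously on the marginals of $P$), it suffices to treat $P$ with rational coordinates; fix such a $P$ attaining the maximum (symmetric if $T$ is), and take $N$ large and divisible enough that $NP$ and its marginals are integral. Work in $(T\otimes\rot{T}\otimes\rott{T})^{\otimes N}$, or in $T^{\otimes 3N}$ if $T$ is symmetric; the role of this symmetrization is to make the $x$-, $y$- and $z$-index spaces relevant to $P$ all of the same log-size $N\sum_\ell H(P_\ell)$. First apply the partitioned restriction that zeros every variable group not meeting any constituent whose index triple has type $P$ in each of the three slots; by multiplicativity each surviving constituent has estimated value $\exp_2\bigl(3N\EE_{s\sim P}\log\Val_\rho(T_s)\bigr)$, but the survivors are far from strongly disjoint. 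Now apply the Coppersmith--Winograd zeroing construction --- organizing the coordinates into blocks and using sets of integers free of three-term arithmetic progressions --- to extract from them a strongly disjoint subfamily. By tightness, the third index of a constituent is determined by the other two, so the constituents sharing any fixed index form a cluster whose log-size is governed by the distributions $Q$ on $\supp(T)$ compatible with $P$; the construction retains one constituent per kept index at a cost of a $\exp_2(-3N\Gamma_T(P)-o(N))$ factor, where $\Gamma_T(P)=\max_Q H(Q)-H(P)$ measures exactly the entropy lost because the marginals do not pin down the joint distribution. The resulting partitioned restriction $T'$ thus has $\log\bigl(\sum_{s\in\supp(T')}\Val_\rho(T'_s)\bigr)\ge N\sum_\ell H(P_\ell)-3N\Gamma_T(P)+3N\EE_{s\sim P}\log\Val_\rho(T_s)-o(N)$, and Definition~\ref{def:cw-value} together with $N\to\infty$ gives the first displayed inequality; the symmetric case runs the same argument with $T^{\otimes 3N}$ and symmetric $P$, replacing $\tfrac13\sum_\ell H(P_\ell)$ by $H(P_1)$. (That the construction is a genuine partitioned restriction, hence a legal degeneration, is what lets us invoke Theorem~\ref{thm:cw-method} or, equivalently, the generalized asymptotic sum inequality.)

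\emph{Main obstacle.} Every step except the Coppersmith--Winograd zeroing reduces to routine multinomial and entropy estimates. The crux is that step and the exact accounting of the penalty: upgrading the arithmetic-progression-free selection to a genuinely strongly disjoint subfamily (a deletion argument turning ``few expected collisions'' into ``none''), and verifying that the cluster of constituents around a fixed index has log-size precisely $3N\Gamma_T(P)+o(N)$ --- this is where tightness and the definition of compatibility do essential work, and where the penalty $\Gamma_T(P)$ (refining the earlier analyses of Coppersmith--Winograd and Stothers) originates. A secondary nuisance is carrying the rotation bookkeeping --- how marginals and compatibility transport between $T$ and $T\otimes\rot{T}\otimes\rott{T}$ --- and, for the symmetric refinement, checking that symmetrizing a distribution does not decrease the objective.
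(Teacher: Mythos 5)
Your upper bound is essentially the paper's own proof of that direction. The paper (in the appendix) first reduces to a single distribution via Theorem~\ref{thm:cw-formula-exact} and then bounds $|\supp_P(T')|$ by the product of three multinomial coefficients using strong disjointness of the $x$-indices; you instead split $\supp(T')$ into the polynomially many empirical types $(P^{(1)},P^{(2)},P^{(3)})$ of the three rotated slots and use separability to collapse to a single $P$. These are the same counting argument in slightly different packaging, and your bookkeeping of which marginal each rotated copy contributes to the $x$-, $y$-, $z$-sides, the use of $\Val_\rho(\rot{T}_s)=\Val_\rho(T_s)$ (built into how $\rot{T}$ inherits its estimates), and the concavity/symmetrization step for symmetric $T$ all match the paper.

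For the lower bound, however, be aware that the paper does not prove it: that direction is quoted from Le~Gall (with the construction going back to Coppersmith--Winograd and to Cohn--Kleinberg--Szegedy--Umans), so there is no in-paper argument to compare against, and your treatment of it is a plan rather than a proof. The route you describe (restrict to the type-$P$ marginal blocks by a partitioned restriction, use tightness to run the Salem--Spencer hashing, prune to a strongly disjoint family, absorb the loss into $\Gamma_T(P)$) is the correct known route, but the step you yourself flag as the main obstacle --- the hashing/pruning construction and the exact entropy accounting that produces the penalty --- is exactly where the content of the inequality lies, and it is asserted rather than carried out. One concrete caution for when you do carry it out: the cluster of surviving constituents sharing a fixed $x$-index has log-size on the order of $3N\bigl(\max_Q H(Q)-H(P_1)\bigr)$ over compatible $Q$, not $3N\Gamma_T(P)$; the penalty $\Gamma_T(P)=\max_Q H(Q)-H(P)$ only appears after comparing the number of retained triples with the number of type-$P$ triples per index, and note also that zeroing "everything not of type $P$" can only be done at the level of variable groups with the wrong marginal type, so constituents of every compatible joint type survive --- which is precisely why the penalty arises. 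As it stands, your proposal reproves the direction the paper proves and sketches, without completing, the direction the paper cites.
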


Le~Gall actually proved only the upper bound. The lower bound was proved by Cohn, Kleinberg, Szegedy and Umans~\cite{CKSU} in a special case, but their method easily extends to the general case, as shown in the appendix.

For instance, applying Theorem~\ref{thm:cw-formula} for the partitioned tensor $\tcwq[6]$ and $\Val_\rho(T_s) = \Vol(T_s)^{\rho/3}$ for all $s\in\supp(\tcwq[6])$, Coppersmith and Winograd~\cite{CoppersmithWinograd} obtained the bound $\omega < 2.3872$. Since $\Gamma_{\tcwq[6]} \equiv 0$, this bound is the optimal bound which can be obtained using the \pr value.
Theorem~\ref{thm:cw-formula} is proved by analyzing an ancillary quantity, the \pr value with respect to a distribution.

\begin{definition} \label{def:cw-value-dist}
 Let $T$ be an estimated partitioned tensor, and let $P \in \dist(T)$. For each $N$, let $\prnd{N}{P} \in \Int^{\supp(T)}$ be some vector of non-negative integers summing to $N$ obtained by canonically rounding the real vector $N\cdot P$ so that it sums to $N$. For a partitioned restriction $T'$ of $(T\otimes\rot{T}\otimes\rott{T})^{\otimes N}$, let $\supp_P(T')$ consist of all vectors in $\supp(T')$ in which for each $s \in \supp(T)$, the factors $T_s,\rot{T_s},\rott{T_s}$ (constituent tensors of $T,\rot{T},\rott{T}$, respectively) appear exactly $(\prnd{N}{P})(s)$ times each.

 Given $\rho\in[2,3]$ and $N\geq 1$, let $\Vcw_{\rho,P,3N}(T)$ be the maximum of $\sum_{s \in \supp_P(T')} \Val_\rho(T'_s)$ over all partitioned restrictions $T'$ of $(T\otimes\rot{T}\otimes\rott{T})^{\otimes N}$ whose support is strongly disjoint.
 The \emph{\pr value of $T$ with respect to $P$} is the function
\[ \Vcw_{\rho,P}(T) = \lim_{N\to\infty} \Vcw_{\rho,P,3N}(T)^{1/3N}. \]
\end{definition}

We show in the appendix that the limit always exists, and prove the following crucial property of this quantity as well.

\begin{theorem} \label{thm:cw-formula-exact}
 Let $T$ be an estimated partitioned tensor. For all $\rho\in[2,3]$ we have
\[
 \Vcw_\rho(T) = \max_{P \in \dist(T)} \Vcw_{\rho,P}(T).
\]
\end{theorem}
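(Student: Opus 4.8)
The inequality $\Vcw_\rho(T)\ge\max_{P\in\dist(T)}\Vcw_{\rho,P}(T)$ is essentially a tautology: for any $P\in\dist(T)$ and any $N$, the partitioned restriction $T'$ maximizing $\sum_{s\in\supp_P(T')}\Val_\rho(T'_s)$ in the definition of $\Vcw_{\rho,P,3N}(T)$ is also one of the restrictions competing in the definition of $\Vcw_{\rho,3N}(T)$, and since $\supp_P(T')\subseteq\supp(T')$ and all estimated values are nonnegative we get $\Vcw_{\rho,P,3N}(T)\le\Vcw_{\rho,3N}(T)$. Taking $3N$-th roots and letting $N\to\infty$ gives $\Vcw_{\rho,P}(T)\le\Vcw_\rho(T)$ for every $P$. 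So the content of the theorem is the reverse inequality, which I would prove by a pigeonhole-on-types argument followed by a rotational symmetrization.

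Fix $N$, write $\auxT:=T\otimes\rot{T}\otimes\rott{T}$, and let $T'$ be a partitioned restriction of $\auxT^{\otimes N}$ with strongly disjoint support attaining $\Vcw_{\rho,3N}(T)=\sum_{s\in\supp(T')}\Val_\rho(T'_s)$. Each index triple in $\supp(T')$ is an $N$-term sequence of constituents of $\auxT$; call its \emph{type} the empirical distribution of that sequence, an element of $\dist(\auxT)$, i.e.\ a probability distribution over $\supp(\auxT)$, a set of size $|\supp(T)|^3$. There are only $(N+1)^{|\supp(T)|^3}=\mathrm{poly}(N)$ possible types, so by averaging some type $\tau$ has its class $\classtensor:=\{\,s\in\supp(T'):\text{type of }s=\tau\,\}$ contributing a $1/\mathrm{poly}(N)$ fraction: $\sum_{s\in\classtensor}\Val_\rho(T'_s)\ge\Vcw_{\rho,3N}(T)/\mathrm{poly}(N)$. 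The set $\classtensor$ is strongly disjoint, being a subset of $\supp(T')$. The obstruction to finishing immediately is that the three ``per-slot'' marginals of $\tau$ — recording how often each $s\in\supp(T)$ is used in the $T$-slots, in the $\rot{T}$-slots, and in the $\rott{T}$-slots — need not coincide, so $\classtensor$ is not yet a witness for $\Vcw_{\rho,P,\cdot}(T)$ for any single $P\in\dist(T)$ in the sense of Definition~\ref{def:cw-value-dist}.

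I would repair this by symmetrizing over rotations. Since $\rot{\auxT}\approx\auxT$ (tensor product is commutative up to equivalence, so $\auxT$ is symmetric), the cube $T'\otimes\rot{T'}\otimes\rott{T'}$ can be regarded as a partitioned restriction of $\auxT^{\otimes 3N}$; a short case analysis on the three index coordinates shows that if $\supp(T')$ is strongly disjoint then so is the support of this cube. Consider the subfamily of its index triples of the form $(s_1,s_2,s_3)$ with $s_1,s_2,s_3\in\classtensor$. Writing $\tau_1,\tau_2,\tau_3\in\dist(T)$ for the three marginals of $\tau$, the rotation operator cyclically permutes these three coordinates, so each of the three per-slot type vectors of such a triple equals $N(\tau_1+\tau_2+\tau_3)$; hence, setting $P:=\tfrac13(\tau_1+\tau_2+\tau_3)\in\dist(T)$ — for which $\prnd{3N}{P}=N(\tau_1+\tau_2+\tau_3)$ exactly, since this vector is already integral — the whole subfamily lies in $\supp_P(T'\otimes\rot{T'}\otimes\rott{T'})$. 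Using multiplicativity of $\Val_\rho$ under tensor product,
\[
 \Vcw_{\rho,P,9N}(T)\;\ge\;\sum_{s_1,s_2,s_3\in\classtensor}\Val_\rho(T'_{s_1})\Val_\rho(T'_{s_2})\Val_\rho(T'_{s_3})\;=\;\Bigl(\sum_{s\in\classtensor}\Val_\rho(T'_s)\Bigr)^{3}\;\ge\;\Bigl(\frac{\Vcw_{\rho,3N}(T)}{\mathrm{poly}(N)}\Bigr)^{3}.
\]
Taking $9N$-th roots gives $\Vcw_{\rho,P_N,9N}(T)^{1/9N}\ge\Vcw_{\rho,3N}(T)^{1/3N}/\mathrm{poly}(N)^{1/9N}$, where $P_N\in\dist(T)$ is the distribution just produced. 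Since $\dist(T)$ is compact, a subsequence of $(P_N)$ converges to some $P^\ast$; combining the existence of the limit $\Vcw_{\rho,P^\ast}(T)$ with a continuity/robustness property of $P\mapsto\Vcw_{\rho,P}(T)$ (altering an $o(1)$ fraction of the $3M$ slot assignments changes $\log\Vcw_{\rho,P,3M}(T)$ by $o(M)$), both of which I would establish in the appendix alongside this theorem, yields $\limsup_N\Vcw_{\rho,P_N,9N}(T)^{1/9N}\le\Vcw_{\rho,P^\ast}(T)$. As $\mathrm{poly}(N)^{1/9N}\to1$, this gives $\Vcw_\rho(T)\le\Vcw_{\rho,P^\ast}(T)\le\max_P\Vcw_{\rho,P}(T)$; together with the easy direction this proves equality and shows that the maximum is attained (by $P^\ast$).

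The main obstacle is the third step: the per-slot marginals of the dominant type are in general unequal, so the pigeonhole argument alone does not produce a witness for a single $P$, and the rotational symmetrization of $\auxT$ is what fixes this, at the harmless cost of replacing $N$ by $3N$. The remaining points — that strong disjointness is preserved under the relevant tensor products, the behavior of the rounding operation $\prnd{\cdot}{\cdot}$, and the compactness-plus-continuity step closing the limit — are routine but depend on the companion results (existence of all the limits, continuity in $P$) whose proofs I would place in the same part of the appendix.
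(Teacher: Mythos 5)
Your proposal is correct and follows essentially the same route as the paper's proof: a pigeonhole over the polynomially many types, a rotational symmetrization via $T'\otimes\rot{T'}\otimes\rott{T'}$ with the averaged distribution $\tfrac13(\tau_1+\tau_2+\tau_3)$ (the paper's $Q=(P+\rott{P'}+\rot{P''})/3$), and passage to an accumulation point in $\dist(T)$. The only difference is cosmetic: the paper first takes limits of a three-distribution variant $\Vcw_{\rho,P,P',P''}$ and symmetrizes afterwards, whereas you symmetrize at finite $N$ and then invoke a continuity-in-$P$ step, which matches the level of detail of the paper's own appendix argument.
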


Theorem~\ref{thm:cw-formula} is obtained by giving lower and upper bounds for $\Vcw_{\rho,P,3N}(T)$. 

\subsection{Recursive Coppersmith--Winograd construction} \label{sec:cw-powers}

Coppersmith and Winograd~\cite{CoppersmithWinograd} obtained a better bound by considering a repartitioning $\tcw'$ of the partitioned tensor $\tcw^{\otimes 2}$, and applying the laser method to $\tcw'$. Their basic idea is to use the following partition for $X' = X^2$, and matching partitions for $Y^2$ and $Z^2$: $X' = X'_0 \cup X'_1 \cup X'_2 \cup X'_3 \cup X'_4$, where
\begin{gather*}
X'_0 = X_0 \times X_0, \qquad X'_1 = (X_0 \times X_1) \cup (X_1  \times X_0), \\
X'_2 = (X_0 \times X_2) \cup (X_1 \times X_1) \cup (X_2 \times X_0), \\
X'_3 = (X_1 \times X_2) \cup (X_2 \times X_1), \qquad X'_4 = X_2 \times X_2.
\end{gather*}
The corresponding constituent tensors come in four types:
\begin{enumerate}
 \item $\tcw'_{0,0,4} = \tcw_{0,0,2} \otimes \tcw_{0,0,2} \approx \mm{1,1,1}$.
 \item $\tcw'_{0,1,3} = \tcw_{0,1,1} \otimes \tcw_{0,0,2} + \tcw_{0,0,2} \otimes \tcw_{0,1,1} \approx \mm{1,1,2q}$.
 \item $\tcw'_{0,2,2} = \tcw_{0,1,1} \otimes \tcw_{0,1,1} + \tcw_{0,2,0} \otimes \tcw_{0,0,2} + \tcw_{0,0,2} \otimes \tcw_{0,2,0} \approx \mm{1,1,q^2+2}$.
 \item $\tcw'_{1,1,2} = \tcw_{1,1,0} \otimes \tcw_{0,0,2} + \tcw_{1,0,1} \otimes \tcw_{0,1,1} + \tcw_{0,1,1} \otimes \tcw_{1,0,1} + \tcw_{0,0,2} \otimes \tcw_{1,1,0}$.
\end{enumerate}

The last tensor $\tcw'_{1,1,2}$ is not equivalent to a matrix multiplication tensor, but it can be viewed as a tight partitioned tensor over $\bar{X}_0 \cup \bar{X}_1, \bar{Y}_0 \cup \bar{Y}_1, \bar{Z}_0 \cup \bar{Z}_1 \cup \bar{Z}_2$, where $\bar{X}_i = X_i \times X_{1-i}$, $\bar{Y}_j = Y_j \times Y_{1-j}$, and $\bar{Z}_k = Z_k \times Z_{2-k}$, with the constituent tensors corresponding to the four summands in the formula for $\tcw'_{1,1,2}$. The idea of Coppersmith and Winograd was to analyze $\tcw'$ using Theorem~\ref{thm:cw-formula}, using another application of Theorem~\ref{thm:cw-formula} to get a lower bound on the value of $\tcw'_{1,1,2}$.

More explicitly, Coppersmith and Winograd used Theorem~\ref{thm:cw-formula} to calculate $\Vcw_\rho(\tcw'_{1,1,2}) = 4^{1/3} q^\rho (2 + q^{3\rho})^{1/3}$. They then viewed $\tcw'$ itself as a partitioned tensor, with estimated values $\Val_\rho(\tcw'_{1,1,2}) = \Val_\rho(\tcw'_{1,2,1}) = \Val_\rho(\tcw'_{2,1,1}) = 4^{1/3} q^\rho (2 + q^{3\rho})^{1/3}$; all other constituent tensors $\tcw'_{i,j,k}$ are matrix multiplication tensors, and so by definition their estimated value is $\Val_\rho(\tcw'_{i,j,k}) = \Vol(\tcw'_{i,j,k})^{\rho/3}$. They then applied Theorem~\ref{thm:cw-formula} to obtain some expression for $\Vcw_\rho(\tcw')$. Theorem~\ref{thm:cw-method} shows that $\Vg_\rho(\tcw'_{1,1,2}) \geq \Val_\rho(\tcw'_{1,1,2})$, and another application of the theorem shows that $\Vcw_\omega(\tcw') \leq \bR(\tcw') = (q+2)^2$. Taking $q = 5$ and solving $\Vcw_\alpha(\tcw') = (q+2)^2$, Coppersmith and Winograd obtained the bound $\omega \leq \alpha$, where $\alpha \approx 2.3755$.

Stothers~\cite{Stothers} and Vassilevska-Williams~\cite{Williams} took this approach one step further, by considering a repartitioning $\tcw''$ of $\tcw^{\prime\otimes 2}$, along the lines of the repartitioning of $\tcw^{\otimes2}$ producing $\tcw'$ itself. The partition they use for $X''=X^{\prime2}$ is $X''_0,\ldots,X''_8$, where $X''_i = \bigcup_{i_1+i_2=i} X'_{i_1} \times X'_{i_2}$, the sum being over $0 \leq i_1,i_2 \leq 4$. Similar partitions are used for $Y''$ and $Z''$.
This time we have ten types of constituent tensors (see for example~\cite[\S5]{Williams}):
\begin{gather*}
\tcw''_{0,0,8} \approx \mm{1,1,1}, \qquad \tcw''_{0,1,7} \approx \mm{1,1,4q}, \qquad \tcw''_{0,2,6} \approx \mm{1,1,4+6q^2}, \\ \tcw''_{0,3,5} \approx \mm{1,1,12q+4q^3}, \qquad \tcw''_{0,4,4} \approx \mm{1,1,6+12q^2+q^4},  \\ \tcw''_{1,1,6}, \tcw''_{1,2,5}, \tcw''_{1,3,4}, \tcw''_{2,2,4}, \tcw''_{2,3,3}.
\end{gather*}
The first five tensors, those that contain a $0$ in their annotation, are equivalent to matrix multiplication tensors. The other five are not, and have to be analyzed like $\tcw'_{1,1,2}$ before. We illustrate the analysis using the example of $\tcw''_{1,1,6}$:
\[ \tcw''_{1,1,6} = \tcw'_{0,1,3} \otimes \tcw'_{1,0,3} + \tcw'_{1,0,3} \otimes \tcw'_{0,1,3} + \tcw'_{0,0,4} \otimes \tcw'_{1,1,2} + \tcw'_{1,1,2} \otimes \tcw'_{0,0,4}. \]
As before, we treat this as an estimated tight partitioned tensor over $\bar{X}_0 \cup \bar{X}_1, \bar{Y}_0 \cup \bar{Y}_1, \bar{Z}_2 \cup \bar{Z}_3 \cup \bar{Z}_4$, along similar lines as before. Under this partition, $\tcw''_{1,1,6}$ has four constituent tensors. The first two, $\tcw'_{0,1,3}  \otimes \tcw'_{1,0,3} \approx \tcw'_{1,0,3} \otimes \tcw'_{0,1,3} \approx \mm{2q,1,2q}$, are equivalent to matrix multiplication tensors, and we set their estimated values accordingly to $(4q^2)^{\rho/3}$. The other two, $\tcw'_{0,0,4} \otimes \tcw'_{1,1,2} \approx \tcw'_{1,1,2} \otimes \tcw'_{0,0,4}$, are more complicated, and we assign them estimated value $\Val_\rho(\tcw'_{0,0,4} \otimes \tcw'_{1,1,2}) = \Val_\rho(\tcw'_{0,0,4}) \Val_\rho(\tcw'_{1,1,2}) = 4^{1/3} q^\rho (2 + q^{3\rho})^{1/3}$, where the estimated values on the right-hand side are those of $\tcw'$. Since the value is super-multiplicative, we know that $\Vg_\rho(\tcw'_{0,0,4} \otimes \tcw'_{1,1,2}) \geq \Val_\rho(\tcw'_{0,0,4}) \Val_\rho(\tcw'_{1,1,2})$, and so this setting of the estimated value will allow us to apply Theorem~\ref{thm:cw-method} later on.

The other four complicated tensors are interpreted as estimated tight partitioned tensors in a similar fashion. For each of these, we then apply Theorem~\ref{thm:cw-formula} to compute their \pr value.\footnote{\label{fn:exact-calc} Note that we cannot use Theorem~\ref{thm:cw-formula} to calculate exactly the \pr value, for two reasons: first, the lower bound in Theorem~\ref{thm:cw-formula} is not necessarily tight; and second, the numerical optimization involved is difficult to solve optimally. Whatever quantities we get are lower bounds on the corresponding values according to Theorem~\ref{thm:cw-method}, and we use them as the estimated values of these tensors.} Applying Theorem~\ref{thm:cw-formula} and Theorem~\ref{thm:cw-method} to $\tcw''$ itself allows us to obtain an improved bound on $\omega$, namely $\omega < 2.37293$.

Vassilevska-Williams iterated this procedure once more to obtain the bound $\omega < 2.37287$, and Le~Gall~\cite{LeGall} iterated it twice more to get slightly better bounds (the improvement is only in the seventh digit after the decimal point!). We call this approach the \emph{canonical recursive Coppersmith--Winograd method}. We call the tensor obtained by iterating the construction $d$ times the \emph{canonical $\tcw^{\otimes 2^d}$}. More details on this construction appear in~\cite[\S5]{LeGall}, and the upper bounds on~$\omega$ obtained by this method are presented in Table \ref{tab:simulations-upperbounds}.

\begin{table}[h!]
\renewcommand\arraystretch{1}
\begin{center}
\caption{Upper bounds on $\omega$ obtained by analyzing $\tcw^{\otimes 2^r}$ with the canonical recursive Coppersmith--Winograd method, for several values of $r$ and $q$.}\label{tab:simulations-upperbounds}\vspace{3mm}
\begin{tabular}{c|c|c|c|c|c|c|}
&$r=0$&$r=1$&$r=2$&$r=3$&$r=4$\\
\hline
$q=1$&$3$&$2.8084$&$2.6520$&$2.6324$&$2.6312$\\
$q=2$&$2.6986$&$2.4968$&$2.4707$&$2.4690$&$2.4689$\\
$q=3$&$2.4740$&$2.4116$&$2.4030$&$2.4027$&$2.4027$\\
$q=4$&$2.4142$&$2.3838$&$2.3796$&$2.3794$&$2.3794$\\
$q=5$&$2.3935$&$2.3756$&$2.3730$&$2.3729$&$2.3729$\\
$q=6$&$2.3872$&$2.3755$&$2.3737$&$2.3737$&$2.3737$\\
$q=7$&$2.3875$&$2.3793$&$2.3780$&$2.3779$&$2.3779$\\
$q=8$&$2.3909$&$2.3848$&$2.3838$&$2.3838$&$2.3838$\\
\hline
\end{tabular}
\end{center}
\end{table}

There are two main degrees of freedom in this method. The first, mentioned already by Coppersmith and Winograd, is the method used to repartition the tensor after squaring. When squaring a tensor $T$ with partition $X_0,\ldots,X_D$, the new partition of $X' = X^2$ is into $X'_0,\ldots,X'_{2D}$, where $X_i \times X_j$ is put into $X'_{i+j}$. Coppersmith and Winograd suggest trying out other merging schemes. While in order to apply Theorem~\ref{thm:cw-formula} we need the resulting repartitioning to be tight, on both $T^{\otimes 2}$ and its constituent tensors, one can conceive other repartitionings which could be analyzed differently (but still using the \pr value). The second degree of freedom, which is also mentioned by Coppersmith and Winograd, suggests a different choice of which tensors to multiply each time. The canonical method starts with $\tcw$, computes $\tcw' = \tcw \otimes \tcw$, then $\tcw'' = \tcw' \otimes \tcw'$, and so on. However, instead of choosing $\tcw'' = \tcw' \otimes \tcw'$, we could have chosen $\tcw'' = \tcw' \otimes \tcw$, a choice which would correspond to an analysis of $\tcw^{\otimes 3}$. Calculation reveals that analyzing $\tcw^{\otimes 3}$ does not result in improved bounds, but it is possible that other multiplication schemes would be advantageous.

The method we describe in Section~\ref{sec:merging} will subsume all such multiplication schemes on~$\tcw$. This method actually works on a larger class of multiplication schemes which we now describe formally. Since the description of this class of schemes is not specific to $\tcw$, the presentation below is given for arbitrary estimated partitioned tensors $T$ and $T'$ (in our applications, both $T$ and $T'$ will be powers of $\tcw$).

We start with the concept of repartitioning of a partitioned tensor.

\begin{definition} \label{def:repartitioning}
Let $T$ be a symmetric estimated partitioned tensor over $X,Y,Z$ partitioned as
\[
X = \bigcup_{i\in I} X_i, \hspace{3mm} Y = \bigcup_{j\in J} Y_j, \hspace{3mm}Z = \bigcup_{k\in K} Z_k,
\]
 and let $T'$ be a symmetric estimated partitioned tensor over $X',Y',Z'$ partitioned as
 \[
 X = \bigcup_{i' \in I'} X_{i'}, \hspace{3mm}Y' = \bigcup_{j'\in J'} Y_{j'}, \hspace{3mm}Z' = \bigcup_{k'\in K'} Z_{k'}.
 \]
We also assume that the value estimates of $T$ and $T'$ are symmetric, that is, for each constituent tensor $T_s$ of $T$, $\Val_\rho(T_s) = \Val_\rho(\rot{T_s})$, and similarly for $T'$.

A \emph{repartitioning} $\tilde{T}$ of $T \otimes T'$ is a partitioned tensor over $\tilde{X}=X\times X',\tilde{Y}=Y\times Y',\tilde{Z}=Z\times Z'$ satisfying the following properties:
\begin{itemize}
\item
the partition of $\tilde{X}$ is a coarsening of the partition $\bigcup_{(i,i') \in I\times I'} X_i \times X'_{i'}$,
\item
the partition of $\tilde{Y}$ is a coarsening of the partition $\bigcup_{(j,j') \in J\times J'} Y_j \times Y'_{j'}$,
\item
the partition of $\tilde{Z}$ is a coarsening of the partition $\bigcup_{(k,k') \in K\times K'} Z_k \times Z'_{k'}$.
\end{itemize}
\end{definition}
Note that this definition, applied to the case where both $T$ and $T'$ are powers of $\tcw$, captures the two degrees of freedom discussed above.

Let $\tilde{T}$ be a repartitioning of $T \otimes T'$. Let us use the notations of Definition \ref{def:repartitioning}, and write the corresponding coarser partitions of $\tilde{X}$, $\tilde{Y}$
and~$\tilde{Z}$ as
\[
\tilde{X} = \bigcup_{\tilde{\imath} \in \tilde{I}} \tilde{X}_{\tilde{\imath}}, \hspace{3mm}
\tilde{Y} = \bigcup_{\tilde{\jmath} \in \tilde{J}} \tilde{Y}_{\tilde{\jmath}},\hspace{3mm}
\tilde{Z} = \bigcup_{\tilde{k} \in \tilde{K}} \tilde{Z}_{\tilde{k}}.
\]
For each $\tilde{\imath} \in \tilde{I}$, we denote by $x(\tilde{\imath})$ the subset of  $I\times I'$ such that $\tilde{X}_{\tilde{\imath}} = \bigcup_{(i,i') \in x(\tilde{\imath})} X_i \times X'_{i'}$. We use similar notations for the partitions of $\tilde{Y}$ and $\tilde {Z}$. The idea to derive an upper bound on $\omega$ is, again, to consider $\tilde{T}$ as an estimated partitioned tensor. To do this, we need to define $\Val_\rho(\tilde{T}_{\tilde{\imath},\tilde{\jmath},\tilde{k}})$ for each $(\tilde{\imath},\tilde{\jmath},\tilde{k}) \in \supp(\tilde{T})$. Remember that, since $T$ and $T'$ are estimated partitioned tensors, $\Val_\rho(T_s)$ and $\Val_\rho(T'_{s'})$ are given for the constituent tensors of $T$ and $T'$. We use these quantities to define $\Val_\rho(\tilde{T}_{\tilde{\imath},\tilde{\jmath},\tilde{k}})$, as follows. If $\tilde{T}_{\tilde{\imath},\tilde{\jmath},\tilde{k}}$ is equivalent to a matrix product, we set
\begin{equation}
\Val_\rho(\tilde{T}_{\tilde{\imath},\tilde{\jmath},\tilde{k}})=\Vol(\tilde{T}_{\tilde{\imath},\tilde{\jmath},\tilde{k}})^{\rho/3}.
\end{equation}
Otherwise, we consider $\tilde{T}_{\tilde{\imath},\tilde{\jmath},\tilde{k}}$ itself as an estimated partitioned tensor,
with support
\[
\left\{ ((i,i'),(j,j'),(k,k')) \in x(\tilde{\imath})\times y(\tilde{\jmath})\times z(\tilde{k})
\:\big|\: (i,j,k) \in \supp(T) \textrm{ and }(i',j',k') \in \supp(T')\right\}
\]
and constituent tensors $T_{i,j,k} \otimes T'_{i',j',k'}$ and estimated value $\Val_\rho(T_{i,j,k} \otimes T'_{i',j',k'}) = \Val_\rho(T_{i,j,k}) \times \Val_\rho(T'_{i',j',k'})$ for each constituent tensor, and set
\begin{equation}\label{eq:Valo}
\Val_\rho(\tilde{T}_{\tilde{\imath},\tilde{\jmath},\tilde{k}})= \Vcw_\rho(\tilde{T}_{\tilde{\imath},\tilde{\jmath},\tilde{k}}).
\end{equation}
The resulting tensor $\tilde{T}$ is a symmetric estimated partitioned tensor with symmetric value estimates.

The process we have just described is exactly what is done in the canonical Coppersmith--Winograd method (where both~$T$ and $T'$ are powers of $\tcw$ and the repartitioning sums the two indices of the variables). As was done there, we can apply Theorem~\ref{thm:cw-formula} to compute this partition-restricted value, or a lower bound on it.


We can naturally iterate recursively the above construction, which leads to the following definition.
\begin{definition} \label{def:recursive-repartitioning}
Let $T$ be an estimated partitioned tensor. An estimated partitioned tensor $T'$ is a \emph{recursive repartitioning} of $T$ if there exists a sequence $T_1,\ldots,T_\ell$ such that (i) $T_1 = T$, (ii) $T_\ell = T'$, (iii) for each $i > 1$, there exist $j,k < i$ such that $T_i$ is a repartitioning of $T_j \otimes T_k$.
\end{definition}


In particular, the canonical $\tcw^{\otimes 2^N}$ is a recursive repartitioning of the canonical $\tcw^{\otimes 2^M}$ for all $M \leq N$.
One of the main technical contribution of this paper is defining a notion of value $\Vm_\rho$, in the next section, that satisfies $\Vm_\rho(\tcw^{\otimes D}) \geq \Vcw_\rho(\tcw^{\otimes dD})^{1/d}$ whenever $\tcw^{\otimes dD}$ is a recursive repartitioning of $\tcw^{\otimes D}$.

\section{Merging} \label{sec:merging}
Theorem~\ref{thm:cw-method} allows us to obtain upper bounds on $\omega$ by analyzing powers of the Coppersmith--Winograd tensor. Experimentally, fixing $q$ we find out that the bound on $\omega$ obtained by considering the canonical $\tcw^{\otimes 2^\ell}$ improves as $\ell$ increases, but the root cause of this phenomenon has never been completely explained. Indeed, as mentioned in the introduction, at first glance it seems that considering powers of $\tcw$ should not help at all, since our analysis proceeds by analyzing powers $\tcw^{\otimes N}$ for large $N$; how do we gain anything by analyzing instead large powers of $\tcw^{\otimes 2}$? The improvement results from the fact that when defining $\tcw^{\otimes 2}_s$ for annotations $s$ containing a zero, we merge together several matrix multiplication tensors into one large matrix multiplication tensors.
Inspired by this observation, we define a notion of value which allows merging of matrix multiplication tensors, and show that the method it corresponds to subsumes the analysis of \emph{all} powers of $\tcw$.

The definition is somewhat complicated to allow analysis of powers of $\tcw$, and becomes much simpler when analyzing $\tcw$ itself, or any other tensor whose constituent tensors are all matrix multiplication tensors. For this reason, we start with a simplified version of the definition which only applies to tensors of the latter form, and only then present the general definition.

\begin{definition} \label{def:merging-special}
Let $T$ be a symmetric partitioned tensor, each of whose constituent tensors is a matrix multiplication tensor. For $N \geq 1$, we say that $T'$ is a \emph{consistent restriction} of $T^{\otimes N}$ if for some partitioned restriction $R$ of $T^{\otimes N}$, the following hold:
\begin{enumerate}
\item Each constituent tensor of $T'$ is a sum of constituent tensors of $R$, each constituent tensor of $R$ appearing exactly once as a summand in some constituent tensor of $T'$.
\item Each constituent tensor in $T'$ is equivalent to a matrix multiplication tensor.
\item Distinct constituent tensors of $T'$ have disjoint sets of $x$-variables, $y$-variables and $z$-variables.
\end{enumerate}
Given $\rho \in [2,3]$ and $N \geq 1$, we define $\Vm_{\rho,N}(T)$ to be the maximum of $\sum_{s \in \supp(T')} \Vol(T'_s)^{\rho/3}$ over all consistent restrictions $T'$ of $T^{\otimes N}$.
 The \emph{merging value} of $T$ is the function
\[ \Vm_\rho(T) = \lim_{N\to\infty} \Vm_{\rho,N}(T)^{1/N}. \]
(We show below that the limit exists.)
\end{definition}

In the general case, the definition of \emph{consistent restriction} is somewhat more complicated, since we want to put some restriction on the sets of constituent tensors in $R$ that we allow to merge: we want the non-matrix multiplication tensors to be opaque.
This prompts the following definition.

\begin{definition} \label{def:consistent-sum}
 Let $T$ be a symmetric estimated partitioned tensor, and fix $N \geq 1$. Let $\suppz(T)$ consist of all $s \in \supp(T)$ such that $T_s$ is a matrix multiplication tensor, and let $\suppn(T) = \supp(T) \setminus \suppz(T)$. A \emph{pattern} is a mapping $\pi\colon [N] \to \suppn(T) \cup \{0\}$. A constituent tensor $T_s = \bigotimes_{i=1}^n T_{s_i}$ \emph{conforms} to the pattern $\pi$ if $T_{s_i} = T_{\pi(i)}$ if $\pi(i) \in \suppn(T)$, and $T_{s_i} \in \suppz(T)$ if $\pi(i) = 0$. If $T_s$ conforms to some pattern $\pi$ then we define $\tz{T_s} = \bigotimes_{i\colon \pi(i) = 0} T_{s_i}$ and $\tn{T_s} = \bigotimes_{i\colon \pi(i) \neq 0} T_{s_i}$.

A sum $S = \sum_s T^{\otimes N}_s$ of constituent tensors of $T^{\otimes N}$ is \emph{consistent} if (i) for some pattern $\pi$, all tensors conform to $\pi$, and (ii) $\sum_s \tz{T_s}$ is equivalent to a matrix multiplication tensor, denoted $\tz{S}$. If $S$ is consistent with respect to $\pi$ then we define, for all $\rho \in [2,3]$,
\[ \Val_\rho(S) = \Vol(\tz{S})^{\rho/3} \Val_\rho(\tn{S}), \]
where $\Val_\rho(\tn{S}) = \prod_{i\colon \pi(i) \neq 0} \Val_\rho(T_{\pi(i)})$.
\end{definition}

We can now present the general definition.

\begin{definition} \label{def:merging}
Let $T$ be an estimated partitioned tensor. For $N \geq 1$, we say that $T'$ is a \emph{consistent restriction} of $T^{\otimes N}$ if for some partitioned restriction $R$ of $T^{\otimes N}$, the following hold:
\begin{enumerate}
\item Each constituent tensor of $T'$ is a consistent sum of constituent tensors of $R$, each constituent tensor of $R$ appearing exactly once as a summand in some constituent tensor of $T'$.
\item Distinct constituent tensors of $T'$ have disjoint sets of $x$-variables, $y$-variables and $z$-variables.
\end{enumerate}
Given $\rho \in [2,3]$ and $N \geq 1$, we define $\Vm_{\rho,N}(T)$ to be the maximum of $\sum_{s \in \supp(T')} \Val_\rho(T'_s)$ over all consistent restrictions $T'$ of $T^{\otimes N}$.
 The \emph{merging value} of $T$ is the function
\[ \Vm_\rho(T) = \lim_{N\to\infty} \Vm_{\rho,N}(T)^{1/N}. \]
(We show below that the limit exists.)
\end{definition}

Note that we have only defined the merging value for symmetric tensors, since the tensors $\tcw^{\otimes N}$ are all symmetric. Previously, values of asymmetric versions came up only because we calculated the \pr value recursively. In contrast, the merging value is calculated by a single application of the definition. 

We now prove several simple properties of the merging value, starting with the proof that $\Vm_\rho(T)$ is well-defined.

\begin{lemma} \label{lem:merging-well-defined}
 Let $T$ be a symmetric estimated partitioned tensor, let $\rho \in [2,3]$, and let $N \geq 1$. The limit $\lim_{N\to\infty} \Vm_{\rho,N}(T)^{1/N}$ exists.
\end{lemma}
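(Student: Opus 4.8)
The plan is to establish the existence of the limit by a standard super-multiplicativity (Fekete) argument, after first checking that $\Vm_{\rho,N}(T)$ is ``super-multiplicative up to a polynomial factor''. The key observation is that a consistent restriction of $T^{\otimes N_1}$ and a consistent restriction of $T^{\otimes N_2}$ can be combined, via tensor product, into a consistent restriction of $T^{\otimes(N_1+N_2)}$: tensoring two partitioned restrictions gives a partitioned restriction of $T^{\otimes(N_1+N_2)}$, tensoring the merged constituent tensors gives consistent sums conforming to the concatenated pattern (since tensoring matrix multiplication tensors gives a matrix multiplication tensor, and $\tz{S_1 \otimes S_2} = \tz{S_1} \otimes \tz{S_2}$, $\tn{S_1\otimes S_2} = \tn{S_1}\otimes\tn{S_2}$), and the disjointness of $x$-, $y$-, $z$-variables is preserved under tensor product. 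Using $\Val_\rho(S_1 \otimes S_2) = \Val_\rho(S_1)\Val_\rho(S_2)$ (which follows from $\Vol$ being multiplicative on matrix multiplication tensors and from the multiplicativity built into the definition of $\Val_\rho(\tn{S})$), this shows $\Vm_{\rho,N_1}(T)\,\Vm_{\rho,N_2}(T) \le \Vm_{\rho,N_1+N_2}(T)$.

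First I would make precise the combination step above: given consistent restrictions $T'_1$ of $T^{\otimes N_1}$ (arising from partitioned restriction $R_1$) and $T'_2$ of $T^{\otimes N_2}$ (arising from $R_2$), I would define $T' $ whose constituent tensors are all $S_1 \otimes S_2$ with $S_1$ a constituent tensor of $T'_1$ and $S_2$ a constituent tensor of $T'_2$, and check conditions (1) and (2) of Definition~\ref{def:merging} against the partitioned restriction $R_1 \otimes R_2$ of $T^{\otimes(N_1+N_2)}$. Then I would record the identity $\sum_{s} \Val_\rho(T'_s) = \bigl(\sum_{s_1}\Val_\rho((T'_1)_{s_1})\bigr)\bigl(\sum_{s_2}\Val_\rho((T'_2)_{s_2})\bigr)$, which gives the super-multiplicative inequality. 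Second, I would invoke Fekete's lemma for the super-additive sequence $\log \Vm_{\rho,N}(T)$: since $\Vm_{\rho,N}(T)$ is finite for every $N$ (the number of partitioned restrictions, mergings, and hence consistent restrictions of $T^{\otimes N}$ is finite, so the maximum is over a finite set and is attained) and bounded above by an exponential in $N$ (crudely, by $\bR(T)^N$ times a polynomial, or simply by the total value of $T^{\otimes N}$ without any disjointness constraint), the sequence $\tfrac1N \log \Vm_{\rho,N}(T)$ converges to its supremum, which is exactly the statement that $\lim_{N\to\infty}\Vm_{\rho,N}(T)^{1/N}$ exists.

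The main obstacle I anticipate is purely bookkeeping: verifying carefully that the tensor product of two consistent sums, each conforming to its own pattern, is again a consistent sum conforming to the concatenated pattern on $[N_1+N_2]$, and in particular that ``$\tz{S_1 \otimes S_2}$ is equivalent to a matrix multiplication tensor'' holds because $\tz{S_1}$ and $\tz{S_2}$ are and the tensor product of matrix multiplication tensors is one. One must also confirm that the $\tn{\cdot}$ parts are handled consistently so that the estimated value multiplies correctly; this is immediate from the product formula $\Val_\rho(T_{s}\otimes T'_{s'}) = \Val_\rho(T_s)\Val_\rho(T'_{s'})$ adopted for estimated partitioned tensors. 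There is one mild subtlety worth flagging: the rounding conventions or asymmetry issues that complicated the \pr value do not arise here, because the merging value is defined by a single (non-recursive) maximization over genuine tensor powers $T^{\otimes N}$, so no symmetrization or rotation factors intervene. Once the combination step is written out, Fekete's lemma finishes the proof.
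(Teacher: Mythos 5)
Your proposal is correct and follows essentially the same route as the paper: establish $\Vm_{\rho,N_1+N_2}(T) \geq \Vm_{\rho,N_1}(T)\,\Vm_{\rho,N_2}(T)$ by tensoring consistent restrictions (with the values multiplying), then apply Fekete's lemma. The extra bookkeeping you flag (checking the conditions of Definition~\ref{def:merging} for the product, and the finiteness/upper bound needed for Fekete) is exactly what the paper leaves implicit with ``it is not hard to construct.''
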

\begin{proof}
 By Fekete's lemma, it is enough to show that $\Vm_{\rho,N_1+N_2}(T) \geq \Vm_{\rho,N_1}(T) \Vm_{\rho,N_2}(T)$. Indeed, given a consistent restriction $T'_1$ of $T^{\otimes N_1}$ and a consistent restriction $T'_2$ of $T^{\otimes N_2}$, it is not hard to construct a consistent restriction $T'_1\otimes T'_2$ of $T^{\otimes (N_1+N_2)}$ satisfying
\[ \sum_{(s_1,s_2) \in \supp(T'_1 \otimes T'_2)} \Val_\rho(T'_{1;s_1} \otimes T'_{2;s_2}) = \sum_{s_1 \in \supp(T'_1)} \Val_\rho(T'_{1;s_1}) \times \sum_{s_2 \in \supp(T'_2)} \Val_\rho(T'_{2;s_2}), \]
showing that $\Vm_{\rho,N_1+N_2}(T) \geq \Vm_{\rho,N_1}(T) \Vm_{\rho,N_2}(T)$.
\end{proof}

%

Next, the merging value is super-multiplicative and super-additive. Our proof follows a similar proof for the value due to Stothers~\cite{Stothers,DavieStothers}.

\begin{lemma} \label{lem:merging-super-mult}
 For any two symmetric estimated partitioned tensors $T_1,T_2$ we have $\Vm_\rho(T_1\otimes T_2) \geq \Vm_\rho(T_1) \Vm_\rho(T_2)$ and $\Vm_\rho(T_1\oplus T_2) \geq \Vm_\rho(T_1) + \Vm_\rho(T_2)$.
\end{lemma}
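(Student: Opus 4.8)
I would prove the two inequalities at the level of finite powers and then pass to the limit. Concretely, the targets are $\Vm_{\rho,N}(T_1\otimes T_2)\ge\Vm_{\rho,N}(T_1)\,\Vm_{\rho,N}(T_2)$ for the product, and $\Vm_{\rho,N}(T_1\oplus T_2)\ge\sum_{k=0}^{N}\binom{N}{k}\Vm_{\rho,k}(T_1)\,\Vm_{\rho,N-k}(T_2)$ for the direct sum; taking $N$-th roots and using Lemma~\ref{lem:merging-well-defined} (so that $\Vm_{\rho,M}(\cdot)^{1/M}\to\Vm_\rho(\cdot)$) then yields the lemma, except that the direct-sum bound additionally requires an entropy optimization to turn the binomial sum into $\Vm_\rho(T_1)+\Vm_\rho(T_2)$.

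\textbf{The product.} Since $(T_1\otimes T_2)^{\otimes N}$ is equivalent, as a partitioned tensor, to $T_1^{\otimes N}\otimes T_2^{\otimes N}$, one takes a consistent restriction $T_1'$ of $T_1^{\otimes N}$ and a consistent restriction $T_2'$ of $T_2^{\otimes N}$ and forms $T_1'\otimes T_2'$ from the tensor product of the two underlying partitioned restrictions, exactly as in the proof of Lemma~\ref{lem:merging-well-defined}. The value multiplies out: if $M_1,M_2$ are merged constituents that are consistent sums with respect to patterns $\pi_1,\pi_2$, then $\tz{(M_1\otimes M_2)}=\tz{M_1}\otimes\tz{M_2}$ is a matrix multiplication tensor (a tensor product of such), $\Vol(\tz{M_1}\otimes\tz{M_2})=\Vol(\tz{M_1})\Vol(\tz{M_2})$, and $\Val_\rho$ of the non-matrix-multiplication part factors by the defining rule $\Val_\rho(T_s\otimes T'_{s'})=\Val_\rho(T_s)\Val_\rho(T'_{s'})$; hence $\Val_\rho(M_1\otimes M_2)=\Val_\rho(M_1)\Val_\rho(M_2)$ and summing gives the claimed product of the two total values. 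The step that needs care — and the main obstacle — is checking that $M_1\otimes M_2$ is genuinely a consistent sum, i.e.\ that one can build a single pattern $\tilde\pi$ for $(T_1\otimes T_2)^{\otimes N}$ out of $\pi_1,\pi_2$. A coordinate $i$ with $\pi_1(i)=\pi_2(i)=0$ becomes a $0$-coordinate of $\tilde\pi$; a coordinate with both $\pi_1(i),\pi_2(i)$ nonzero becomes a nonzero coordinate (a tensor product with a non-matrix-multiplication constituent is again non-matrix-multiplication). The awkward case is a coordinate where one of $\pi_1,\pi_2$ is nonzero while the other is $0$ with several matrix multiplication constituents actually being merged there — no single value of $\tilde\pi(i)$ works. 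I would dispose of this by arranging the two consistent restrictions so that their nonzero-pattern coordinates coincide: coordinate permutations of $T_\ell^{\otimes N}$ are symmetries, and in the limit one matches the (asymptotic) densities of nonzero coordinates, padding one side with a sublinear number of extra nonzero coordinates, which costs only a subexponential factor and therefore disappears after taking $N$-th roots. This alignment bookkeeping is the part of the proof that is not routine.

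\textbf{The direct sum.} Here the situation is cleaner. As a partitioned tensor, $(T_1\oplus T_2)^{\otimes N}$ is the direct sum over subsets $S\subseteq[N]$ of blocks, the block for $S$ being equivalent to $T_1^{\otimes|S|}\otimes T_2^{\otimes(N-|S|)}$, and the $2^N$ blocks have pairwise disjoint $x$-, $y$- and $z$-variables. Inside a block the $T_1$-copies and the $T_2$-copies lie on \emph{disjoint} coordinate sets, so each coordinate is purely a $T_1$-coordinate or purely a $T_2$-coordinate and the product construction of the previous paragraph applies with no alignment subtlety — it is literally the construction of Lemma~\ref{lem:merging-well-defined}. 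Thus the block for $S$ admits a consistent restriction of value at least $\Vm_{\rho,|S|}(T_1)\,\Vm_{\rho,N-|S|}(T_2)$; since consistent restrictions of distinct blocks live on disjoint variables, their union is a consistent restriction of $(T_1\oplus T_2)^{\otimes N}$, and summing over $S$ gives $\Vm_{\rho,N}(T_1\oplus T_2)\ge\sum_{k=0}^{N}\binom{N}{k}\Vm_{\rho,k}(T_1)\,\Vm_{\rho,N-k}(T_2)$.

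\textbf{Limit and optimization.} Write $V_\ell=\Vm_\rho(T_\ell)$. Keeping only the term $k=\lfloor\lambda N\rfloor$ and using $\binom{N}{\lfloor\lambda N\rfloor}^{1/N}\to\exp_2 H(\lambda,1-\lambda)$, $\Vm_{\rho,\lfloor\lambda N\rfloor}(T_1)^{1/N}\to V_1^{\lambda}$, $\Vm_{\rho,N-\lfloor\lambda N\rfloor}(T_2)^{1/N}\to V_2^{1-\lambda}$, we obtain $\Vm_\rho(T_1\oplus T_2)\ge\exp_2\!\big(H(\lambda,1-\lambda)\big)\,V_1^{\lambda}V_2^{1-\lambda}$ for every $\lambda\in[0,1]$ (the cases $V_1=0$ or $V_2=0$ follow at once from the block $S=[N]$ or $S=\emptyset$ alone). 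The exponent $H(\lambda,1-\lambda)+\lambda\log V_1+(1-\lambda)\log V_2$ is maximized at $\lambda=V_1/(V_1+V_2)$, where it equals $\log(V_1+V_2)$; hence $\Vm_\rho(T_1\oplus T_2)\ge V_1+V_2=\Vm_\rho(T_1)+\Vm_\rho(T_2)$, completing the proof. The product inequality similarly follows from $\Vm_{\rho,N}(T_1\otimes T_2)^{1/N}\ge\Vm_{\rho,N}(T_1)^{1/N}\Vm_{\rho,N}(T_2)^{1/N}$ by letting $N\to\infty$.
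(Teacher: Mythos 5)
Your direct-sum half is correct and is essentially the paper's own argument: decompose $(T_1\oplus T_2)^{\otimes N}$ into the $\binom{N}{N_1}$ variable-disjoint blocks equivalent to $T_1^{\otimes N_1}\otimes T_2^{\otimes N_2}$, lower bound each block by the disjoint-coordinate product construction of Lemma~\ref{lem:merging-well-defined}, keep the term with $N_1\approx \alpha N$, $\alpha=\Vm_\rho(T_1)/(\Vm_\rho(T_1)+\Vm_\rho(T_2))$, and use the entropy estimate for the binomial coefficient. The product half is where the two texts diverge: the paper dispatches it with the one-line observation $\Vm_{\rho,N}(T_1\otimes T_2)\ge\Vm_{\rho,N}(T_1)\Vm_{\rho,N}(T_2)$, i.e.\ the naive product of consistent restrictions, which is unproblematic in the all-matrix-multiplication setting of Definition~\ref{def:merging-special}. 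You are right that for general estimated tensors the naive product is delicate: a merged constituent $M_1\otimes M_2$ need not be a consistent sum in the sense of Definition~\ref{def:consistent-sum} when some coordinate carries a fixed non-matrix-multiplication constituent on one side and several distinct matrix multiplication constituents on the other. Spotting this is a genuine contribution of your write-up.

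However, your proposed repair does not work as stated. The pattern in Definition~\ref{def:consistent-sum} is chosen separately for each merged constituent (each line) of a consistent restriction, not once for the whole restriction, and the lines of an optimal restriction of $T_1^{\otimes N}$ typically have pattern supports of different sizes and, crucially, in different positions; the sizes can differ by $\Theta(N)$, not sublinearly (some lines may consist entirely of matrix multiplication constituents, others may be single unmerged non-matrix-multiplication constituents). The product construction needs, for \emph{every} pair of lines $(M_{1,a},M_{2,b})$ simultaneously, that the nonzero coordinates of the two patterns coincide, and no single coordinate permutation of $T_2^{\otimes N}$ relative to $T_1^{\otimes N}$, nor a sublinear amount of padding, can achieve this for all pairs at once. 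A type-normalization (keeping only lines with a fixed count of each non-matrix-multiplication annotation, as in Definition~\ref{def:cw-value-dist}) costs only a polynomial factor and fixes the sizes, but not the positions: restricting further to lines with one common pattern support can lose an exponential fraction of the value. So the "alignment bookkeeping" you defer is exactly where your product argument breaks, and it needs a different idea rather than permutations and padding. (To be fair, the paper itself offers no proof of this half beyond the asserted observation, so your concern is legitimate; but as written your proposal does not close it, whereas your direct-sum argument, which only ever multiplies restrictions living on disjoint coordinates, is fine.)
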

\begin{proof}
 The first part follows by the observation that $\Vm_{\rho,N}(T_1\otimes T_2) \geq \Vm_{\rho,N}(T_1) \Vm_{\rho,N}(T_2)$.

 For the second part, notice that the tensor $(T_1 \oplus T_2)^{\otimes N}$ decomposes as
\[
 (T_1 \oplus T_2)^{\otimes N} = \bigoplus_{N_1+N_2 = N} \binom{N}{N_1} \odot T_1^{\otimes N_1} T_2^{\otimes N_2},
\]
 where $M \odot T$ denotes the direct sum of $M$ tensors equivalent to $T$. In particular, 
\[
 \Vm_{\rho,N}(T_1 \oplus T_2) \geq \sum_{N_1+N_2 = N} \binom{N}{N_1} \Vm_{\rho,N_1}(T_1) \Vm_{\rho,N_2}(T_2).
\]
 Let $\alpha_1 = \frac{\Vm_\rho(T_1)}{\Vm_\rho(T_1)+\Vm_\rho(T_2)}$ and $\alpha_2 = \frac{\Vm_\rho(T_2)}{\Vm_\rho(T_1)+\Vm_\rho(T_2)}$. Considering $N_1 \approx \alpha_1 N$ and $N_2 \approx \alpha_2 N$, we get
\[
 \Vm_{\rho,N}(T_1 \oplus T_2) \gtrapprox 2^{H(\alpha_1,\alpha_2)N} \Vm_\rho(T_1)^{\alpha_1 N} \Vm_\rho(T_2)^{\alpha_2 N} \approx (\Vm_\rho(T_1) + \Vm_\rho(T_2))^N,
\]
 where the approximations are true up to polynomial factors and in the limit $N\to\infty$. This shows that $\Vm_\rho(T_1 \oplus T_2) \geq \Vm_\rho(T_1) + \Vm_\rho(T_2)$.
\end{proof}

Finally, we show that the merging value is indeed a lower bound on the value.

\begin{theorem} \label{thm:merging-method}
Let $T$ be an estimated partitioned tensor and $\rho \in [2,3]$. If $\Val_\rho(T_s) \leq \Vg_\rho(T_s)$ for all $s \in \supp(S)$ then $\Vm_\rho(T) \leq \Vg_\rho(T)$, and in particular $\Vm_\omega(T) \leq \bR(T)$.
\end{theorem}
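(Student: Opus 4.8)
The plan is to follow the template of Theorem~\ref{thm:cw-method}, with \emph{consistent restrictions} playing the role that strongly disjoint partitioned restrictions play there. Fix $\rho\in[2,3]$ and assume $\Val_\rho(T_s)\le\Vg_\rho(T_s)$ for every $s\in\supp(T)$. The first step would be to record a monotonicity fact: for any partitioned restriction $R$ of $T^{\otimes N}$ one has $\Vg_\rho(R)\le\Vg_\rho(T)^N$. This holds because $R$ is a combinatorial restriction of $T^{\otimes N}$, so $R\otimes\rot R\otimes\rott R$ is a restriction of $T^{\otimes N}\otimes(\rot T)^{\otimes N}\otimes(\rott T)^{\otimes N}$, which is isomorphic to $(T\otimes\rot T\otimes\rott T)^{\otimes N}$; taking $M$-th tensor powers and using that a degeneration of a restriction is a degeneration, every degeneration of $(R\otimes\rot R\otimes\rott R)^{\otimes M}$ isomorphic to a direct sum of matrix multiplication tensors is also such a degeneration of $(T\otimes\rot T\otimes\rott T)^{\otimes NM}$, whence $\Vg_{\rho,3M}(R)\le\Vg_{\rho,3NM}(T)$; letting $M\to\infty$ gives the claim.

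The second, and main, step is the estimate $\Vg_\rho(S)\ge\Val_\rho(S)$ for every consistent sum $S$ of constituent tensors of $T^{\otimes N}$. I would unwind Definition~\ref{def:consistent-sum}: $S$ conforms to a pattern $\pi$, so in every coordinate $i$ with $\pi(i)\neq 0$ all summands of $S$ carry the \emph{same} factor $T_{\pi(i)}$. Permuting the tensor factors by the fixed permutation that moves the coordinates with $\pi(i)=0$ to the front shows that $S$ is isomorphic to $\tz S\otimes\tn S$, where $\tn S=\bigotimes_{i:\pi(i)\neq 0}T_{\pi(i)}$ is independent of the summand and $\tz S$ is the matrix multiplication tensor equivalent to the sum of the ``zero-coordinate'' parts $\tz{T_s}$. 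Since the value depends only on the isomorphism class, super-multiplicativity of the value (Lemma~\ref{lem:value-properties}) applied factor by factor gives
\[
\Vg_\rho(S)\;\ge\;\Vg_\rho(\tz S)\,\Vg_\rho(\tn S)\;\ge\;\Vol(\tz S)^{\rho/3}\!\!\prod_{i:\pi(i)\neq 0}\!\!\Vg_\rho(T_{\pi(i)})\;\ge\;\Vol(\tz S)^{\rho/3}\!\!\prod_{i:\pi(i)\neq 0}\!\!\Val_\rho(T_{\pi(i)})\;=\;\Val_\rho(S),
\]
using Lemma~\ref{lem:value-properties}(1) for $\tz S$ and the hypothesis for the non-matrix-multiplication constituent tensors $T_{\pi(i)}$. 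In the simpler setting of Definition~\ref{def:merging-special} this step is trivial, since there every consistent sum is itself equivalent to a matrix multiplication tensor.

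To conclude, let $T'$ be an arbitrary consistent restriction of $T^{\otimes N}$, coming from a partitioned restriction $R$. Since each constituent tensor of $R$ appears exactly once among the summands of the constituent tensors of $T'$, and distinct constituent tensors of $T'$ use disjoint $x$-, $y$- and $z$-variables, $R$ equals the direct sum $\bigoplus_{s\in\supp(T')}T'_s$. Super-additivity of the value together with the second step gives $\Vg_\rho(R)\ge\sum_s\Vg_\rho(T'_s)\ge\sum_s\Val_\rho(T'_s)$, and the first step gives $\Vg_\rho(R)\le\Vg_\rho(T)^N$. Maximizing over all consistent restrictions yields $\Vm_{\rho,N}(T)\le\Vg_\rho(T)^N$, so $\Vm_\rho(T)=\lim_{N\to\infty}\Vm_{\rho,N}(T)^{1/N}\le\Vg_\rho(T)$; taking $\rho=\omega$ and invoking Lemma~\ref{lem:value-properties}(3) gives $\Vm_\omega(T)\le\bR(T)$. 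The only genuinely delicate point is the factorization $S\cong\tz S\otimes\tn S$ with the opaque non-matrix part independent of the summand, which is what makes super-multiplicativity applicable term by term; everything else is bookkeeping of the kind already present in the proof of Theorem~\ref{thm:cw-method}.
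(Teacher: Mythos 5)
Your proof is correct, and it is actually more complete than what the paper records: the paper only gives a proof sketch, stating that when all constituent tensors are matrix multiplication tensors the claim is a straightforward application of Lemma~\ref{lem:value-properties}, and that the general case ``follows the ideas of the proof of Theorem~\ref{thm:cw-method}'' and ``involves generalizing Definition~\ref{def:cw-value-dist}'', i.e.\ it is routed through a distribution-restricted variant of the merging value in the spirit of $\Vcw_{\rho,P}$ and Theorem~\ref{thm:cw-formula-exact}. You bypass that machinery entirely by bounding each finite level, $\Vm_{\rho,N}(T)\le\Vg_\rho(T)^N$, via (i) the monotonicity $\Vg_{\rho,3M}(R)\le\Vg_{\rho,3NM}(T)$ for a partitioned restriction $R$ of $T^{\otimes N}$ (a degeneration of a restriction is a degeneration), (ii) the factorization of a consistent sum $S\approx\tz{S}\otimes\tn{S}$, which together with super-multiplicativity, $\Vg_\rho(\tz{S})=\Vol(\tz{S})^{\rho/3}$, and the hypothesis yields $\Vg_\rho(S)\ge\Val_\rho(S)$, and (iii) super-additivity across the merged blocks, since a consistent restriction is equivalent to the direct sum of its constituent tensors (each constituent of $R$ is used exactly once and distinct blocks use disjoint variables). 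All three ingredients are exactly the content of Lemma~\ref{lem:value-properties}, so the substance agrees with the paper's intent; what your route buys is that no analogue of Definition~\ref{def:cw-value-dist} or Theorem~\ref{thm:cw-formula-exact} for the merging value is needed --- the distribution bookkeeping in the appendix proof of Theorem~\ref{thm:cw-method} serves to control the relative multiplicities of $T_s,\rot{T_s},\rott{T_s}$ in the symmetrized power, an issue that does not arise for $\Vm$, which is defined directly on $T^{\otimes N}$ for symmetric $T$. The one genuinely delicate point is the one you flag: the fixed pattern forces the opaque factor $\tn{S}$ to be literally identical in every summand, so distributivity gives $S\approx\bigl(\sum_s\tz{T_s}\bigr)\otimes\tn{S}$; this is the same rearrangement used inside the proof of Theorem~\ref{thm:subsumption-special}, and your use of it is sound.
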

\begin{proof}[\Proofsketch]
 When all constituent tensors are matrix multiplication tensors (or even arbitrary symmetric tensors), the proof is a straightforward application of the generalized asymptotic sum inequality, Lemma~\ref{lem:value-properties}. The proof in the general case follows the ideas of the proof of Theorem~\ref{thm:cw-method}, and involves generalizing Definition~\ref{def:cw-value-dist}. \ifsketch \highlight{Details omitted.} \fi
\end{proof}

\subsection{Merging subsumes recursive repartitioning} \label{sec:subsumption}
As explained in the beginning of this section, the gainings resulting from considering powers of the Coppersmith--Winograd tensor originate in the merging of matrix multiplication tensors during the repartitioning steps. The merging value is a more general way of doing such mergings, as we show in this subsection.

The key result is the following theorem, which shows that the square of the merging value of~$\tcw$ is an upper bound on the partitioned-value of \emph{any} repartitioning of $\tcw\otimes \tcw$ (as defined in Section~\ref{sec:cw-powers}).
\begin{theorem} \label{thm:subsumption-special}
Fix a value for $q$ and write $\tcw=\tcw(q)$. If $\tcw^{\otimes 2}$ is a repartitioning of $\tcw\otimes \tcw$ then for all $\rho \in [2,3]$ we have $\Vm_\rho(\tcw) \geq \Vcw_\rho(\tcw^{\otimes 2})^{1/2}$.
\end{theorem}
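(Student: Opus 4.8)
The plan is to show that a strongly disjoint partitioned restriction witnessing $\Vcw_\rho$ of the repartitioning can be ``unfolded'' into a \emph{consistent} restriction (in the sense of Definition~\ref{def:merging}) of a power of $\tcw$ of essentially the same value, so that the merging value of $\tcw$ already accounts for whatever the repartitioning achieves. Write $\hat T$ for the repartitioning of $\tcw\otimes\tcw$, equipped with the value estimates of~(\ref{eq:Valo}): $\Val_\rho(\hat T_{\tilde\imath,\tilde\jmath,\tilde k})$ is $\Vol(\hat T_{\tilde\imath,\tilde\jmath,\tilde k})^{\rho/3}$ when $\hat T_{\tilde\imath,\tilde\jmath,\tilde k}$ is equivalent to a matrix multiplication tensor, and equals $\Vcw_\rho(\hat T_{\tilde\imath,\tilde\jmath,\tilde k})$ for the ``complicated'' constituent tensors --- each of which is itself a partitioned tensor \emph{all of whose} constituent tensors are matrix multiplication tensors (constituent tensors of $\tcw^{\otimes 2}$). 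Assume $\hat T$ symmetric with symmetric value estimates, as holds for every repartitioning appearing in the analyses. By Theorem~\ref{thm:cw-formula-exact} pick a symmetric $P\in\sdist(\hat T)$ with $\Vcw_{\rho,P}(\hat T)=\Vcw_\rho(\hat T)$; then for every $\epsilon>0$ and every large $M$ there is a partitioned restriction $R$ of $\hat T^{\otimes M}$ with strongly disjoint support whose type-$P$ part $S=\{R_s:s\in\supp_P(R)\}$ satisfies $\sum_{s\in S}\Val_\rho(R_s)\ge(\Vcw_\rho(\hat T)-\epsilon)^M$. Since $\hat T^{\otimes M}$ is a repartitioning of $\tcw^{\otimes 2M}$, it suffices to turn $S$ into a consistent restriction of $\tcw^{\otimes 2M}$ of value at least $(\Vcw_\rho(\hat T)-\epsilon')^M$ with $\epsilon'\to 0$ as $M\to\infty$ and $\epsilon\to 0$: this gives $\Vm_\rho(\tcw)\ge\Vm_{\rho,2M}(\tcw)^{1/2M}\ge(\Vcw_\rho(\hat T)-\epsilon')^{1/2}$, and letting $\epsilon\to 0$ finishes the proof.

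The core observation is that the grouping a repartitioning performs is literally a merging allowed by $\Vm_\rho$: when $\hat T_{\tilde\imath,\tilde\jmath,\tilde k}$ is equivalent to a matrix multiplication tensor it is, by definition of the coarsening, a sum of constituent tensors of $\tcw^{\otimes 2}$ sharing an $x$-index (or a $y$-index, or a $z$-index) whose sum is equivalent to a matrix multiplication tensor, and likewise after tensoring. Thus, unfolded to $\tcw^{\otimes 2M}$, each $R_s=\bigotimes_{t=1}^M\hat T_{s_t}$ splits as a tensor product in which the coordinates carrying a matrix multiplication $\hat T_{s_t}$ contribute a single matrix multiplication tensor $M_s$ (one merge, value $\Vol(M_s)^{\rho/3}$), while the complicated coordinates, grouped by the rotation class of their $\hat T_{s_t}$, contribute for each class with representative $a$ the factor $\hat T_a^{\otimes q_a}\otimes\hat T_{\rot a}^{\otimes q_a}\otimes\hat T_{\rott a}^{\otimes q_a}$ --- the three multiplicities agree because $P$ is symmetric. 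For such an $a$ we have $\Val_\rho(\hat T_a)=\Vcw_\rho(\hat T_a)$, and since $\hat T_a$ has only matrix multiplication constituent tensors, $\Vcw_\rho(\hat T_a)$ is realized within $\epsilon$ by a strongly disjoint partitioned restriction of $(\hat T_a\otimes\rot{\hat T_a}\otimes\rott{\hat T_a})^{\otimes L}$, which unfolds to $\tcw^{\otimes 6L}$ as a direct sum of pairwise variable-disjoint matrix multiplication tensors --- a consistent restriction of $\tcw^{\otimes 6L}$ of value $\ge(\Vcw_\rho(\hat T_a)-\epsilon)^{3L}$. As $(\hat T_a\otimes\rot{\hat T_a}\otimes\rott{\hat T_a})$ occupies exactly the $6$ $\tcw$-coordinates underlying $\hat T_a\otimes\hat T_{\rot a}\otimes\hat T_{\rott a}$, one packs the $6q_a$ coordinates underlying $\hat T_a^{\otimes q_a}\otimes\hat T_{\rot a}^{\otimes q_a}\otimes\hat T_{\rott a}^{\otimes q_a}$ with $\lfloor q_a/L\rfloor$ disjoint copies of this restriction (the $O(L)$ left-over coordinates contributing a single matrix multiplication tensor of value $\ge 1$), obtaining value $\ge(\Vcw_\rho(\hat T_a)-\epsilon)^{3q_a}\cdot(\Vcw_\rho(\hat T_a)-\epsilon)^{-3L}$. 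Tensoring these blocks --- $M_s$ and the class-blocks lie on disjoint coordinates, and the blocks for distinct $s\in\supp_P(R)$ lie in variable-disjoint group regions because $S$ is strongly disjoint --- and invoking the super-multiplicativity in Lemma~\ref{lem:merging-well-defined}, one gets a consistent restriction of $\tcw^{\otimes 2M}$ of value at least $c^{-1}(1-\epsilon'')^M\sum_{s\in S}\Val_\rho(R_s)\ge c^{-1}(1-\epsilon'')^M(\Vcw_\rho(\hat T)-\epsilon)^M$, where $c=c(\hat T,\epsilon)$ absorbs the bounded left-over losses and $(1-\epsilon'')^M$ the replacement of each $\Vcw_\rho(\hat T_a)$ by $\Vcw_\rho(\hat T_a)-\epsilon$; taking $2M$-th roots, $M\to\infty$, then $\epsilon\to 0$ gives the claim.

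The step that requires genuine work --- and which I expect to be the main obstacle --- is the word ``unfolds'': a partitioned restriction retains \emph{every} constituent tensor all of whose groups survive, so in passing from $\hat T^{\otimes M}$, or from $(\hat T_a\otimes\rot{\hat T_a}\otimes\rott{\hat T_a})^{\otimes L}$, down to the fine $\tcw$-structure one may acquire surviving constituent tensors that were not present upstairs (tightness is necessary but not sufficient to forbid this), and any such tensor shares an $x$-, $y$- or $z$-variable with one of the intended merged matrix multiplication tensors, destroying the disjointness $\Vm_\rho$ demands; moreover the passage to the fixed type $P$ and to strongly disjoint index sets must be \emph{realized} at the $\tcw$-level rather than merely read off upstairs. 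This is exactly the difficulty overcome in the proof of the lower bound of Theorem~\ref{thm:cw-formula} (the Cohn--Kleinberg--Szegedy--Umans ``uniquely solvable puzzles'' argument) and of Theorem~\ref{thm:merging-method}: one selects the surviving $x$-, $y$- and $z$-index sets by a hashing / Salem--Spencer-type argument, using that $\tcw$ --- hence every power $\tcw^{\otimes n}$ occurring above --- is \emph{tight}, so that a $1-o(1)$ fraction of the intended value survives while no unwanted constituent tensor does. Carrying out this selection in the presence of merging, where the surviving constituent tensors must be assembled into merged matrix multiplication tensors rather than kept strongly disjoint, is the technical heart of the proof; note that it uses only the tightness of $\tcw$, not of $\hat T$ or of its constituent tensors.
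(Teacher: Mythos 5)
Your first two paragraphs are, in substance, the paper's own proof: fix an optimal distribution $P$ via Theorem~\ref{thm:cw-formula-exact}, note that the matrix-multiplication constituent tensors of the repartitioning are precisely mergings allowed by Definition~\ref{def:merging}, replace each block of occurrences of the non-matrix-multiplication constituent (for $\tcw^{\otimes 2}$ there is a single rotation class, $\tcw^{\otimes 2}_{1,1,2}$) by a strongly disjoint witness of its \pr value, and use strong disjointness of the coarse support to combine the lines. The paper's bookkeeping is slightly lighter than yours---it works with $(\tcw^{\otimes 2}\otimes\rot{(\tcw^{\otimes 2})}\otimes\rott{(\tcw^{\otimes 2})})^{\otimes N}$, so equal multiplicities of the three rotations come from the definition of $\supp_P$ rather than from symmetrizing $P$, and it inserts one witness of $\Vcw_{\rho,3n}(\tcw^{\otimes 2}_{1,1,2})$ per line, letting $n$ grow with $N$, instead of packing $\lfloor q_a/L\rfloor$ copies of a fixed-size witness---but these differences are cosmetic.

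The problem is your third paragraph: the obstacle you defer to there does not exist, and no Salem--Spencer/CKSU-type selection and no tightness are needed anywhere in this theorem, so the ``technical heart'' you leave open is actually a short verification. Two observations close it. First, by strong disjointness of the support of the coarse restriction, every surviving coarse index lies in at most one surviving coarse constituent; so you may zero, at the fine level, everything inside coarse groups not belonging to a type-$P$ constituent without touching the type-$P$ lines. After this, every retained fine group sits inside a coarse group of exactly one type-$P$ coarse constituent, hence any fine constituent of $\tcw^{\otimes 2M}$ that survives has all three coarse images inside one and the same line: nothing can straddle two lines, and nothing whose coarse image was zeroed upstairs can reappear downstairs. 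Second, within a single line, the inner witness is by definition a partitioned restriction of a power of $\tcw^{\otimes 2}_{1,1,2}$, and the groups of $\tcw^{\otimes 2}_{1,1,2}$ are fine groups of $\tcw^{\otimes 2}$ whose nonzero fine constituents are exactly its four summands; so performing the same zeroing of fine groups inside $\tcw^{\otimes 2M}$ leaves, on the inner coordinates, exactly the constituents of the witness, while on the matrix-multiplication coordinates everything inside the coarse groups is kept and sums to the merged matrix multiplication factor. Consequently the surviving fine part of each line is exactly a variable-disjoint sum of matrix multiplication tensors (merged factor tensored with the witness's constituents), every surviving fine constituent is a summand of exactly one of them, and Definition~\ref{def:merging} is satisfied with the value you computed. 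Tightness and the hashing argument enter only the proof of the lower bound in Theorem~\ref{thm:cw-formula}, i.e., the evaluation of $\Vcw_\rho$; here one needs only the definitional existence of witnesses for $\Vcw_{\rho,P,3N}$ and $\Vcw_{\rho,3n}$, exactly as in your second paragraph.
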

Before proving the theorem, let us note that this theorem shows that an upper bound on $\Vm(\tcw)$ implies a limit on the bound on $\omega$ achievable by analyzing any repartitioning of $\tcw\otimes \tcw$. Indeed, a bound on $\omega$ obtained using such a repartioning has the form $\omega \leq \rho$ where $\Vcw_\rho(\tcw\otimes \tcw) = (q+2)^{2}$. If $\Vm_\rho(\tcw) \leq B_{\rho}$ then $B_\rho \geq \Vm_\rho(\tcw) \geq \Vcw_\rho(\tcw\otimes \tcw)^{1/2} = (q+2)$, and so $\rho \geq \alpha$ where $\alpha$ is the solution to $B_\alpha = (q+2)$.

The idea behind the proof is simple: we give a lower bound on $\Vm_{\rho,N}(\tcw^{\otimes 2})$ by recursively applying Theorem~\ref{thm:cw-formula-exact}. The first application is to $\tcw^{\otimes 2}$ itself: Theorem~\ref{thm:cw-formula-exact} gives us a partitioned restriction of $(\tcw^{\otimes 2})^{\otimes 3N}$. However, not all constituent tensors of $\tcw^{\otimes 2}$ are mergings of constituent tensors of $\tcw$, and in order to analyze those we need another application of Theorem~\ref{thm:cw-formula-exact}.

\begin{proof}[Proof of Theorem \ref{thm:subsumption-special}]
In order to avoid confusion, we will use $\Val$ for the estimated value of constituent tensors of $\tcw$ and its powers, and $\Valo$ for the estimated value of constituent tensors of $\tcw^{\otimes 2}$ and its powers. The estimated values of constituent tensors of $\tcw_{1,1,2} \otimes \tcw_{1,2,1} \otimes \tcw_{2,1,1}$ and its powers are also given by $\Val$, by definition.

Observe that, for every $n\ge 0$,  from Definition \ref{def:cw-value} we know that there exists a
partitioned restriction $T_{3n}$ of $(\tcw_{1,1,2} \otimes \tcw_{1,2,1} \otimes \tcw_{2,1,1})^{\otimes n}$ having strongly disjoint support such that the equality $\sum_{s \in \supp(T_{3n})} \Val_{\rho}(T_{3n,s}) = \Vcw_{\rho,3n}(\tcw_{1,1,2})$ holds. Note that
\[ \sum_{s \in \supp(T_{3n})} \Val_{\rho}(T_{3n,s}) = \Vcw_{\rho,3n}(\tcw_{1,1,2}) \approx \Vcw_\rho(\tcw_{1,1,2})^{3n} = \Valo_{\rho}((\tcw_{1,1,2} \otimes \tcw_{1,2,1} \otimes \tcw_{2,1,1})^{\otimes n}), \]
where the second (approximate) equality comes from the fact that the quantity $\Vcw_{\rho,3n}(\tcw_{1,1,2})$ converges to $\Vcw_\rho(\tcw_{1,1,2})^{3n}$ when $n$ goes to infinity, and the third equality comes from the definition of $\Valo_{\rho}$ (see Equation (\ref{eq:Valo})), using the fact that $\Valo_\rho$ is symmetric.

Let $P\in\dist(\tcw^{\otimes 2})$ be the distribution satisfying $\Vcw_\rho(\tcw^{\otimes 2}) = \Vcw_{\rho,P}(\tcw^{\otimes 2})$ given by Theorem~\ref{thm:cw-formula-exact}.
For every $N$, there exists a partitioned restriction
$U_{3N}$ of $(\tcw^{\otimes 2})^{\otimes 3N}$ having strongly disjoint support such that
$\sum_{s \in \supp_P(U_{3N})} \Valo_{\rho}(U_{3N,s})
=\Vcw_{\rho,P,3N}(\tcw^{\otimes 2})$.
Each constituent tensor of $U_{3N}$ whose annotation belongs to $\supp_P(U_{3N})$ is a tensor power of $3N$ constituent tensors of $\tcw^{\otimes 2}$, which after rearrangement will have the form
\[
T^0 \otimes (\tcw_{1,1,2} \otimes \tcw_{1,2,1} \otimes \tcw_{2,1,1})^{\otimes n}
\]
for some integer $n$, where $T^0$ is a product of matrix multiplication constituent tensors of $\tcw^{\otimes 2}$
(both~$n$ and $T^0$ are the same, up to equivalence, for all constituent tensors of $U_{3N}$). The reason $\tcw_{1,1,2},\tcw_{1,2,1},\tcw_{2,1,1}$ all have the same power is that $\supp_P(U_{3N})$ counts only constituent tensors in which $\tcw_{1,1,2},\rot{\tcw_{1,1,2}},\rott{\tcw_{1,1,2}}$ all appear the same number of times.

The idea is to replace the factor $(\tcw_{1,1,2} \otimes \tcw_{1,2,1} \otimes \tcw_{2,1,1})^{\otimes n}$ with a copy of $T_{3n}$, which can be done by zeroing groups of variables in $\tcw^{\otimes 6N}$. We repeat this process with each constituent tensor in~$U_{3N}$. A crucial observation is that the zeroing operations we do on one constituent tensor have no impact on the other constituent tensors, since the constituent tensors in $U_{3N}$ have strongly disjoint supports. This construction thus gives a partitioned restriction $W_{6N}$ of $\tcw^{\otimes 6N}$ having strongly disjoint support.
We have
\begin{align*}
\sum_{s \in \supp(W_{6N})} \Val_{\rho}(W_{6N,s}) &=
|\supp(U_{3N})|
\Vol(T_0)^{\rho/3}
\times \sum_{s \in \supp(T_{3n})} \Val_{\rho}(T_{3n,s})
\\ &=
|\supp(U_{3N})|
\Vol(T_0)^{\rho/3}
\Valo_{\rho}((\tcw_{1,1,2} \otimes \tcw_{1,2,1} \otimes \tcw_{2,1,1})^{\otimes n})
\\ &= \Vcw_{\rho,P,3N}(\tcw^{\otimes 2}).
\end{align*}
We conclude that $\Vm_{\rho,6N}(\tcw)\ge \Vcw_{\rho,P,3N}(\tcw^{\otimes 2})$, and thus $\Vm_{\rho}(\tcw)\ge \Vcw_{\rho,P}(\tcw^{\otimes 2})^{1/2} = \Vcw_\rho(\tcw^{\otimes 2})^{1/2}$.
\end{proof}

Theorem \ref{thm:subsumption-special} can be easily generalized to estimated partitioned tensors other than $\tcw$,
and also to recursive partitioning (the main difference when analyzing recursive partitioning is the higher number of levels of recursion). In particular, we obtain the following result.

\begin{theorem} \label{thm:subsumption}
Fix a value for $q$ and write $\tcw=\tcw(q)$. If $\tcw^{\otimes dD}$ is a recursive repartitioning of $\tcw^{\otimes D}$ then for all $\rho \in [2,3]$ we have $\Vm_\rho(\tcw^{\otimes D}) \geq \Vcw_\rho(\tcw^{\otimes dD})^{1/d}$.
\end{theorem}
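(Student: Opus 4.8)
The plan is to prove Theorem~\ref{thm:subsumption} by strong induction along the recursion sequence witnessing that $\tcw^{\otimes dD}$ is a recursive repartitioning of $\tcw^{\otimes D}$. Fix such a sequence $T_1 = \tcw^{\otimes D},T_2,\ldots,T_\ell = \tcw^{\otimes dD}$ as in Definition~\ref{def:recursive-repartitioning}: for each $i>1$ there are $j,k<i$ with $T_i$ a repartitioning of $T_j\otimes T_k$, so the underlying (un-partitioned, un-estimated) tensor of $T_i$ is $\tcw^{\otimes D_i}$ with $D_i = D_j+D_k$. Writing $D_i = c_i D$ we get $c_1 = 1$, $c_\ell = d$, and $c_i = c_j+c_k$. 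I would prove, by induction on $i$, the statement $\Vm_\rho(\tcw^{\otimes D}) \geq \Vcw_\rho(T_i)^{1/c_i}$; taking $i=\ell$ gives the theorem (its algorithmic consequence then follows as in the remark after Theorem~\ref{thm:subsumption-special}, using $\Vm_\omega \leq \bR$ from Theorem~\ref{thm:merging-method}). The base case $i=1$ is immediate: since every constituent tensor of $\tcw^{\otimes D}$ is a matrix multiplication tensor, any strongly-disjoint partitioned restriction of $(\tcw^{\otimes D})^{\otimes 3N}$ is also a consistent restriction in the sense of Definition~\ref{def:merging-special} (with no actual merging), so $\Vm_{\rho,3N}(\tcw^{\otimes D}) \geq \Vcw_{\rho,3N}(\tcw^{\otimes D})$ for all $N$ (using that $\tcw^{\otimes D}$ is symmetric, so that $T\otimes\rot{T}\otimes\rott{T}\approx T^{\otimes 3}$), and hence $\Vm_\rho(\tcw^{\otimes D}) \geq \Vcw_\rho(\tcw^{\otimes D})$.

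For the inductive step I would rerun the argument of Theorem~\ref{thm:subsumption-special} with $\tcw$ replaced by the earlier tensors $T_j,T_k$. Given that $T_i$ is a repartitioning of $T_j\otimes T_k$: use Theorem~\ref{thm:cw-formula-exact} to choose $P\in\dist(T_i)$ with $\Vcw_\rho(T_i) = \Vcw_{\rho,P}(T_i)$, and for each $N$ obtain a strongly-disjoint partitioned restriction $U_{3N}$ of $T_i^{\otimes 3N}$ with $\sum_{s\in\supp_P(U_{3N})}\Val_\rho(U_{3N,s}) = \Vcw_{\rho,P,3N}(T_i)$. Every constituent tensor of $U_{3N}$ indexed by $\supp_P(U_{3N})$ rearranges, up to equivalence, as $T^0\otimes\bigotimes_t W_t^{\otimes n_t}$, where $T^0$ is a tensor product of matrix multiplication constituent tensors of $T_i$, each $W_t$ is the tensor product of the members of a symmetric orbit of non-matrix-multiplication constituents of $T_i$, and the exponents $n_t$ depend only on $P$ (this uses that $\supp_P$ counts each of $T_{i,s},\rot{T_{i,s}},\rott{T_{i,s}}$ the same number of times). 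By Equation~(\ref{eq:Valo}) the estimated value of a non-matrix-multiplication constituent $T_{i,s}$ is its own \pr value $\Vcw_\rho(T_{i,s})$, and the constituent tensors of $T_{i,s}$ are the tensor products $T_{j,s'}\otimes T_{k,s''}$ carrying the product estimated value.

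The crux is to replace each factor $\bigotimes_t W_t^{\otimes n_t}$ by a consistent restriction of a power of $\tcw^{\otimes D}$ of matching size while losing no value asymptotically, and likewise to replace $T^0$. This needs the inductive hypothesis in the \emph{constructive} form actually produced by the proof of Theorem~\ref{thm:subsumption-special}: from a strongly-disjoint partitioned restriction of a power of an earlier tensor, one explicitly builds a strongly-disjoint consistent restriction of the corresponding power of $\tcw^{\otimes D}$ with the same total value, up to the inherent approximation $\Vcw_{\rho,3n}(\cdot)\approx\Vcw_\rho(\cdot)^{3n}$. One applies this to the $\Vcw_\rho(T_{i,s})$-achieving restrictions (whose constituents are products of constituents of $T_j$ and $T_k$, handled by the hypothesis for $j,k<i$, recursing through the non-matrix-multiplication constituents of $T_j,T_k$ in turn), and to the matrix multiplication constituents inside $T^0$ (each of which, traced down the recursion tree and using that \emph{every} constituent tensor of \emph{every} power of $\tcw$ is a matrix multiplication tensor, is a merging of products of $\tcw$-constituents, hence realizable over a power of $\tcw$ by zeroing and merging). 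Because $U_{3N}$ has strongly disjoint support, these substitutions are performed independently inside each constituent tensor; patching them together gives a strongly-disjoint consistent restriction of $(\tcw^{\otimes D})^{\otimes 3Nc_i}$ of total merging value $(1-o(1))\Vcw_{\rho,P,3N}(T_i)$. Letting $N\to\infty$ yields $\Vm_\rho(\tcw^{\otimes D}) \geq \Vcw_{\rho,P}(T_i)^{1/c_i} = \Vcw_\rho(T_i)^{1/c_i}$.

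The main obstacle will be bookkeeping rather than any new idea: one must phrase the constructive inductive hypothesis so that it can be invoked not only for the sequence members $T_1,\ldots,T_\ell$ but also for the auxiliary tensors arising as their non-matrix-multiplication constituents (which are not themselves in the sequence), check that the resulting recursion is well-founded (it strictly decreases in the pair consisting of the sequence index and the depth of the value-computation recursion), align all the tensor powers so that every object eventually lives over a common power of $\tcw^{\otimes D}$ (forcing a passage to a common large multiple), and verify that strong disjointness together with the ``opacity'' of non-matrix-multiplication constituents demanded by Definition~\ref{def:consistent-sum} is preserved under the nested substitutions. As the authors note, for $\tcw$ itself this is somewhat cleaner than in the abstract setting precisely because all constituents of every power of $\tcw$ are matrix multiplication tensors, so the final object is a consistent restriction in the simplified sense of Definition~\ref{def:merging-special}; the estimated-value and rotation complications enter only through the intermediate tensors $T_i$, exactly as in the proof of Theorem~\ref{thm:subsumption-special}.
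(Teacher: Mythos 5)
Your proposal is correct and follows essentially the same route as the paper: the paper proves Theorem~\ref{thm:subsumption-special} in detail and then only remarks that Theorem~\ref{thm:subsumption} follows by the same argument with ``a higher number of levels of recursion,'' which is precisely the induction along the repartitioning sequence (with the constructive, value-preserving replacement of non-matrix-multiplication factors at each level) that you spell out. Your added bookkeeping about the constructive inductive hypothesis and preservation of strong disjointness is exactly the content the authors leave implicit, so there is nothing methodologically different to compare.
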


Similarly to the observation done above, but more generally, Theorem \ref{thm:subsumption} shows that an upper bound on $\Vm(\tcw^{\otimes D})$ implies a limit on the bound on $\omega$ achievable by analyzing recursive repartitionings of $\tcw^{\otimes D}$. The reason is that a bound on $\omega$ obtained using recursive repartitioning has the form $\omega \leq \rho$ where $\Vcw_\rho(\tcw^{\otimes dD}) = (q+2)^{dD}$. If $\Vm_\rho(\tcw^{\otimes D}) \leq B_{\rho}$ then $B_\rho \geq \Vm_\rho(\tcw^{\otimes D}) \geq \Vcw_\rho(\tcw^{\otimes dD})^{1/d} = (q+2)^D$, and so $\rho \geq \alpha$ where $\alpha$ is the solution to $B_\alpha = (q+2)^D$.
Moreover, Theorem~\ref{thm:merging-method} shows that every bound which can be obtained by analyzing a recursive repartitioning of $\tcw$ can also be obtained using the inequality $\Vm_\omega(\tcw) \leq q+2$. In this sense, the laser method with merging subsumes the recursive laser method.

\section{Upper bound on the value with merging} \label{sec:upper-bound}

Theorem~\ref{thm:subsumption} prompts us to obtain upper bounds on the merging value of recursive repartitionings of $\tcw$. Our approach will only use some properties of recursive repartitionings of $\tcw$, described in the following definition.

\begin{definition} \label{def:cw-like}
A partitioned tensor $T$ over $X,Y,Z$ is \emph{Coppersmith--Winograd-like} if
\begin{enumerate}
 \item $X$ is partitioned into parts $X_0,\ldots,X_D$, where $|X_0| = |X_D| = 1$ and $|X_i| > 1$ for $0 < i < D$. Similarly for $Y,Z$.
 \item Tensors annotated $(\alpha,\beta,\gamma)$ for $\alpha,\beta,\gamma \neq 0$ are not equivalent to matrix multiplication tensors. 
 \item If $(\alpha,\beta,0) \in \supp(T)$ then $|X_\alpha| = |Y_\beta| \triangleq m$ and $T_{\alpha,\beta,0} = \sum_{i=1}^m x_i y_i z$ where $X_\alpha = \{x_1,\ldots,x_m\}$, $Y_\beta = \{y_1,\ldots,y_m\}$ and $Z_0 = \{z\}$\footnote{The enumerations of $X_\alpha,Y_\beta$ can potentially depend on the annotation $(\alpha,\beta,0)$, though this does not happen in our applications.}. Similarly for tensors annotated $(\alpha,0,\beta)$ and $(0,\alpha,\beta)$.
 \item The only annotations in $\supp(T)$ involving only $0$ and $D$ are $(D,0,0),(0,D,0),(0,0,D)$.
\end{enumerate}
\end{definition}

When $q > 1$, it is not hard to check that all recursive repartitionings of $\tcw$ are Coppersmith--Winograd-like.

The rest of this section is organized as follows. In Section~\ref{sec:consistent-sum} we prove a combinatorial lemma describing when constituent tensors of $T^{\otimes N}$ can combine to a matrix multiplication tensors, for any Coppersmith--Winograd-like tensor $T$. Using this lemma, we prove an upper bound on~$\Vm_\rho(\tcw)$ in Section~\ref{sec:upper-bound-special}, and show how to extend it to general Coppersmith--Winograd-like tensors $T$ in Section~\ref{sec:upper-bound-general}. We apply the upper bound to $\tcwq^{\otimes 2^r}$ for various values of $q$ and $r$ in Section~\ref{sec:simulations}.

\subsection{Structure of consistent sums} \label{sec:consistent-sum}
The following combinatorial lemma (Lemma \ref{lem:coherent-sums}) identifies the structure of consistent sums in the case of Coppersmith--Winograd-like tensors.

\begin{definition} \label{def:coherent}
 For a partitioned tensor $T$, a \emph{zero-sequence} of length $N$ is a constituent tensor in $T^{\otimes N}$ whose index triple is in $\suppz(T)^N$, that is, its index triple $(A,B,C)$ satisfies the property that for $i \in [N]$, one of $A_i,B_i,C_i$ is zero.

 A sum of distinct zero-sequences is \emph{consistent} if it is equivalent to a matrix multiplication tensor. It is \emph{coherent} if there is a partition $[N] = X \cup Y \cup Z$ such that each index triple $(A,B,C)$ of a tensor in the sum satisfies $A_i = 0$ for all $i \in X$, $B_j = 0$ for all $j \in Y$, and $C_k = 0$ for all $k \in Z$.
\end{definition}

\begin{lemma} \label{lem:coherent-sums}
 Let $T$ be a Coppersmith--Winograd-like symmetric partitioned tensor.
 If the sum of distinct zero-sequences of length $N$ of constituent tensors of $T$ is consistent, then it is coherent.
\end{lemma}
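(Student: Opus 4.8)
The plan is to reduce the statement to a claim about a single tensor coordinate, and then to exploit the rigid structure of matrix multiplication tensors through the ranks of their slices.

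\textbf{Reduction to one coordinate.} Write the consistent sum as $S=\sum_{(A,B,C)\in\mathcal S}T^{\otimes N}_{A,B,C}$, so that $S\approx\mm{a,b,c}$ for some $a,b,c$. For a coordinate $i\in[N]$ and $(A,B,C)\in\mathcal S$, at least one of $A_i,B_i,C_i$ is zero. I will show that for every $i$ the triples in $\mathcal S$ share a common zero coordinate at $i$, that is, $A_i=0$ for all of them, or $B_i=0$ for all of them, or $C_i=0$ for all of them. Choosing one such coordinate for each $i$ (breaking ties arbitrarily) yields the partition $[N]=X\cup Y\cup Z$ demanded by Definition~\ref{def:coherent}. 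So it suffices to fix a coordinate $i_0$ and prove: if the annotations occurring at $i_0$ among the triples of $\mathcal S$ do not share a common zero coordinate, then $S$ is not equivalent to a matrix multiplication tensor.

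\textbf{Two structural inputs.} First, from Definition~\ref{def:cw-like} one reads off the shape of $\suppz(T)$: an element of $\suppz(T)$ is either one of the corners $(D,0,0),(0,D,0),(0,0,D)$ (whose constituent tensor is a single monomial $xyz$), or a \emph{genuine inner product}, with exactly one zero coordinate and the other two coordinates strictly between $0$ and $D$; in the latter case item~3 of Definition~\ref{def:cw-like} forces the constituent tensor to be a diagonal $\sum_{t=1}^{m}x_ty_tz$ with $m\ge 2$, in which the singleton variable is the one indexed $0$. (The size bounds in item~1 are what rule out ``mixed'' one-zero annotations and, via item~4, two-zero annotations with nonzero coordinate $\ne D$.) Second, a matrix multiplication tensor $\mm{a,b,c}$, up to equivalence, is determined by index sets $R_1,R_2,R_3$ of sizes $a,b,c$: its $x$-, $y$-, $z$-variables are the pairs from $R_1\times R_2$, $R_2\times R_3$, $R_3\times R_1$, and its support is all triples $(x_{(r_1,r_2)},y_{(r_2,r_3)},z_{(r_3,r_1)})$. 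The consequence I will use concerns \emph{slices}: the slice of $\mm{a,b,c}$ at any single $z$-variable (a matrix indexed by $x$- and $y$-variables) is a partial permutation of rank exactly $b$, the middle dimension, and symmetrically for $x$- and $y$-slices. Moreover, the $w$-slice ($w\in\{x,y,z\}$) of a constituent tensor $T_v$ with $v\in\suppz(T)$, taken at one of its $w$-variables, has rank $m_v\ge 2$ precisely when $v$ is a genuine inner product whose $w$-coordinate is $0$, and rank $1$ otherwise.

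\textbf{The argument.} Fix $i_0$ and suppose the annotations at $i_0$ have no common zero coordinate. Decompose $S=\sum_v S^{(v)}\otimes T_v$, where $v$ ranges over the annotations occurring at $i_0$ and $S^{(v)}$ is the corresponding sum of length-$(N-1)$ zero-sequences. Suppose first that some $v$ occurring at $i_0$ is a genuine inner product; rotate so that $v$ is an $xy$-diagonal and track $z$-slices. Since ranks multiply under tensor product and add over the disjoint variable blocks created by distinct annotations at $i_0$, the rank of the $z$-slice of $S$ at a $z$-variable whose $i_0$-component is the singleton $z$-variable of $T$ equals $\sum_{v'\colon v'_z=0} r_{v'} s_{v'}$, where $r_{v'}$ is the rank of the corresponding slice of $S^{(v')}$ and $s_{v'}=m_{v'}\ge 2$ for a genuine $xy$-diagonal while $s_{v'}=1$ for the corners $(D,0,0),(0,D,0)$. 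Choosing the $z$-variable to ``use'' the $xy$-diagonal $v$ gives $r_v\ge 1$, hence $b\ge m_v\ge 2$. On the other hand, the absence of a common zero coordinate produces a $z$-variable whose $i_0$-component lies in a $z$-group indexed by a nonzero value $\gamma$; the rank of $S$'s slice there equals $\sum_{v'\colon v'_z=\gamma} r_{v'}$ with every $s_{v'}=1$, and an induction on $N$ applied to the pieces $S^{(v')}$ — together with the observation that such a slice meets at $i_0$ only $xz$- or $yz$-diagonals or the corner $(0,0,D)$, none carrying an inner-product multiplier — forces this rank to be strictly smaller than $b$, a contradiction. It remains to treat the case where \emph{only corners} occur at $i_0$; then the failure of a common zero coordinate forces all three corners $(D,0,0),(0,D,0),(0,0,D)$ to occur at $i_0$, so $S$ contains three summands of the form $A\otimes x_Dy_0z_0$, $B\otimes x_0y_Dz_0$, $C\otimes x_0y_0z_D$ supported on pairwise distinct singleton variables in at least one coordinate each, and a direct inspection shows these cannot be assembled into the single $(R_1,R_2,R_3)$-structure of any $\mm{a,b,c}$, again contradicting consistency. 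This completes the per-coordinate claim and hence the lemma.

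\textbf{Main obstacle.} The delicate part is the inductive slice-rank bookkeeping: one must verify that the rank of a slice of $S$ genuinely decomposes as the claimed \emph{sum} over the annotations at $i_0$ (this relies on constituent tensors over distinct variable groups contributing in disjoint blocks, so that ranks add rather than only being subadditive), and one must run the induction on $N$ while keeping precise track of which length-$(N-1)$ pieces $S^{(v)}$ are themselves consistent and of how their variable sets overlap. The ``only corners'' sub-case, although morally just the $N=1$ fact that $x_1y_1z_1+x_2y_2z_2+x_3y_3z_3$ is not a matrix multiplication tensor, also requires care once it is tensored with the remaining coordinates.
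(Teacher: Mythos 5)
There is a genuine gap at the heart of your Case 1: the assertion that the slice of $S$ at a $z$-variable whose $i_0$-component lies in $Z_\gamma$ with $\gamma\neq 0$ must have rank strictly smaller than $b$. At such a slice the per-coordinate multipliers $s_{v'}$ are indeed $1$, but the ranks $r_{v'}$ of the length-$(N-1)$ pieces $S^{(v')}$ can be arbitrarily large, and nothing in your argument ties them to $b$. The proposed ``induction on $N$'' has no statement to lean on: the pieces $S^{(v')}$ are not themselves consistent in general (they need not be equivalent to matrix multiplication tensors), so the inductive hypothesis does not apply to them, and even knowing they were coherent would not bound their slice ranks below $b$. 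Worse, slice ranks alone cannot detect the failure you need: in $\tcw^{\otimes 2}$ with $q\geq 2$, the non-coherent sum $\tcw_{1,1,0}\otimes\tcw_{0,1,1}+\tcw_{0,1,1}\otimes\tcw_{1,1,0}$ has every $z$-slice of rank $q$, every $x$-slice of rank $q$, and every $y$-slice of rank $2$ --- precisely the slice-rank profile of $\mm{2,q,q}$ --- yet it is not equivalent to $\mm{2,q,q}$. In particular the ``bad'' slices there are exactly as large as the ``good'' ones, so your comparison collapses; the obstruction is combinatorial, not rank-theoretic. The paper's proof works with a finer invariant: after fixing a bijection (``denotation'') with the variables of $\mm{n,m,p}$, it shows that the $t$-siblings of an $x$-variable must all be mapped into a single row (when $C_t=0$) or a single column (when $B_t=0$), and then uses the completeness of the support of $\mm{n,m,p}$ (every product $x_{ij}y_{jK}z_{Ki}$ occurs) to propagate one of these alternatives to all variables, which is what rules out mixing annotations at a coordinate.

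Two further points, secondary but real. First, your rank ``additivity over disjoint blocks'' is false as stated: two distinct annotations at $i_0$ with $z$-coordinate $0$ can share the $x$-group or the $y$-group, so the contributions are disjoint in only one of the two variable classes, and rank is then only subadditive; the lower bound $b\geq m_v r_v\geq 2$ is salvageable by restricting the slice to the rows and columns whose $i_0$-components lie in $X_{v_x}$ and $Y_{v_y}$, but you must say so. Second, the ``only corners'' case is not a matter of direct inspection once the three corner factors are tensored with large tensors $A,B,C$: in the paper this case requires its own argument (assigning $t$-types in $\{0,D\}$ to variables and using three specific products of $\mm{n,m,p}$ to force an index triple with $(A_t,B_t,C_t)=(0,0,0)$, which cannot exist); some argument of this kind is needed in your write-up as well.
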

\begin{proof}
 We can identify a zero-sequence of length $N$ with a vector in $\suppz(T)^N$, and so the set of distinct zero-sequences with a subset $S \subseteq \suppz(T)^N$. We will also think of $\suppz(T)^N$ as a subset of $(\{0,\ldots,D\}^N)^3$, the latter being the set of all index triples. We can write the sum itself as $\sum_{s \in S} T_s$, where $T_s$ is the unique constituent tensor of $T^{\otimes N}$ which corresponds to the index triple $s$.
 Our goal is to show that each $i \in [N]$ is either $x$-constant (all $(A,B,C) \in S$ satisfy $A_i = 0$), $y$-constant (same for $B_i = 0$), or $z$-constant (same for $C_i = 0$).

 Suppose that this sum is equivalent to the matrix multiplication tensor $\mm{n,m,p}$.
 Recall that
\[ \mm{n,m,p} = \sum_{i=1}^n \sum_{j=1}^m \sum_{k=1}^p x_{ij} y_{jk} z_{ki}. \]
 Since $\sum_{s \in S} T_s \approx \mm{n,m,p}$, there is a bijection between the $x$-variables appearing in $\sum_{s \in S} T_s$ and the $nm$ variables $x_{ij}$, and similar bijections for the $y$-variables and $z$-variables, such that applying these bijections to $\sum_{s \in S} T_s$, we obtain exactly $\mm{n,m,p}$. Fix such bijections, which we call \emph{denotations}.

 Let $\ell(w) = |X_w| = |Y_w| = |Z_w|$, and denote the variables in $X_w$ by $\hat{x}^{[w]}_1,\ldots,\hat{x}^{[w]}_{\ell(w)}$. Each $x$-variable appearing in $T^{\otimes N}$ belongs to some group $A$, and the $x$-variables of group $A$ can be indexed by vectors $v$ of length $N$ such that $v_s \in [\ell(A_s)]$ for all $s \in [N]$. Call two $x$-variables belonging to the same group \emph{$s$-siblings} if they differ only in $v_s$.

 Consider some variable $\hat{x}^A_u$ and some $t \in [N]$ such that $0 < A_t < D$, and suppose that $\hat{x}^A_u$ appears in some $(A,B,C) \in S$ satisfying $C_t = 0$, say as part of the product $\hat{x}^A_u \hat{y}^B_v \hat{z}^C_w$. Denote the $t$-siblings of $\hat{x}^A_u$ and $\hat{y}^B_v$ by $\hat{X}_r,\hat{Y}_r$ in such a way that $T_{(A,B,C)}$ includes the sum $\sum_r \hat{X}_r \hat{Y}_r \hat{z}^C_w$, where $r$ ranges over $[\ell(A_t)] = [\ell(B_t)]$; this is possible since the $t$th factor in $T_{(A,B,C)}$ has the form $\sum_r \hat{x}^{[A_t]}_r \hat{y}^{[B_t]}_{\sigma(r)} \hat{z}^{[0]}_1$ for some permutation $\sigma$. If the denotation of $\hat{z}^C_w$ is $z_{ki}$ then the denotations of $\hat{X}_r$ are all in row $i$, and the denotations of $\hat{Y}_r$ are all in column $k$. In particular, the denotations of the $t$-siblings of $\hat{x}^A_u$ are in the same row. In contrast, had we assumed that $B_t = 0$ instead of $C_t = 0$, we would have concluded that the denotations of the $t$-siblings of $\hat{x}^A_u$ are in the same column rather than row.

 Call a $z$-variable $z_{ki}$ \emph{$t$-good} if it appears in the tensor corresponding to some index triple $(A,B,C) \in S$ satisfying $C_t = 0$ and $0 < A_t,B_t < D$, say as part of the product $x_{ij} y_{jk} z_{ki}$. The foregoing shows that the $t$-siblings of $x_{ij}$ are all in the same row. Now pick an arbitrary $K \in [p]$. The product $x_{ij} y_{jK} z_{Ki}$ must appear in the tensor corresponding to some $(A,B',C') \in S$ (note that the first index is the same). Since $A_t \neq 0$, either $B_t = 0$ or $C_t = 0$. In the former case, the foregoing would imply that the $t$-siblings of $x_{ij}$ are all in the same column, leading to a contradiction (since $\ell(A_t) > 1$), and we conclude that $C_t = 0$ and so $z_{Ki}$ is also $t$-good. We can similarly change $i$ to any $I \in [n]$, showing that all $z$-variables are $t$-good.

 Consider now an arbitrary $t \in [N]$. If $S$ contains some index triple $(A,B,C)$ such that $(A_t,B_t,C_t) \notin \{(D,0,0),(0,D,0),(0,0,D)\}$, then the above argument shows that it is constant (i.e., either $x$-constant, $y$-constant, or $z$-constant). It remains to consider the case in which all index triples $(A,B,C)$ appearing in $S$ satisfy $(A_t,B_t,C_t) \in \{(D,0,0),(0,D,0),(0,0,D)\}$. If at most two of these actually appear then again $t$ is constant, so it remains to rule out the case in which all these possibilities occur.

 Say that an $x$-variable $x_{ij}$ has $t$-type $\tau \in \{0,D\}$ if the $x$-index $A$ in which its denotation appears satisfies $A_t = \tau$. By assumption, there are some variables $x_{ij},y_{pq},z_{rs}$ of $t$-type $D$. The product $x_{ip} y_{pq} z_{qi}$ corresponds to some index triple $(A,B,C) \in S$ satisfying $B_t = D$. We conclude that $(A_t,B_t,C_t) = (0,D,0)$ and so $x_{ip}$ has $t$-type $0$. Similarly, the product $x_{sp} y_{pr} z_{rs}$ shows that $y_{pr}$ has $t$-type $0$, and the product $x_{ij} y_{jr} z_{ri}$ shows that $z_{ri}$ has $t$-type $0$. But then the product $x_{ip} y_{pr} z_{ri}$ corresponds to some index triple $(A,B,C) \in S$ satisfying $(A_t,B_t,C_t) = (0,0,0)$, which is impossible. This contradiction completes the proof.
\end{proof}

\subsection{Upper bound for the Coppersmith--Winograd tensor} \label{sec:upper-bound-special}

Armed with Lemma~\ref{lem:coherent-sums}, we can prove the upper bound on the merging value. In order to simplify the notations involved, we first prove the upper bound in the special case in which the tensor being analyzed is the Coppersmith--Winograd tensor $\tcw$. In the following subsection we generalize the argument to arbitrary Coppersmith--Winograd-like tensors.

\begin{theorem} \label{thm:upper-bound-special}
 For every $q \geq 2$ and any $\rho\in[2,3]$,
\[ \log \Vm_\rho(\tcwq) \leq \max_{\alpha \in [0,1]} H(\tfrac{2-\alpha}{3},\tfrac{2\alpha}{3},\tfrac{1-\alpha}{3}) + \tfrac{\alpha\rho}{3} \log q + \tfrac{\rho-2}{3} H(\tfrac{1-\alpha}{2},\alpha,\tfrac{1-\alpha}{2}). \]
\end{theorem}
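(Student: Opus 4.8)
The plan is to bound $\Vm_{\rho,N}(\tcw)$ directly, following the template of the upper bound on $\Vcw_\rho(\tcw)$ sketched in Section~\ref{sec:summary-cw}, but now taking the merging stage into account via Lemma~\ref{lem:coherent-sums}. Fix a consistent restriction $T'$ of $\tcw^{\otimes N}$, arising from a partitioned restriction $R$ whose surviving constituent tensors are grouped into merged matrix multiplication tensors $\{t_s\}$ with disjoint $x$-, $y$-, $z$-variables. The key structural input is that every merge is \emph{coherent} (Lemma~\ref{lem:coherent-sums}): a merged tensor $t_s$ is built from zero-sequences whose coordinate set $[N]$ splits as $X_s \cup Y_s \cup Z_s$, where the $x$-index is $0$ on $X_s$, the $y$-index is $0$ on $Y_s$, and the $z$-index is $0$ on $Z_s$. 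Since $\tcw$ has only the six constituent tensors and the only ``$0$'' parts are $X_0,Y_0,Z_0$ (with the level-$2$ parts $X_2,Y_2,Z_2$ being singletons too), each coordinate $i$ of a zero-sequence is one of the four tensors $\tcw_{0,1,1},\tcw_{1,0,1},\tcw_{1,1,0}$ (contributing the ``$q$''-sized inner product in one of the three slots) or one of $\tcw_{2,0,0},\tcw_{0,2,0},\tcw_{0,0,2}$ (contributing a scalar). I would classify the coordinates of $t_s$ by which of these six roles they play, keeping track of the coherence partition; this gives a profile of $t_s$ described by a handful of nonnegative integers summing to $N$, and hence there are only $\mathrm{poly}(N)$ profiles in total.

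Next I would compute, for a fixed profile, both the volume of the merged tensor $t_s$ and the number of distinct merged tensors of that profile that can appear. For the volume: a merged tensor obtained from coherently merging zero-sequences of a given profile is equivalent to a matrix multiplication tensor whose three dimensions are products over the coordinates — concretely, $\Vol(t_s)$ is $q$ raised to the number of coordinates carrying a ``$q$''-factor in the appropriate slot, times a factor accounting for the \emph{merging}: along each coordinate in, say, $X_s$ (where the $x$-index is $0$) the several zero-sequences that are merged together combine their $y\times z$ contributions, so the relevant dimension gets multiplied across the free choices. This is exactly where the extra entropy term $H(\tfrac{1-\alpha}{2},\alpha,\tfrac{1-\alpha}{2})$ in the bound comes from, and where the exponent $\tfrac{\rho-2}{3}$ (rather than $\tfrac{\rho}{3}$) arises, because that part of the volume contributes to $\Vol(t_s)^{\rho/3}$ but the number of merged tensors sharing an $x$-variable is itself bounded using the same entropy, contributing a $-1\cdot\tfrac{2}{3}\cdot(\cdots)$ correction. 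For the count: since distinct merged tensors have disjoint $x$-variables, the number of merged tensors of a given profile is at most the number of distinct $x$-indices that can appear, which (by the standard multinomial/entropy estimate $\binom{N}{Np_1,\dots}\le 2^{NH(\vec p)}$) is $\le 2^{N H(\text{marginal of the profile on }x\text{-indices})}$, and symmetrically for $y$ and $z$; one takes the minimum of the three.

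Assembling: $\log\Vm_{\rho,N}(\tcw)\le \log\mathrm{poly}(N) + \max_{\text{profile}}\big[\min_\ell NH(\text{$\ell$-marginal}) + \tfrac{\rho}{3}\sum_s \log\Vol(t_s)\big]$, where the sum over $s$ is itself handled per-profile since all $t_s$ of a profile have the same volume. Dividing by $N$ and letting $N\to\infty$ kills the polynomial factor, and the discrete optimization over profiles becomes a continuous optimization over probability distributions on the six roles, subject to the coherence constraint. Introducing $\alpha$ for the total mass on the three ``$q$''-carrying roles and using symmetry (replacing $\min_\ell$ by a symmetric distribution, as in Theorem~\ref{thm:cw-formula}) collapses the optimization to a single variable $\alpha\in[0,1]$, and the objective becomes $H(\tfrac{2-\alpha}{3},\tfrac{2\alpha}{3},\tfrac{1-\alpha}{3}) + \tfrac{\alpha\rho}{3}\log q + \tfrac{\rho-2}{3}H(\tfrac{1-\alpha}{2},\alpha,\tfrac{1-\alpha}{2})$, as claimed. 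The main obstacle I anticipate is the bookkeeping in the second paragraph: correctly accounting for how coherent merging along the $X_s$, $Y_s$, $Z_s$ coordinates multiplies the dimensions of $t_s$, and separating which of those factors should be charged to the volume exponent $\tfrac{\rho}{3}$ versus offset by the count of merged tensors (yielding the net $\tfrac{\rho-2}{3}$ and the third entropy term). Getting the three-way symmetry of the coherence partition to line up with the $H(\tfrac{1-\alpha}{2},\alpha,\tfrac{1-\alpha}{2})$ term — in particular why the ``middle'' slot gets mass $\alpha$ while the two others split $1-\alpha$ evenly — is the delicate combinatorial point, and I would verify it carefully on the small cases ($\tcw^{\otimes 2}_{0,1,3}$ and $\tcw^{\otimes 2}_{0,2,2}$) before trusting the general computation.
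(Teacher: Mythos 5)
Your overall route is the same as the paper's (Lemma~\ref{lem:coherent-sums}, classification into polynomially many types, entropy counts of indices, and a volume-versus-count trade-off meant to produce the $\tfrac{\rho-2}{3}$ coefficient), but the step you yourself flag as delicate is exactly where the argument as written breaks. Two concrete problems. First, the ``profile of $t_s$'' is not well defined: within a single coherent merged tensor only the coherence partition $X_s \cup Y_s \cup Z_s$ is common to all summands, while different zero-sequences inside the same merged tensor can play different roles at the same coordinate (e.g.\ in $\tcw^{\otimes 2}_{0,2,2}$ the first coordinate is $\tcw_{0,1,1}$, $\tcw_{0,2,0}$ or $\tcw_{0,0,2}$ depending on the summand). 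One has to separate a \emph{line type} (the fractions of $x$-, $y$-, $z$-constant coordinates) from the \emph{type} of each individual summand, and split the volume of a line over types using $(\alpha+\beta)^{\rho/3}\le\alpha^{\rho/3}+\beta^{\rho/3}$; your per-line profile conflates the two. Second, and more importantly, the counting bound you actually assemble --- number of merged tensors of a profile at most the number of distinct $x$-indices, i.e.\ at most $2^{NH(\cdot)}$, then add $\tfrac{\rho}{3}\log\Vol(t_s)$ --- does not give the theorem. A merged tensor built from $Q$ merged $x$-, $y$-, $z$-indices has volume proportional to $Q^{3/2}$, so charging it at $\Vol^{\rho/3}$ contributes $\tfrac{\rho}{2}\log Q \approx \tfrac{\rho}{3}N H(\tfrac{1-\alpha}{2},\alpha,\tfrac{1-\alpha}{2})$; if the number of lines is bounded only by the total number of indices, nothing offsets this and you end up with coefficient $\tfrac{\rho}{3}$, not $\tfrac{\rho-2}{3}$, on the third entropy term.

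The missing ingredient is the finer accounting that forces the number of lines down: per-line caps $Q_x,Q_y,Q_z$ (determined by the line type) on how many $x$-, $y$-, $z$-indices a single coherent line can contain, the global caps $P_x,P_y,P_z$, and an optimization (in the paper a Lagrange-multiplier/convexity argument exploiting $1-\rho/2\le 0$) showing the worst case is when every line saturates its caps, so the number of lines is about $\sqrt[3]{P_xP_yP_z/(Q_xQ_yQ_z)}$. Dividing the line count by $Q$ is precisely what converts $\tfrac{\rho}{3}$ into $\tfrac{\rho-2}{3}$. Your first paragraph gestures at this offset (the ``$-1\cdot\tfrac{2}{3}\cdot(\cdots)$ correction''), but it never enters the inequality you write down, so the stated bound does not follow from your assembled estimate. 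Once that accounting is in place, the reduction of the optimization to the single parameter $\alpha$, and the specific shape $H(\tfrac{1-\alpha}{2},\alpha,\tfrac{1-\alpha}{2})$ with middle mass $\alpha$, follow from concavity of the entropy applied to the three per-line caps, as in the paper.
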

\begin{proof}
As usual we write $\tcw$ for $\tcwq$.
 Given an integer $N$, we will upper bound $\Vm_{\rho,N}(\tcw)$. Let $S$ be a consistent restriction of $\tcw^{\otimes N}$ and let $S = \sum_i S_i$ be its decomposition into disjoint coherent sums such that $S_i \approx \mm{n_i,m_i,p_i}$ and $\Vm_{\rho,N}(\tcw) = \sum_i (n_im_ip_i)^{\rho/3}$. We call each $S_i$ a \emph{line}.
 Lemma~\ref{lem:coherent-sums} shows that for each line $S_i$, each $t \in [N]$ is either $x$-constant, $y$-constant or $z$-constant. If there are $\gamma_x N,\gamma_y N,\gamma_z N$ of each then we say that $S_i$ has \emph{line type} $\tau_\ell = (\gamma_x,\gamma_y,\gamma_z)$. Since $\gamma_x,\gamma_y,\gamma_z$ are necessarily multiples of $1/N$, and $\gamma_x+\gamma_y+\gamma_z=1$, there are $O(N^2)$ different line types.

 Each $S_i$ results from merging several constituent tensors of $\tcw^{\otimes N}$. We will sometimes think of $S_i$ as the set of these constituent tensors.
 A constituent tensor $T$ of $\tcw^{\otimes N}$ which results from multiplying $\alpha_xN,\alpha_yN,\alpha_zN,\beta_xN,\beta_yN,\beta_zN$ each of the constituent tensors $\tcw_{0,1,1},\tcw_{1,0,1},\tcw_{1,1,0},\tcw_{2,0,0},\tcw_{0,2,0}$, $\tcw_{0,0,2}$ of $\tcw$ (respectively) is said to have \emph{type} $(\alpha_x,\alpha_y,\alpha_z,\beta_x,\beta_y,\beta_z)$. Since these numbers are multiples of $1/N$ with sum 1,
there are $O(N^5)$ possible types. We let $\Vol_\tau(S_i)$ be the sum of the volumes of all $T \in S_i$ of type $\tau$. Since the volume is the number of basic products $xyz$, it follows that $\Vol(S_i) = \sum_\tau \Vol_\tau(S_i)$.

 Consider a specific line type $\tau_\ell = (\gamma_x,\gamma_y,\gamma_z)$ and a specific type $\tau = (\alpha_x,\alpha_y,\alpha_z,\beta_x,\beta_y,\beta_z)$. We will upper bound
\[ U_{\rho,N}(\tau_\ell,\tau) = \sum_{i\colon S_i \text{ has line type } \tau_\ell} \Vol_\tau(S_i)^{\rho/3}. \]
 This implies an upper bound on $\Vm_{\rho,N}(\tcw)$ as follows. First, $\rho \leq 3$ implies that $(\alpha+\beta)^{\rho/3} \leq \alpha^{\rho/3} + \beta^{\rho/3}$ (this follows from Minkowski's inequality, for example), and so
\begin{align*}
 \sum_{i\colon S_i \text{ has line type } \tau_\ell} \Vol(S_i)^{\rho/3} &=
 \sum_{i\colon S_i \text{ has line type } \tau_\ell} \left(\sum_\tau \Vol_\tau(S_i)\right)^{\rho/3} \\  &\leq
 \sum_{i\colon S_i \text{ has line type } \tau_\ell} \sum_\tau \Vol_\tau(S_i)^{\rho/3} \\ &=
 \sum_\tau U_{\rho,N}(\tau_\ell,\tau) \\ &\leq O(N^5) \max_\tau U_{\rho,N}(\tau_\ell,\tau).
\end{align*}
 Summing over all $\tau_\ell$,
\[
 \Vm_{\rho,N}(\tcw) \leq O(N^7) \max_{\tau_\ell,\tau} U_{\rho,N}(\tau_\ell,\tau).
\]
 When taking the $N$th root and letting $N\to\infty$, the factor $O(N^7)$ disappears. Therefore
\begin{equation} \label{eq:crucial}
 \Vm(\tcw) \leq \max_{\tau_\ell,\tau} \lim_{N\to\infty} U_{\rho,N}(\tau_\ell,\tau)^{1/N}.
\end{equation}

 Let $\alpha = \alpha_x+\alpha_y+\alpha_z$ and $\beta = \beta_x+\beta_y+\beta_z$, and define
\begin{align*}
 P_x &= \exp_2 H(\alpha_x + \beta_y + \beta_z, \alpha_y + \alpha_z, \beta_x)N, \\
 P_y &= \exp_2 H(\beta_x + \alpha_y + \beta_z, \alpha_x + \alpha_z, \beta_y)N, \\
 P_z &= \exp_2 H(\beta_x + \beta_y + \alpha_z, \alpha_x + \alpha_y, \beta_z)N, \\
 Q_x &= \exp_2 H\big(\tfrac{\alpha_x + \beta_y + \beta_z - \gamma_x}{1-\gamma_x}, \tfrac{\alpha_y + \alpha_z}{1-\gamma_x}, \tfrac{\beta_x}{1-\gamma_x}\big)(1-\gamma_x)N, \\
 Q_y &= \exp_2 H\big(\tfrac{\beta_x + \alpha_y + \beta_z - \gamma_y}{1-\gamma_y}, \tfrac{\alpha_x + \alpha_z}{1-\gamma_y}, \tfrac{\beta_y}{1-\gamma_y}\big)(1-\gamma_y)N, \\
 Q_z &= \exp_2 H\big(\tfrac{\beta_x + \beta_y + \alpha_z - \gamma_z}{1-\gamma_z}, \tfrac{\alpha_x + \alpha_y}{1-\gamma_z}, \tfrac{\beta_z}{1-\gamma_z}\big)(1-\gamma_z)N.
\end{align*}
 Here $P_x,P_y,P_z$ are upper bounds on the number of different $x,y,z$-indices, respectively. The quantities $Q_x,Q_y,Q_z$ are upper bounds on the number of different $x,y,z$--indices, respectively, that can appear in any given line. The reason that $Q_x$ bounds the number of $x$-indices is that a $\gamma_x$-fraction of the indices are fixed at $0$, and these have to be deducted from the $\alpha_x+\beta_y+\beta_z$-fraction which is $0$ among the entire $N$ coordinates. The resulting distribution then applies only to the remaining $(1 - \gamma_x)N$ coordinates.

 From now on, we consider only lines of line type $\tau_\ell$. Let $I_t,J_t,K_t$ be the number of $x,y,z$-indices, respectively, in tensors of type $\tau$ in line $S_t$. Note that $\sum_t I_t \leq P_x$, $\sum_t J_t \leq P_y$, $\sum_t K_t \leq P_z$. As noted above, $I_t \leq Q_x$, $J_t \leq Q_y$, $K_t \leq Q_z$. In order to upper bound $\Vol_\tau(S_t)$, notice that if a matrix multiplication tensor involves $X,Y,Z$ each of $x,y,z$-variables, respectively, then its volume is $\sqrt{XYZ}$: indeed, for $\mm{n,m,p}$ we have $X=nm$, $Y=mp$, $Z=pn$ and the volume is $\Vol(\mm{n,m,p}) = nmp = \sqrt{XYZ}$.
 Each $x$-index contains exactly $(\alpha_y+\alpha_z)N$ coordinates equal to~$1$, and so it corresponds to $q^{(\alpha_y+\alpha_z)N}$ variables. Therefore
\[ \Vol_\tau(S_t) = \sqrt{q^{(\alpha_y+\alpha_z)N} I_t q^{(\alpha_x+\alpha_z)N} J_t q^{(\alpha_x+\alpha_y)N} K_t} = q^{\alpha N} \sqrt{I_t J_t K_t}. \]
 In total, we obtain the upper bound
\[
 U_{\rho,N}(\tau_\ell,\tau) \leq q^{(\alpha \rho/3) N} \sum_t (I_t J_t K_t)^{\rho/6}.
\]

 Let us focus now on the quantity
\[
 \qS = \sum_t (I_t J_t K_t)^{\rho/6}.
\]
 We want to obtain an upper bound on $\qS$.
 We can assume that $\sum_t I_t = P_x$, $\sum_t J_t = P_y$, $\sum_t K_t = P_z$. Lagrange multipliers show that this quantity is optimized when $I_t^{\rho/6-1} (J_t K_t)^{\rho/6},\allowbreak J_t^{\rho/6-1} (I_t K_t)^{\rho/6},\allowbreak K_t^{\rho/6-1} (I_t J_t)^{\rho/6}$ are all constant. Multiplying all these constraints together, we get that $I_t J_t K_t$ is constant (assuming $\rho \neq 2$) and so $I_t,J_t,K_t$ are constant. In order to find the constants, let $\qT$ be the number of different summands. Then $I_t = P_x/\qT$, $J_t = P_y/\qT$, $K_t = P_z/\qT$. On the other hand, $I_t \leq Q_x$, $J_t \leq Q_y$, $K_t \leq Q_z$, and so $\qT \geq \max (P_x/Q_x,P_y/Q_y,P_z/Q_z) \geq \sqrt[3]{P_xP_yP_z/Q_xQ_yQ_z}$. Therefore
\[
 \qS \leq \max_{\qT \geq \sqrt[3]{P_xP_yP_z/Q_xQ_yQ_z}} \qT^{1-\rho/2} (P_x P_y P_z)^{\rho/6}.
\]
 Since $1-\rho/2 \leq 0$, we would like $\qT$ to be as small as possible, and so
\[
 \qS \leq (P_xP_yP_z/Q_xQ_yQ_z)^{1/3-\rho/6} (P_xP_yP_z)^{\rho/6} = (P_xP_yP_z)^{1/3} (Q_xQ_yQ_z)^{(\rho-2)/6}.
\]
 Altogether, we obtain the upper bound
\[
 U_{\rho,N}(\tau_\ell,\tau) \leq q^{(\alpha \rho/3)N} (P_xP_yP_z)^{1/3} (Q_xQ_yQ_z)^{(\rho-2)/6}.
\]

The concavity of the entropy function shows that
\begin{align*}
 \frac{1}{N} \log &(P_xP_yP_z)^{1/3} \\
 =&\frac{H(\alpha_x\! +\! \beta_y \!+\! \beta_z, \alpha_y \!+ \!\alpha_z, \beta_x) + H(\beta_x \!+\! \alpha_y \!+v \beta_z, \alpha_x \!+\! \alpha_z, \beta_y) + H(\beta_x \!+\!\beta_y \!+\! \alpha_z, \alpha_x \!+\! \alpha_y, \beta_z)}{3} \\ \leq
 &H(\tfrac{\alpha+2\beta}{3},\tfrac{2\alpha}{3},\tfrac{\beta}{3}) = H(\tfrac{2-\alpha}{3},\tfrac{2\alpha}{3},\tfrac{1-\alpha}{3}).
\end{align*}
 Similarly,
\begin{align*}
 \frac{1}{N} \log (Q_xQ_yQ_z)^{1/2}
=& \tfrac{1-\gamma_x}{2} H(\tfrac{\alpha_x + \beta_y + \beta_z - \gamma_x}{1-\gamma_x}, \tfrac{\alpha_y + \alpha_z}{1-\gamma_x}, \tfrac{\beta_x}{1-\gamma_x}) +
 \tfrac{1-\gamma_y}{2} H(\tfrac{\beta_x + \alpha_y + \beta_z - \gamma_y}{1-\gamma_y}, \tfrac{\alpha_x + \alpha_z}{1-\gamma_y}, \tfrac{\beta_y}{1-\gamma_y}) +\\
 &\hspace{65mm}
 \tfrac{1-\gamma_z}{2} H(\tfrac{\beta_x + \beta_y + \alpha_z - \gamma_z}{1-\gamma_z}, \tfrac{\alpha_x + \alpha_y}{1-\gamma_z}, \tfrac{\beta_z}{1-\gamma_z}) \\
 \leq &H(\tfrac{\alpha+2\beta-1}{2},\alpha,\tfrac{\beta}{2}) = H(\tfrac{1-\alpha}{2},\alpha,\tfrac{1-\alpha}{2}).
\end{align*}

 Therefore
\[
 U_{\rho,N}(\tau_\ell,\tau)^{1/N} \leq q^{\alpha\rho/3} \exp_2 H(\tfrac{2-\alpha}{3},\tfrac{2\alpha}{3},\tfrac{1-\alpha}{3}) \exp_2 [H(\tfrac{1-\alpha}{2},\alpha,\tfrac{1-\alpha}{2})\tfrac{\rho-2}{3}].
\]
 The theorem now follows from~\eqref{eq:crucial}.
\end{proof}

\subsection{Upper bound for Coppersmith--Winograd-like tensors} \label{sec:upper-bound-general}
Extending the proof of Theorem~\ref{thm:upper-bound-special} to general Coppersmith--Winograd-like tensors involves mainly notational difficulties. Before stating the theorem, we need to describe the general form of the penalty term, that is, the last summand in the theorem.

Let $T$ be a symmetric partitioned tensor.
The proof will include an upper bound on all distributions in $\dist(T)$, which correspond to the types appearing in the proof of Theorem~\ref{thm:upper-bound-special}. The proof of the theorem shows that the worst bound is obtained on symmetric distributions, and accordingly we concentrate on these.
For any $P\in \sdist(T)$, define
\[
P_0=\sum_{s\in\suppz(T)}P(s),
\]
and define
the function $P^\ast\colon \supp(T)\to [0,1]$ as follows:
\[
P^\ast(s)=
\left\{
\begin{array}{ll}
P(s) &\textrm{ if }s\in \suppn(T),\\
0&\textrm{ otherwise}.
\end{array}
\right.
\]
Let $P_{\textrm{m}}\colon \Int \to [0,1]$ denote the marginal function of $P$:
for any $i\in \Int$,
\[
P_{\textrm{m}}(i)=\sum_{(i,j,k)\in \supp(T)}P(i,j,k).
\]
Note that $P_{\textrm{m}}$ is a probability distribution.
Similarly, let $P^\ast_{\textrm{m}}\colon \Int \to [0,1]$ denote the marginal function of~$P^\ast$.
Define the probability distribution $\tilde{P}\colon \Int\to [0,1]$ as follows:
\[
\tilde P(i) =
\left\{
\begin{array}{ll}
\frac{3}{2P_0}(P_{\textrm{m}}(0)-\frac{P_0}{3}) &\textrm{ if }i=0,\\
\frac{3}{2P_0}(P_{\textrm{m}}(i)-P^\ast_{\textrm{m}}(i))&\textrm{ otherwise}.
\end{array}
\right.
\]
Note that this is indeed a probability distribution, since
\[
\sum_{i\in\Int}\tilde P(i)=\frac{3}{2P_0}\left(1-\frac{P_0}{3}-(1-P_0)\right)=1.
\]

We can now state the general upper bound.

\begin{theorem}\label{thm:upper-bound}
Let $T$ be a Coppersmith--Winograd-like estimated symmetric partitioned tensor.
For any $\rho\in[2,3]$, the merging value $\Vm_\rho(T)$ is upper bounded by
\begin{equation*}
\log \Vm_\rho(T) \leq \max_{P\in\sdist(T)}
H(P_{\mathrm{m}})+\sum_{s\in \supp(T)} P(s)\log (\Val_\rho(T_s))+\frac{\rho-2}{3}\times P_0\times H(\tilde{P}).
\end{equation*}
\end{theorem}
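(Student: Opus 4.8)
The plan is to generalize the proof of Theorem~\ref{thm:upper-bound-special} in the natural way, replacing the explicit six-constituent accounting for $\tcw$ with the abstract data of a Coppersmith--Winograd-like tensor $T$. Fix an integer $N$ and a consistent restriction $S$ of $T^{\otimes N}$, decomposed into disjoint coherent lines $S=\sum_i S_i$ with $\Vm_{\rho,N}(T)=\sum_i \Val_\rho(S_i)$. By Lemma~\ref{lem:coherent-sums} (applied to the zero-sequence portion $\tz{S_i}$ of each line, which by Definition~\ref{def:consistent-sum} is the part that must be equivalent to a matrix multiplication tensor), each coordinate $t\in[N]$ of a line is $x$-constant, $y$-constant, or $z$-constant, giving $O(N^2)$ \emph{line types} $\tau_\ell=(\gamma_x,\gamma_y,\gamma_z)$. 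First I would classify each constituent tensor $T^{\otimes N}_s$ of $T^{\otimes N}$ by its \emph{type}: the vector recording, for each $s_0\in\supp(T)$, the fraction of coordinates using constituent $T_{s_0}$; this is a distribution $P\in\dist(T)$ supported on multiples of $1/N$, so there are only $\mathrm{poly}(N)$ types. As in the symmetric-reduction remark preceding the theorem, the expression is maximized on symmetric $P$, so I restrict attention to $P\in\sdist(T)$. For a fixed line type $\tau_\ell$ and type $P$, I define $U_{\rho,N}(\tau_\ell,P)$ as the sum of $\Val_\rho$-contributions (type-$P$ part) over lines of type $\tau_\ell$, and as in~\eqref{eq:crucial} reduce to bounding $\lim_{N\to\infty}U_{\rho,N}(\tau_\ell,P)^{1/N}$ over all $\tau_\ell,P$; the only subtlety here versus the special case is handling the estimated-value factors $\Val_\rho(T_s)$ for $s\in\suppn(T)$, which contribute the term $\sum_{s\in\supp(T)}P(s)\log\Val_\rho(T_s)$ and are unaffected by the line structure since by Definition~\ref{def:consistent-sum} we only merge the matrix-multiplication ($\suppz$) factors.

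The heart of the argument is the counting bound for a fixed $(\tau_\ell,P)$. I would introduce global counts $P_x,P_y,P_z$ — upper bounds on the number of distinct $x$-, $y$-, $z$-indices of type $P$ — which by the entropy estimate are $\exp_2(H(\cdot)N)$ for the appropriate marginals of $P$ on the three axes; by symmetry these all equal $\exp_2(H(P_{\mathrm m})N)$ up to the concavity slack, yielding $(P_xP_yP_z)^{1/3}\le \exp_2(H(P_{\mathrm m})N)$. Next I introduce per-line bounds $Q_x,Q_y,Q_z$ on the number of distinct $x$-, $y$-, $z$-indices that can appear inside a single line: in a line, a $\gamma_x$-fraction of the coordinates is frozen to $0$ on the $x$-axis, so the remaining $x$-index freedom is governed by the \emph{conditional} distribution of the axis-marginal given that those $\gamma_x N$ coordinates are removed from the mass on $0$. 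This is exactly where the auxiliary distribution $\tilde P$ enters: after the symmetric reduction, the frozen-out mass on axis-$x$ is $\gamma_x$, the residual $0$-mass is $P_{\mathrm m}(0)-\gamma_x$, and renormalizing by $1-\gamma_x$ gives an entropy which — again using concavity across the three axes and optimizing over the split of the zero-mass among axes, which forces $\gamma_x=\gamma_y=\gamma_z=P_0/3$ at the worst case — contributes $\frac{1}{2}(Q_xQ_yQ_z)^{1/3}\le \exp_2\!\big(\tfrac{P_0}{2}H(\tilde P)N\big)$, i.e.\ $(Q_xQ_yQ_z)^{1/3}\le\exp_2(P_0H(\tilde P)N)$. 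Then, exactly as in Theorem~\ref{thm:upper-bound-special}, the volume of a type-$P$ line tensor with $I_t,J_t,K_t$ indices on the three axes is $V(P)\sqrt{I_tJ_tK_t}$ where $V(P)$ is the fixed per-constituent volume factor $\exp_2(\sum_s P(s)\log\Vol(T_s))$ restricted to $\suppz$-factors (and the $\suppn$-factors contribute their $\Val_\rho$), and the Lagrange-multiplier / Minkowski computation shows $\sum_t (I_tJ_tK_t)^{\rho/6}\le (P_xP_yP_z)^{1/3}(Q_xQ_yQ_z)^{(\rho-2)/6}$. Combining, $U_{\rho,N}(\tau_\ell,P)^{1/N}\le \exp_2\!\big(H(P_{\mathrm m})+\sum_s P(s)\log\Val_\rho(T_s)\big)\cdot\exp_2\!\big(\tfrac{\rho-2}{3}P_0 H(\tilde P)\big)$, and~\eqref{eq:crucial} finishes the proof.

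I expect the main obstacle to be the bookkeeping for $Q_x,Q_y,Q_z$ in the general Coppersmith--Winograd-like setting, specifically verifying that the residual index-freedom inside a coherent line is governed precisely by $\tilde P$. In the special case one sees directly which of $\alpha_x,\alpha_y,\alpha_z,\beta_x,\ldots$ contribute $0$ on each axis; in general one must argue from Definition~\ref{def:cw-like}: parts $X_0,X_D$ are singletons, so coordinates using a constituent with $x$-index $0$ or $D$ contribute no $x$-branching, while coordinates with $0<\alpha<D$ contribute $|X_\alpha|$-fold branching. The coherence of the line (from Lemma~\ref{lem:coherent-sums}) forces, on the $\gamma_x N$ $x$-constant coordinates, that the $x$-index is identically $0$; subtracting this mass from $P_{\mathrm m}(0)$ and renormalizing yields $\tilde P$ after the symmetrization that equalizes $\gamma_x=\gamma_y=\gamma_z$. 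One must also check that the $\suppn$-constituents behave as ``opaque'' blocks — never frozen, never merged — so that their coordinates always contribute full index-branching on all three axes and their $\Val_\rho$ factors pass through multiplicatively; this is guaranteed by the pattern mechanism in Definition~\ref{def:consistent-sum}. Finally, the concavity reductions (across the three axes, and in the split of the zero-mass) need to be stated carefully to see that the maximum is attained at a symmetric $P$ with $\gamma_x=\gamma_y=\gamma_z=P_0/3$, which is what produces the clean stated form; this parallels the two concavity computations at the end of the proof of Theorem~\ref{thm:upper-bound-special} but with $H(P_{\mathrm m})$ and $H(\tilde P)$ in place of the explicit three-argument entropies.
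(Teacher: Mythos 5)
Your overall route is the paper's route (decompose into coherent lines, classify lines by a distribution $P$ and by how many coordinates are $x$-, $y$-, $z$-constant, bound the global and per-line index counts by entropies, and reuse the Lagrange/Minkowski step of Theorem~\ref{thm:upper-bound-special} together with the symmetric and balanced-$\gamma$ reductions), but there is a genuine gap at exactly the step that is new in the general case: the per-line index count and the identification of $\tilde P$. By Definition~\ref{def:consistent-sum}, all summands of a line conform to a single pattern $\pi$, so at every coordinate with $\pi(t)\in\suppn(T)$ the $x$-, $y$- and $z$-entries are completely fixed; these coordinates contribute \emph{no} index-branching inside the line (and Lemma~\ref{lem:coherent-sums} only classifies the coordinates with $\pi(t)=0$, so ``each $t\in[N]$ is $x$-, $y$- or $z$-constant'' is not quite right either --- there is a fourth alternative). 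Hence, in a typical line the coordinates free for the $x$-index are only the $y$-constant and $z$-constant ones, a $\tfrac{2P_0}{3}$ fraction, and their $x$-entry distribution is obtained from $\Pm$ by removing \emph{both} the $\tfrac{P_0}{3}$ mass of forced zeroes \emph{and} the mass $P^{\ast}_{\mathrm m}$ coming from the $\suppn$-coordinates, then renormalizing by $\tfrac{2P_0}{3}$; this is precisely why $\tilde P$ contains the terms $\Pm(i)-P^{\ast}_{\mathrm m}(i)$ and the factor $\tfrac{3}{2P_0}$. Your bookkeeping instead freezes only the $x$-constant coordinates, renormalizes by $1-\gamma_x$, and asserts that the $\suppn$-constituents ``always contribute full index-branching''; the resulting residual distribution is not $\tilde P$, and the corresponding per-line count $Q_x\approx\exp_2\bigl((1-\tfrac{P_0}{3})H(\cdot)N\bigr)$ over-counts (by concavity it is at least $\exp_2\bigl(\tfrac{2P_0}{3}H(\tilde P)N\bigr)$), so carried through consistently your argument only yields a bound strictly weaker than the one claimed in the theorem.

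This shows up as an internal inconsistency in your numerology: from your stated intermediate bound $(Q_xQ_yQ_z)^{1/3}\le\exp_2\bigl(P_0H(\tilde P)N\bigr)$, the factor $(Q_xQ_yQ_z)^{(\rho-2)/6}$ gives a final coefficient $\tfrac{\rho-2}{2}P_0H(\tilde P)$, not the $\tfrac{\rho-2}{3}P_0H(\tilde P)$ you (correctly) want. The accounting that does give the theorem, and is how the paper argues, is: the number of $x$-indices of summands in a typical line is $\approx\exp_2\bigl(\tfrac{2P_0}{3}NH(\tilde P)\bigr)$ (equivalently $(Q_xQ_yQ_z)^{1/2}\le\exp_2\bigl(P_0H(\tilde P)N\bigr)$), the total number of $x$-indices over typical lines is $\approx\exp_2\bigl(NH(\Pm)\bigr)$, the number of typical lines is bounded by the ratio of these, and the $\suppn$-factors enter only through the multiplicative values $\Val_\rho(T_s)^{NP(s)}$ and through the fixed-coordinate structure reflected in $\tilde P$. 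Once you replace your treatment of the $\suppn$-coordinates (fixed within a line, never free) and rederive $Q_x,Q_y,Q_z$ accordingly, the rest of your outline matches the paper's proof.
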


\begin{proof}[\Proofsketch]

Given an integer $N$, we will upper bound $\Vm_{\rho,N}(T)$. Let $S$ be a consistent restriction of $T^{\otimes N}$, and let $S = \sum_i S_i$ be its decomposition into disjoint coherent sums, so that $\Vm_{\rho,N}(T) = \sum_i \Val_\rho(S_i)$. We call each $S_i$ a \emph{line}. Lemma~\ref{lem:coherent-sums} shows that for each line $S_i$, each $t \in [N]$ is either $x$-constant, $y$-constant, $z$-constant, or the $t$th coordinate of all summands in $S_i$ is some fixed $s \in \suppn(T)$.
As in the proof of Theorem~\ref{thm:upper-bound-special}, it is enough to bound $\sum_i \Vol_P(\tz{S_i})^{\rho/3} \Val_\rho(\tn{S_i})$ for all distributions $P \in \dist(T)$.
Calculation shows that the worst distribution is symmetric, so fix some $P \in \sdist(T)$.
Further calculation shows that the largest contribution to $\sum_i \Val_\rho(S_i)$ comes from lines in which a $P_0/3$~fraction of the coordinates are $x$-constant, $y$-constant, and $z$-constant each, and a $1-P_0$~fraction of the coordinates are fixed. We call such a line a \emph{typical line}. So our goal is to bound
\[ V = \sum_{i\colon S_i\text{ is typical}} \Vol_P(\tz{S_i})^{\rho/3} \Val_\rho(\tn{S_i}). \]

The total number $R$ of $x$-indices in typical lines is easily seen to satisfy $\log R \approx NH(\Pm)$.
Now let $S_i$ be a typical line. The number $Q$ of $x$-indices of summands in $S_i$ satisfies $\log Q \approx \frac{2P_0}{3} N H(\tilde{P})$. Indeed, only the $y$-constant and $z$-constant coordinates are not fixed, and there are $\frac{2P_0}{3} N$ of them. Among the fixed coordinates, the $x$-constant contain $\frac{P_0}{3} N$ zeroes, and the others contain $\Pm^\ast(i)N$ coordinates whose value is $i$. It is not hard to check that the distribution of the $y$-constant and $z$-constant coordinates is indeed given by $\tilde{P}$.

As the proof of Theorem~\ref{thm:upper-bound-special} shows, we can upper bound $V$ by assuming that each typical line contains the maximal number of $x$-indices, $y$-indices, and $z$-indices, namely $Q$. The number of typical lines is thus $R/Q$.

We proceed to calculate $\Vol(\tz{S_i})$. Recall that $\tz{S_i} \approx \sum_{t \in S_i} \tz{t}$, and for each $t \in S_i$, the volume $U = \Vol(\tz{t})$ is $U = \prod_{s \in \suppz(T)} \Vol(T_s)^{NP(s)}$. If $\tz{t} \approx \mm{n,m,p}$ then each $x$-index corresponds to $nm$ $x$-variables, each $y$-index to $mp$ $y$-variables, and each $z$-index to $pn$ $z$-variables. The tensor $\tz{S_i}$ has $X = Qnm$ $x$-variables, $Y = Qmp$ $y$-variables, and $Z = Qpn$ $z$-variables, so its volume is $\Vol(\tz{S_i}) = \sqrt{XYZ} = Q^{3/2} nmp = Q^{3/2} U$.
Therefore
\[
V \leq \frac{R}{Q} \Vol(\tz{S_i})^{\rho/3} \Val_\rho(\tn{S_i}) = R Q^{\rho/2-1} U^{\rho/3} \prod_{s \in \suppn(T)} \Val_\rho(T_s)^{NP(s)}.
\]
Taking the logarithm, we deduce
\begin{align*}
\log V &\lesssim NH(\Pm) + \frac{\rho-2}{3} \times NP_0 \times H(\tilde{P}) + \frac{\rho}{3} \sum_{s \in \suppz(T)} NP(s)\Vol(T_s) + \sum_{s \in \suppn(T)} NP(s) \Val_\rho(T_s) \\ &=
NH(\Pm) + \frac{\rho-2}{3} \times NP_0 \times H(\tilde{P}) + \sum_{s \in \supp(T)} NP(s)\Val_\rho(T_s).
\end{align*}
This shows that $\log V^{1/N} = (\log V)/N$ is upper bounded in the limit by the expression given by the theorem.
\end{proof}

\subsection{Numerical calculations} \label{sec:simulations}
We now analyze the canonical recursive Coppersmith--Winograd approach by applying Theorem~\ref{thm:upper-bound} to the canonical $\tcw^{\otimes 2^r}$ for several values $r\ge 0$.
Since, as described in footnote~\ref{fn:exact-calc} on page~\pageref{fn:exact-calc}, Theorem~\ref{thm:cw-formula} generally does not give an exact formula for computing the \pr value of each constituent tensor, we use the upper bound appearing in Theorem~\ref{thm:cw-formula} for estimating these values in our calculations (instead of the lower bound of Theorem~\ref{thm:cw-formula}, as was done in Section~\ref{sec:cw-powers}); this can only deteriorate the results we obtain.

The numerical results of this analysis\footnote{All the programs used to perform the numerical calculations
described in this subsection
are available as
\url{http://www.francoislegall.com/MatrixMultiplication/programsLB.zip}.}  are given in Table~\ref{tab:simulations}, and can be interpreted as follows:
for given $r$ and $q$, the corresponding value presented in the table is the solution $\alpha$ of $\Vmub_\alpha(\tcw^{\otimes 2^r}) = (q+2)^{2^r}$, where $\Vmub_\alpha(\tcw^{\otimes 2^r})$ is the upper bound on $\Vm_\alpha(\tcw^{\otimes 2^r})$ given by Theorem~\ref{thm:upper-bound}, and is the best value that can be possibly obtained from the canonical $\tcw^{\otimes 2^r}$ and its powers.
In particular, this shows that analyzing the
64th, 128th, 256th powers and higher powers of the tensor $\tcw$ for $q=5$ using the the canonical recursive Coppersmith--Winograd approach cannot give an upper bound on~$\omega$ smaller than $2.3725$.

\begin{table}[h!]
\renewcommand\arraystretch{1}
\begin{center}
\caption{The solution $\alpha$ of the equation $\Vmub_\alpha(\tcw^{\otimes 2^r}) = (q+2)^{2^r}$, rounded down to five decimal digits, for several values of $r$ and $q$.}\label{tab:simulations}\vspace{3mm}
\begin{tabular}{c|c|c|c|c|c|c|}
&$r=0$&$r=1$&$r=2$&$r=3$&$r=4$\\
\hline
$q=1$&$2.2387$&$2.3075$&$2.4587$&$2.5772$&$2.6184$\\
$q=2$&$2.2540$&$2.3181$&$2.4187$&$2.4623$&$2.4673$\\
$q=3$&$2.2725$&$2.3203$&$2.3834$&$2.4015$&$2.4025$\\
$q=4$&$2.2907$&$2.3262$&$2.3690$&$2.3788$&$2.3791$\\
$q=5$&$2.3078$&$2.3349$&$2.3659$&$2.3723$&$2.3725$\\
$q=6$&$2.3234$&$2.3448$&$2.3682$&$2.3731$&$2.3733$\\
$q=7$&$2.3377$&$2.3550$&$2.3733$&$2.3775$&$2.3776$\\
$q=8$&$2.3508$&$2.3651$&$2.3798$&$2.3833$&$2.3834$\\
\hline
\end{tabular}
\end{center}
\end{table}

\section{Generalizations} \label{sec:generalizations}
In this section we show how our results can be extended to analyze the limitations of current implementations of the laser method applied to a large class of partitioned tensors, much larger than the class of Coppersmith--Winograd-like tensors.

We start by giving another definition of merging, which we call $\emph{coherent merging}$.
\begin{definition} \label{def:merging-general}
 Let $T$ be a symmetric estimated partitioned tensor. Given $\rho \in [2,3]$ and $N \geq 1$, we define $\Vc_{\rho,N}(T)$ to be the maximum of $\sum_{s \in \supp(T')} \Val_\rho(T'_s)$ over all consistent restrictions $T'$ of $T^{\otimes N}$ in which each $\tz{{T'_s}}$ is a coherent sum.
 The \emph{coherent merging value} of $T$ is the function
\[ \Vc_\rho(T) = \lim_{N\to\infty} \Vc_{\rho,N}(T)^{1/N}. \]
\end{definition}
The only difference with Definition \ref{def:merging} is that we now require that all consistent sums be coherent.
Naturally, for any symmetric estimated partitioned tensor $T$ and any $\rho \in [2,3]$, we have
\[
\Vm_\rho(T) \ge \Vc_\rho(T) \ge \Vcw_\rho(T).
\]
Note that Lemma \ref{lem:coherent-sums} implies that $\Vm_\rho(T) = \Vc_\rho(T)$ for any Coppersmith--Winograd-like tensor~$T$.
The reason for introducing this new version of merging is that Lemma \ref{lem:coherent-sums} may not hold for arbitrary tensors: there exist (symmetric) partitioned tensors for which consistent but not coherent sums of zero-sequences of constituent tensors exist. Since all known implementations of the laser method nevertheless construct coherent restrictions of $T^{\otimes N}$, and in particular no general technique is known for constructing restrictions with consistent but not coherent sums of zero-sequences,
coherent merging represents what can be done by current implementations of the laser method.

The class of tensors to which the techniques developed in this paper can be applied
is defined as follows.
\begin{definition}
A symmetric partitioned tensor $T$ with support $\supp(T)\subseteq \Int\times\Int\times \Int$ belongs to the class $\classtensor$ if
\[
\suppz(T)=\supp(T)\cap\Big((\{0\}\times\Int\times\Int) \cup (\Int\times\{0\}\times\Int) \cup (\Int\times\Int\times\{0\})\Big),
\]
where $\suppz(T)\subseteq \supp(T)$ is the set defined in Definition \ref{def:consistent-sum}.
\end{definition}
Note that, while the class $\classtensor$ includes all Coppersmith--Winograd-like tensors, it is much larger.
It includes in particular tensors that are not tight.

We can now state our most general result.
\begin{theorem}\label{thm:upper-bound-general}
Let $T$ be an estimated tensor in $\classtensor$.
For any $\rho\in[2,3]$, the coherent merging value $\Vc_\rho(T)$ is upper bounded by
\begin{equation*}
\log \Vc_\rho(T) \leq \max_{P\in\sdist(T)}
H(\Pm)+\sum_{s\in \supp(T)} P(s)\log (\Val_\rho(T_s))+\frac{\rho-2}{3}\times P_0\times H(\tilde{P}).
\end{equation*}
\end{theorem}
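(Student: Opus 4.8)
The plan is to follow the proof of Theorem~\ref{thm:upper-bound} essentially verbatim, the key point being that the only place the Coppersmith--Winograd-like hypotheses entered that argument was through Lemma~\ref{lem:coherent-sums}, which forced every consistent sum of zero-sequences to be coherent. For the coherent merging value $\Vc_\rho$ this input is no longer needed: by Definition~\ref{def:merging-general} we range only over consistent restrictions $T'$ of $T^{\otimes N}$ in which every $\tz{T'_s}$ is a coherent sum, so coherence is available by fiat. Consequently the argument should go through for every $T\in\classtensor$ after dropping tightness and the structural hypotheses (1)--(4) of Definition~\ref{def:cw-like}; the only feature of $\classtensor$ actually used is that $\suppz(T)$ is exactly the set of annotations with at least one zero coordinate and that those constituent tensors are matrix multiplication tensors, so that $\Val_\rho(T_s)=\Vol(T_s)^{\rho/3}$ for $s\in\suppz(T)$ and volumes multiply under $\otimes$ and merging, with $\Vol(\mm{n,m,p})=\sqrt{(nm)(mp)(pn)}$.

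Concretely, I would fix $N$ and a consistent restriction $S=\sum_i S_i$ achieving $\Vc_{\rho,N}(T)$ with each $\tz{S_i}$ coherent, and then classify the data exactly as in Theorem~\ref{thm:upper-bound}. Each summand of $T^{\otimes N}$ has an empirical \emph{type} $P\in\dist(T)$, and there are only polynomially many types; each line $S_i$ conforms to a pattern, and by coherence each of its coordinates is $x$-constant, $y$-constant, $z$-constant, or fixed at some $s\in\suppn(T)$, giving a \emph{line type} recording the fractions $\gamma_x,\gamma_y,\gamma_z$ of each kind of zero-coordinate --- again polynomially many. Since these polynomial factors vanish on taking $N$th roots, it suffices to bound the contribution $V$ of a single (type, line type) pair. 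As in Theorem~\ref{thm:upper-bound}, a concavity-plus-symmetrization argument then reduces this to a symmetric $P\in\sdist(T)$ and to a \emph{typical} line type, in which a $P_0/3$ fraction of coordinates is $x$-constant, $y$-constant, and $z$-constant each, and a $1-P_0$ fraction is fixed.

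For such a typical line of type $P$ the two counting estimates are: the total number $R$ of distinct $x$-indices appearing among typical lines satisfies $\log R\approx N\,H(\Pm)$, and the number $Q$ of distinct $x$-indices of summands inside one typical line satisfies $\log Q\approx\tfrac{2P_0}{3}\,N\,H(\tilde P)$, since only the $y$-constant and $z$-constant zero-coordinates (a $\tfrac{2P_0}{3}$ fraction) vary on the $x$-side and their $x$-coordinates are distributed according to $\tilde P$ by construction. Strong disjointness bounds the number of typical lines by $R/Q$, and we may assume each contains exactly $Q$ distinct $x$-indices, $Q$ distinct $y$-indices, and $Q$ distinct $z$-indices; computing $\Vol(\tz{S_i})=Q^{3/2}U$ with $U=\prod_{s\in\suppz(T)}\Vol(T_s)^{NP(s)}$ as before yields
\[ V\;\le\;\frac{R}{Q}\,\bigl(Q^{3/2}U\bigr)^{\rho/3}\,\Val_\rho(\tn{S_i})\;=\;R\,Q^{\rho/2-1}\,U^{\rho/3}\prod_{s\in\suppn(T)}\Val_\rho(T_s)^{NP(s)}. \]
Taking logarithms, using $\tfrac{\rho}{3}\log\Vol(T_s)=\log\Val_\rho(T_s)$ for $s\in\suppz(T)$ and $(\tfrac{\rho}{2}-1)\cdot\tfrac{2P_0}{3}=\tfrac{\rho-2}{3}P_0$, gives $\tfrac1N\log V\lesssim H(\Pm)+\tfrac{\rho-2}{3}P_0\,H(\tilde P)+\sum_{s\in\supp(T)}P(s)\log\Val_\rho(T_s)$; letting $N\to\infty$ and maximizing over $P\in\sdist(T)$ is exactly the asserted bound.

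The main obstacle, as in Theorem~\ref{thm:upper-bound}, is not this algebra but two structural points. First, the counting identity $\log Q\approx\tfrac{2P_0}{3}N\,H(\tilde P)$: one must argue carefully that within a coherent line exactly the $y$-constant and $z$-constant zero-coordinates are free on the $x$-side, that the $x$-constant zero-coordinates and the $\suppn$ coordinates are pinned, and that the induced $x$-marginal is precisely the distribution $\tilde P$ --- this is the source of the baroque definition of $\tilde P$. Second, the reduction to symmetric $P$ and to balanced line types ($\gamma_x=\gamma_y=\gamma_z=P_0/3$) requires a genuine optimization argument: concavity of $H$ over the line-type simplex handles the balancing, while averaging $P$ over its three rotations --- which does not decrease the objective, by concavity of $H(\Pm)$ together with the rotation-symmetry of the value estimates of $T$ --- handles the symmetrization. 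Since neither step uses any property of $\tcw$ beyond membership in $\classtensor$ and coherence, the same argument delivers the general statement.
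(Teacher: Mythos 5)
Your proposal is correct and follows exactly the paper's route: the paper itself proves this theorem by observing that the argument for Theorem~\ref{thm:upper-bound} used only membership in $\classtensor$ plus the coherence of consistent sums, and that for $\Vc_\rho$ coherence holds by definition rather than via Lemma~\ref{lem:coherent-sums}. Your recapitulation of the counting argument ($R$, $Q$, typical lines, and the symmetrization) matches the paper's proof sketch of Theorem~\ref{thm:upper-bound}, so there is nothing further to add.
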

The proof of Theorem \ref{thm:upper-bound-general} is exactly the same as the proof of Theorem \ref{thm:upper-bound}, since the only properties of the estimated symmetric partitioned tensor $T$ actually used in the proof of Theorem~\ref{thm:upper-bound} were that $T\in \classtensor$ and the fact that only coherent sums need be considered due to Lemma \ref{lem:coherent-sums}. Indeed, in our case the latter property trivially holds from the definition of the coherent merging value.

\section{Discussion} \label{sec:discussion}

Our main result shows that the conjecture $\omega = 2$ cannot be proved using the laser method with merging applied to the tensor $\tcw$. On the other hand, we believe that the technique can be used to improve known bounds on $\omega$. We believe that it is possible that
\[ \Vm_\rho(\tcw) > \limsup_{r\to\infty} \Vcw_\rho(\tcw^{\otimes 2^r}). \]
The reason is that $\Vcw_{\rho,N/2^r}(\tcw^{\otimes 2^r})$ corresponds to a lower bound on $\Vm_{\rho,N}(\tcw)$ in which merging is done in groups of $2^r$ coordinates at a time, for fixed $r$; if the merging width $2^r$ is allowed to vary with~$N$, then a better lower bound on $\Vm_{\rho,N}(\tcw)$ can potentially be obtained.

\medskip

Our main result gives a limit on the possible upper bounds on $\omega$ obtainable for given $q \geq 1$ which deteriorates as $q$ gets smaller. In contrast, for known constructions the best $q$ is $q = 5$ (or $q = 6$ for the construction without merging), a behavior which is also apparent in the upper bounds we get for $\tcw^{\otimes 4}$ and higher powers. This leads us to suspect that our upper bound on the merging value is not tight. We leave it as an open question to determine the correct value of $\Vm_\rho(\tcw)$.

A similar issue concerns the \pr value $\Vcw_\rho(\tcw^{\otimes 2^r})$. Theorem~\ref{thm:cw-formula} can be used to calculate the value for $r = 0$ and $r = 1$, but already for $r = 2$ there is a gap between the lower and upper bounds. We conjecture that the lower bound is tight, but have so far been unable to prove this. One concrete reason for this conjecture is that in certain cases, if we assume that the upper bound is tight then we can obtain a bound $\omega \leq \rho$ for some $\rho < 2$. A deeper reason is that the upper bound does not rely on the fact that the tensor $T'$ in the definition of $\Vcw_{\rho,N}(T)$ is obtained from $T^{\otimes N}$ by a partitioned restriction. Rather, it only depends on the fact that the constituent tensors of $T'$ are on disjoint variables. Indeed, if we do not insist that $T'$ be obtained by a partitioned restriction, then the upper bound is tight. The main difficulty in proving the lower bound is to construct $T'$ using a partitioned restriction; without this stipulation, a simple randomized construction matches the upper bound.

Our upper bound on $\Vm_\rho(\tcw)$ also ignores the fact that $T'$ is a partitioned restriction, and is also tight if do not insist that $T'$ be obtained by a partitioned restriction. This is another reason to believe that the upper bound on $\Vm_\rho(\tcw)$ is not tight.

\medskip

Research in matrix multiplication has proceeded in the past by finding new techniques and new identities (corresponding to upper bounds on ranks or border ranks of tensors). Notwithstanding recent developments, and ignoring the group-theoretic method which so far has not produced new upper bounds on $\omega$, this process seems to have stagnated. While the new technique we propose in this paper could potentially lead to improved bounds on $\omega$, we nevertheless find the most promising research direction (besides the group-theoretic method~\cite{CohnUmans03,CKSU} and the $s$-rank~\cite{CohnUmans13}) to be \emph{finding new identities}\footnote{A tantalizing source for new identities is the ``basic'' Coppersmith--Winograd identity itself and its powers, obtained by setting $x^{[2]}_{q+1} = y^{[2]}_{q+1} = z^{[2]}_{q+1} = 0$. As discussed in~\cite{CoppersmithWinograd}, it is possible that the $N$th tensor power of this tensor has border rank significantly lower than the known upper bound $(q+2)^N$, though so far no new bounds are known for any $N$.}. Perhaps a systematic search for new identities could be automated and would lead to significantly improved upper bounds on $\omega$. The lower bound techniques we developed may be instrumental for such a search, since they make possible to immediately rule out unpromising identities, i.e., to show that a given identity, and \emph{any} of its powers, even accounting for any possible repartitioning scheme, cannot lead to $\omega=2$.


\appendix

\section{Proofs of results from Section~\ref{sec:value}} \label{sec:value-proofs}

We start by showing that the limit in the definition of $\Vcw_{\rho,P}(T)$ exists. If $\prnd{(N_1+N_2)}{P} = \prnd{N_1}{P} + \prnd{N_2}{P}$ then it is not hard to check that $\Vcw_{\rho,P,N_1+N_2}(T) \geq \Vcw_{\rho,P,N_1}(T) \Vcw_{\rho,P,N_2}(T)$. While this inequality is not true in general due to rounding, it is true approximately. A careful application of Fekete's lemma then shows that the limit exists.

We proceed to prove Theorem~\ref{thm:cw-formula-exact}.
\begin{proof}[Proof of Theorem~\ref{thm:cw-formula-exact}]
 It is not hard to check that $\Vcw_\rho(T) \geq \Vcw_{\rho,P}(T)$ for every $P \in \dist(T)$, and we proceed to prove the other direction.

 For $P \in \dist(T),P' \in \dist(\rot{T}),P'' \in \dist(\rott{T})$, define $\Vcw_{\rho,P,P',P'',N}(T)$ by naturally extending the definition of $\Vcw_{\rho,P,N}(T)$ to allow different distributions for factors coming from $T,\rot{T},\rott{T}$.

 Given $N$, notice that each $s \in \supp((T\otimes\rot{T}\otimes\rott{T})^{\otimes N})$ corresponds to some distributions $P_s \in \dist(T),P'_s \in \dist(\rot{T}),P''_s \in \dist(\rott{T})$ (obtained by dividing the actual quantities by $N$), and there are $N^{O(1)}$ many such triples of distributions, forming a set $\dist_N$. It is not hard to check that
 \[ \Vcw_{\rho,N}(T) \leq \sum_{(P,P',P'') \in \dist_N} \Vcw_{\rho,P,P',P'',N}(T) \leq N^{O(1)} \max_{(P,P',P'') \in \dist_N} \Vcw_{\rho,P,P',P'',N}(T). \]
 For each $N$, let $(P_N,P'_N,P''_N) \in \dist_N$ be the distribution maximizing $\Vcw_{\rho,P,P',P'',N}(T)$. The triples $P_N,P'_N,P''_N$ have an accumulation point $P,P',P''$ which satisfies $\lim\inf_{N\to\infty} \Vcw_{\rho,N}(T)^{1/3N} \leq\! \Vcw_{\rho,P,P'\!,P''}$ (since $(N^{O(1)})^{1/3N} \to 1$), showing that
 \[ \Vcw_\rho(T) \leq \max_{(P,P',P'') \in \dist(T)\times\dist(\rot{T})\times\dist(\rott{T})} \Vcw_{\rho,P,P',P''}(T). \]

 In order to complete the proof, we need to show that the maximum is obtained when $P' = \rot{P}$ and $P'' = \rott{P}$. For $P \in \dist(T),P' \in \dist(\rot{T}), P'' \in \dist(\rott{T})$, define $Q = (P+\rott{P'}+\rot{P''})/3$.
 Consider the partitioned degeneration $T'$ of $(T\otimes\rot{T}\otimes\rott{T})^{\otimes N}$ witnessing $\Vcw_{\rho,P,P',P'',N}(T)$. We can view $T' \otimes \rot{T'} \otimes \rott{T'}$ as a partitioned degeneration of $(T\otimes\rot{T}\otimes\rott{T})^{\otimes 3N}$ witnessing $\Vcw_{\rho,Q,3N}(T) \geq \Vcw_{\rho,P,P',P'',N}(T)^3$, and so $\Vcw_{\rho,Q}(T) \geq \Vcw_{\rho,P,\rot{P},\rott{P}}(T)$.
\end{proof}

Theorem~\ref{thm:cw-method} is a simple corollary.
\begin{proof}[Proof of Theorem~\ref{thm:cw-method}]
 Let $\rho \in [2,3]$, and suppose that $\Val_\rho(T_s) \leq \Vg_\rho(T_s)$ for all $s \in \supp(S)$. Let $P \in \dist(T)$ be a distribution such that $\Vcw_\rho(T) = \Vcw_{\rho,P}(T)$, which exists by Theorem~\ref{thm:cw-formula-exact}. Fix a value of $N$, and let $T'$ be a partitioned degeneration of $(T\otimes\rot{T}\otimes\rott{T})^{\otimes N}$ with strongly disjoint support such that $\Vcw_{\rho,P,N}(T) = \sum_{s \in \supp_P(T')} \Val_\rho(T'_s)$.
 Lemma~\ref{lem:value-properties} implies that $\Vg_\rho((T\otimes\rot{T}\otimes\rott{T})^{\otimes N}) \geq \Vcw_{\rho,P,3N}(T)$, and so $\Vg_\rho(T) \geq \Vcw_{\rho,P,3N}(T)^{1/3N}$. Taking the limit $N\to\infty$, we conclude that $\Vg_\rho(T) \geq \Vcw_{\rho,P}(T) = \Vcw_\rho(T)$.
\end{proof}

Finally, we prove the upper bound part of Theorem~\ref{thm:cw-formula}.
\begin{proof}[Proof of upper bound part of Theorem~\ref{thm:cw-formula}]
 Consider first general (not necessarily symmetric) tensors $T$.
 In view of Theorem~\ref{thm:cw-formula-exact}, it is enough to prove that for each $P \in \dist(T)$,
\[
 \log\Vcw_{\rho,P}(T) \leq \sum_{\ell=1}^3 \frac{H(P_\ell)}{3} + \EE_{s\sim P}[\log\Val_\rho(T_s)].
\]
 For any $N \geq 1$, $\Vcw_{\rho,P,3N} = \sum_{s \in \supp_P(T')} \Val_\rho(T'_s)$ for some partitioned restriction $T'$ of $(T\otimes\rot{T}\otimes\rott{T})^{\otimes N}$ with strongly disjoint support. For every $s\in\supp_P(T')$ we have $\log\Val_\rho(T^{\otimes N}_s) = 3N\EE_{\sigma\sim P}[\log\Val_\rho(T_\sigma)] \pm O(1)$. Since all $x$-indices in $\supp_P(T')$ are disjoint, $\log |\supp_P(T')| \leq NH(P_1)+NH(P_2)+NH(P_3)$, using the upper bound on the corresponding multinomial coefficient. 
 We conclude that
\[
 \log\Vcw_{\rho,P,N}(T) \leq 3N\sum_{\ell=1}^3 \frac{H(P_\ell)}{3} + 3N\EE_{s\sim P}[\log\Val_\rho(T_s)] + O(1).
\]
 The desired inequality follows by taking the limit $N\to\infty$.

 Suppose now that $T$ is symmetric, and consider any $P \in \dist(T)$. Let $Q = \frac{P+\rot{P}+\rott{P}}{3} \in \sdist(T)$. It is not hard to check that $\EE_{s\sim Q}[\log\Val_\rho(T_s)] = \EE_{s\sim P}[\log\Val_\rho(T_s)]$, and concavity of the entropy function shows that $\frac{H(P_1)+H(P_2)+H(P_3)}{3} \leq H(Q_1) = \frac{H(Q_1)+H(Q_2)+H(Q_3)}{3}$. This shows that there is a symmetric distribution maximizing the upper bound.
\end{proof}
\fi

\end{document}